\newcommand{\algocomment}[1]{\textcolor{teal}{{//#1}}}
\newcommand{\pluseq}{\mathrel{+}=}
\crefname{observation}{Observation}{Observations}
\crefname{claim}{Claim}{Claims}
\def\Cline#1#2{\@Cline#1#2\@nil}
\def\@Cline#1-#2#3\@nil{%
  \omit
  \@multicnt#1%
  \advance\@multispan\m@ne
  \ifnum\@multicnt=\@ne\@firstofone{&\omit}\fi
  \@multicnt#2%
  \advance\@multicnt-#1%
  \advance\@multispan\@ne
  \leaders\hrule\@height#3\hfill
  \cr}
\newcommand{\christoph}[1]{{{\color{Maroon}{[\textbf{Christoph}: #1]}}}}
\newcommand{\benny}[1]{{{\color{red}{[\textbf{Benny}: #1]}}}}
\newcommand{\hide}[1]{} %
\def\fpsharpp{FP${^{\mathrm{\#P}}}$\xspace}
\def\cellcolorblue{\cellcolor{SkyBlue!30}}
\colorlet{addedcolor}{blue!50!gray}
\colorlet{changedcolor}{purple}
\colorlet{finalchangecolor}{green}
\newcommand{\added}[1]{{\color{defaultcolor}{}\ignorespaces#1}}
\def\one{\mathbbm{1}}
\def\mytitle{The Importance of Parameters in Ranking Functions}
\author{Christoph Standke}{RWTH Aachen University, Aachen, Germany}{standke@informatik.rwth-aachen.de}{https://orcid.org/0000-0002-9614-05040000-0002-3034-730X}{\href{https://cordis.europa.eu/project/id/101054974}{ERC grant  101054974 (SymSim)} and \href{https://gepris.dfg.de/gepris/projekt/282652900?language=en}{German Research Foundation grant GRK 2236 (UnRAVeL)}.
Views and opinions expressed are however those of the author(s) only and do not necessarily reflect
those of the European Union or the European Research Council. Neither the European Union nor
the granting authority can be held responsible for them.}
\author{Nikolaos Tziavelis}{UC Santa Cruz, Santa Cruz, California, USA}{ntziavel@ucsc.edu}{https://orcid.org/0000-0002-9614-05040000-0001-8342-2177}{}
\author{Wolfgang Gatterbauer}{Northeastern University, Boston, Massachusetts, USA}{w.gatterbauer@northeastern.edu}{https://orcid.org/0000-0002-9614-0504}{\href{https://www.nsf.gov/awardsearch/show-award/?AWD_ID=1762268}{NSF Career Award IIS-1762268}}
\author{Benny Kimelfeld}{Technion, Haifa, Israel and RelationalAI Inc., Berkeley, California, USA}{bennyk@technion.ac.il}{https://orcid.org/0000-0002-9614-05040000-0002-7156-1572}{German Research Foundation grant KI 2348/1-1}
\authorrunning{Standke, Tziavelis, Gatterbauer, Kimelfeld}
\title{\mytitle}
\titlerunning{\mytitle}
\def\vect#1{\mathbf{#1}}
\def\e#1{\emph{#1}}
\def\piid{\pi_{\mathsf{id}}}
\newcommand{\defeq}{\coloneqq}
\def\angs#1{\mathord{\langle #1 \rangle}}
\def\scoref#1{\mathsf{#1}}
\def\effectf#1{e_{\mathsf{#1}}}
\def\peffectf#1#2{e_{\mathsf{#2}}^{#1}}
\def\rankbyscore#1#2{r_{#1}^{#2}}
\def\ranksum{r_{\scoresum}}
\def\ranklex{r_{\mathsf{Lex}}}
\def\lex{\mathsf{Lex}}
\def\effmd{\effectf{md}}
\def\assign#1{\alpha[{#1}]}
\def\distmaxdispl{d_{\mathsf{md}}}
\def\distham{d_{\mathsf{Ham}}}
\def\distktau{d_{\mathsf{k}\tau}}
\def\nuknap{\nu_{\mathsf{knap}}}
\def\nucnf{\nu_{\mathsf{CNF}}}
\def\restrict#1#2{\left.#1\right|_{#2}}
\def\SHAPscore{\text{\normalfont\textsc{Shap}}}
\def\Shapleyvalue{\mathrm{Shapley}}
\DeclarePairedDelimiter{\size}{\lvert}{\rvert}
\DeclarePairedDelimiter{\set}{\lbrace}{\rbrace}
\def\M{\mathcal{M}}
\def\Mat{\mathbf{M}}
\def\S{\mathcal{S}}
\def\reals{\mathbb{R}}
\def\expectation{\mathbb{E}}
\def\prob{\mathbb{P}}
\def\supp{\operatorname{supp}}
\def\bigO{\mathcal{O}}
\newcommand*{\problemname}[1]{{\mathsf{#1}}}
\newcommand{\PREC}[1]{\problemname{PREC}\angs{#1}}
\newcommand{\EXP}[2]{\problemname{EXP}\angs{#1,#2}}
\newcommand{\PSHAP}[2]{\problemname{SHAP}\angs{#1,#2}}
\newcommand{\PShapley}[2]{\problemname{Shapley}\angs{#1,#2}}
\newcommand*{\scoresum}{\scoref{Sum}}
\newcommand*{\scoremax}{\scoref{Max}}
\newcommand*{\scoremin}{\scoref{Min}}
\newcommand*{\asc}{\mathsf{asc}}
\newcommand*{\desc}{\mathsf{dsc}}
\newcommand*{\complexityfont}[1]{\mathsf{#1}}
\newcommand*{\sharpP}{\ensuremath{\complexityfont{\#P}}}
\newcommand*{\FP}{\ensuremath{\complexityfont{FP}}}
\newcommand*{\NP}{\ensuremath{\complexityfont{NP}}}
\newcommand*{\RP}{\ensuremath{\complexityfont{RP}}}
\keywords{Ranking, Explanation, Shapley value, SHAP scores}
\begin{document}
\maketitle

\begin{abstract}
How important is the weight of a given column in determining the ranking of tuples in a table?
To address such an \emph{explanation question about a ranking function}, we investigate the computation of SHAP scores for column weights, adopting a recent framework by Grohe et al.\ [ICDT'24]. 
The exact definition of this score depends on three key components: 
(1) the \emph{ranking function} in use,
(2) an \emph{effect function} that quantifies the impact of using alternative weights on the ranking,
and (3) an underlying \emph{weight distribution}. We analyze the computational complexity of different instantiations of this framework for a range of fundamental ranking and effect functions, focusing on probabilistically independent finite distributions for individual columns.

For the ranking functions, we examine lexicographic orders and score-based orders defined by the summation, minimum, and maximum functions.
For the effect functions, we consider \emph{global}, \emph{top-$k$}, and \emph{local} perspectives: 
global measures quantify the divergence between the perturbed and original rankings,
top-$k$ measures inspect the change in the set of top-$k$ answers, and
local measures capture the impact on an individual tuple of interest.
Although all cases admit an additive fully polynomial-time randomized approximation scheme (FPRAS), we establish the complexity of exact computation, identifying which cases are solvable in polynomial time and which are \#P-hard. 
We further show that all complexity results, lower bounds and upper bounds, extend to a related task of computing the Shapley value of whole columns (regardless of their weight).

\end{abstract}

\section{Introduction}

A \e{ranking function} takes as input a set of records and produces a permutation over the set based on the entry values (commonly referred to as \e{features}) of the records. As ranking is abundant in decision making, it is natural to look for an \e{explanation} for the outcome of the ranking of a given dataset.
Explanations for ranking functions have received significant attention in the area of Information Retrieval (IR) over recent years~\cite{DBLP:conf/chiir/PenhaK022,DBLP:conf/ecir/LyuA23,DBLP:conf/fat/SinghA20,DBLP:conf/sigir/YuRA22} and before, as well described in surveys on ``explainable IR''~\cite{DBLP:journals/corr/abs-2212-07126,DBLP:journals/corr/abs-2211-02405}. 
Explanations for ranking can be largely categorized into two types: A \e{value-based} explanation aims to elucidate how the values of a certain record led to its being ranked in its position (or why it is included in, or excluded from, the top-$k$ answers); this is typically done by quantifying the contribution of every entry value to the status of the record~\cite{anahideh2022local,chowdhury2025rankshap,DBLP:conf/chiir/PenhaK022}. 
On the other hand, a \e{function-based} explanation aims to investigate what aspect in the ranking function has led to the outcome, from the point of view of a specific record or the entire permutation; this is typically done by quantifying the contribution of function components to the outcome~\cite{yang2018nutritional,gale2020explaining,
DBLP:journals/corr/abs-2403-16085,DBLP:conf/fat/SinghA20}. This work belongs to the latter kind.

Typically, the ranking function involves \e{parameters}. For example, if the ranking is determined by a score function that is a linear combination of the attributes of the record, then the parameters are the \emph{weights} attached to the features of the record. Previous work in the database community has investigated ways of selecting parameters so that the ranking function satisfies certain criteria~\cite{Abolfazl18stable,chen2023not}.
In this work, we study the contribution of the parameter choices to the outcome of the ranking. 
For that, we adopt the framework of Grohe et al.~\cite{DBLP:conf/icdt/GroheK0S24} for measuring the importance of parameters in the context of database queries, where the contribution of a parameter is determined by its SHAP score~\cite{lund17}. 
In contrast to previous work on the explanation of ranking functions, we focus on the computational complexity of calculating the contribution. 

\vspace{-0.5em}

\hide{
\begin{figure}[t]
    \centering
        \begin{tabular}{c | c c c c|} 
         \cline{2-5}
         id & $a_1$ & $a_2$ & $a_3$ & $a_4$ \\ 
         \cline{2-5}
         1 & 0 & $\ldots$ & 0 & $x$ \\ 
         2 & $b_1$ & $\ldots$ & $b_n$ & 0 \\
         3 & $b_1$ & $\ldots$ & $b_n$ & 0 \\     
        \cline{2-5}
        \end{tabular}
    \caption{\Cref{ex:intro}: ... \benny{??}
}
    \label{fig:intro}
\end{figure}
}

\hide{
\begin{example}
\label{ex:intro}
... Consider the $3 \times 4$ table in \cref{fig:intro}
and a weight $\mathbf{w} = (1,2,3,4)$.

... ranked by sum, highest to lowest

... Prior worked attempts to choose weights in order to get a desired outcome.
Example solution: $\mathbf{w'} = (4,3,2,1)$ brings the 3rd tuple to the top.

... Here we assume a distribution on the weights. Concretely, each of the weight components $w_i''$ is either the one from above $w_i$ or $0$ half-half.

\end{example}
 }

\subparagraph{SHAP scores and Shapley values.}
The SHAP score~\cite{lund17} is an instantiation of the Shapley value that, in turn, is used to attribute a share to each player in a cooperative game, where each coalition (set of players) gains some utility~\cite{Shapley}. The Shapley value, named after its inventor, is unique up to some axioms of rationality (e.g., the sum of shares adds up to the utility of the entire set)~\cite{Shapley,roth} and, besides applications in a plethora of domains,
has been used for explanations in data-centric fields such as Machine Learning~\cite{lund17,DBLP:conf/pkdd/DuvalM21,rozemberczki2022shapley}, 
IR~\cite{chowdhury2025rankshap,DBLP:journals/corr/abs-2403-16085}, and databases~\cite{DBLP:journals/pacmmod/0002OS24,DBLP:journals/sigmod/BertossiKLM23,DBLP:conf/sigmod/DeutchFKM22}. 
The SHAP score was originally proposed for the purpose of explaining machine-learned models, and particularly attributing responsibility for the outcome of the model to the feature values of a particular instance~\cite{lund17,DBLP:books/sp/KamathL21}. 

Specifically, SHAP is the Shapley value in the cooperative game where the feature values are the players, and the utility of a coalition $C$ is the expected effect (model's output) of the instance where the feature values of $C$ are used and the rest chosen randomly. In the framework of Grohe et al.~\cite{DBLP:conf/icdt/GroheK0S24}, the parameter values play the same role that feature values play in the application of SHAP to machine-learning 
explanations~\cite{lund17,DBLP:books/sp/KamathL21} and that feature values play in the application of SHAP to value-based explanations of ranking~\cite{pliatsika2025sharpnovelfeatureimportance,chowdhury2025rankshap}. Moreover, the Shapley value has recently been used in function-based explanations~\cite{DBLP:journals/corr/abs-2403-16085}.

Connecting to this work, consider a particular parameterized ranking function, along with a particular choice of values for its parameters. These define a ranking over the given a set of records. 
To apply the SHAP score, two additional components need to be determined. The first is a \e{distribution} from which parameter choices are assumed to be drawn. 
The second is an \e{effect function} that determines how different the ranking obtained from a specific (random) choice of parameters is from the base ranking with the original choice of parameters. 
The effect function captures different interpretations of what Heuss et al.~\cite{DBLP:journals/corr/abs-2403-16085} refer to as ``listwise feature attribution.'' 
Similarly to the study of Grohe et al.~\cite{DBLP:conf/icdt/GroheK0S24}, we restrict the discussion to simple distributions of parameters, namely finite and probabilistically independent parameters (i.e., ``fully factorized'' distributions~\cite{van_den_broeck2022tractability}).

From the computational perspective, the SHAP score entails an expectation (w.r.t.~a random choice of players) over an expectation (over the random parameter choices). Nevertheless, it was shown by Grohe et al.~\cite{DBLP:conf/icdt/GroheK0S24} that the SHAP score can also be viewed as a single expectation over an efficiently samplable space; therefore, the SHAP score of a parameter can be computed in polynomial time via sampling if we settle for a randomized additive approximation (FPRAS), under the mild assumption that the effect function is computable in polynomial time. This technique allows additive approximations also for more general classes of probability distributions, including fully factorized continuous distributions that allow efficient sampling. Moreover, if we make a \e{data complexity} assumption of a fixed number of parameters, then the SHAP score can be computed in polynomial time via an explicit enumeration of the probability space. We investigate the ability to compute the SHAP score exactly (and deterministically), for a given (non-fixed) number of parameters, in polynomial time. Van den Broeck et al.~\cite{van_den_broeck2022tractability} have established that there are polynomial-time reductions in both directions between the computation of the SHAP score and the computation of the \e{expected effect}, which is arguably a simpler notion. Hence, following their result, we focus mainly on the expected effect throughout the paper.

\subparagraph{Studied ranking and effect functions.}
Our study considers a variety of basic ranking functions and effect functions. For ranking, we consider the score functions of $\scoresum$, $\scoremin$, and $\scoremax$, and the lexicographic ordering (descending or ascending). In all of these functions, the parameters are coefficients (weights) over the record attributes. We view the dataset simply as a matrix (where every row corresponds to a record), and the parameters as per-column multiplicative weights. For the effects, we consider three types: \e{global} measures determine an effect over the entire ranking, \e{top-$k$} measures determine the impact on the set of top-$k$ answers, and \e{local} measures determine the impact on a single record. 
The global measures we consider are standard distances between permutations~\cite{DBLP:conf/aaai/Procaccia0Z15,DBLP:journals/tit/BargM10}, and specifically Kendall's tau, maximum displacement, 
and Hamming distances between the random ranking (due to the random choice of parameters) and the base ranking. The top-$k$ measures include the symmetric difference between the top-$k$ sets (for the base and random rankings) and the binary indicator of whether there is \e{any} difference between the top-$k$ sets.
The local measures include the change in the record's position and the change in its status of membership in the set of top-$k$ answers. A detailed example is presented at the end of \Cref{sec:problems}.

\vspace{-0.5em}

\subparagraph{Contributions.}
We begin with algorithms for a simpler task: given two records $r_1$ and $r_2$, compute the probability (for randomly chosen parameters) that $r_1$ precedes $r_2$ in the ranking. 
This problem is tractable for all the ranking functions we consider, while in the case of $\scoresum$ we make the necessary assumption that the numbers are given in unary representation (as the problem is \fpsharpp-hard\footnote{Recall that \fpsharpp is the class of functions computable in polynomial time using an oracle to some function in \#P. A function $F$ is \fpsharpp-hard if there is a polynomial-time Turing reduction from every function in \fpsharpp to $F$.
Such a problem is at least as hard as every problem in the polynomial hierarchy~\cite{DBLP:journals/siamcomp/TodaO92}.}
for the binary representation). 
Our results (summarized in \Cref{table:complexity}) show that, in general, this pairwise case suffices for the tractability of some of the effect functions, namely Kendall's tau and the tuple's position and top-$k$ membership of a row for small $k$; for the rest, we prove hardness results. We also consider the case where the number of records is bounded, that is, we have a small number of competitors. In this case, we prove that all problems become tractable, again with the exception of $\scoresum$ with binary numeric representation.

Finally, we consider another variation of a function-based explanation for ranking, now parameter-free. The goal is to compute the contribution of whole columns (attributes) to the ranking. More precisely, we consider the Shapley value of the cooperative game where the players are the columns, and the utility of a set of columns is the effect of the ranking obtained by considering only the columns in the set, while ignoring the rest. This task has been presented by Heuss et al.~\cite{DBLP:journals/corr/abs-2403-16085} where they refer to the action of ignoring a column as ``masking the feature vector,'' yet with no complexity analysis. We show that this problem reduces in polynomial time to the problem in the focus of this paper, namely computing the SHAP score of parameter choices. Hence, our algorithms for the SHAP score of parameters can be used for the Shapley value of columns. Moreover, we show that the hard cases of computing the SHAP scores are hard already for the Shapley value of columns.

\subparagraph{Organization.} The rest of the paper is organized as follows. We begin with preliminary definitions in the next section. 
In \Cref{sec:problems}, we describe the formal framework and computational problems that we study, namely the SHAP score of parameters, the expected effect, and the Shapley value of columns. We give algorithms for the expectation in \Cref{sec:algorithms} and establish lower bounds in \Cref{sec:hardness}. 
Finally, in \Cref{sec:shapley}, we discuss the extension of our results to the Shapley value of columns, and conclude in \Cref{sec:conclusions}. For space limitations, some of the proofs are given in the Appendix.

{
\begin{table}[t]%
\caption{Summary of the complexity results. The complexity for the Hamming distance is the same as MD (maximum displacement). The complexity results for the effects of the top-$k$ perspective are the same as the top-$k$ membership. All problems are solvable in polynomial time when the number of attributes is fixed. They are also solvable in polynomial time when the number of records is fixed, with the exception of $\scoresum$ with binary numerical representation. All \fpsharpp-hardness results are actually \fpsharpp-completeness, due to standard techniques probability computation in \fpsharpp~\cite{DBLP:conf/pods/GradelGH98,DBLP:journals/jacm/FaginKK11}.
}
\label{table:complexity}
    \centering
\def\tabhard{\cellcolor{orange!25}\fpsharpp-h.\xspace}
    \def\tabhardbg{\cellcolor{orange!25}}
    \def\tabP{\cellcolorblue P\xspace}
    \begin{tabular} 
    {|c|c|c|c|c|c|}
      \multicolumn{1}{c}{ }  & \multicolumn{2}{c|}{\e{Global}} 
        & \multicolumn{3}{c}{\e{Local}} \\
    \cline{1-6}
    \multicolumn{1}{|c|}{}& & & &
    \multicolumn{2}{c|}{Top-$k$ membership} \\ 
    \cline{5-6}
         \backslashbox{Ranking}{Effect} &  Kendall's $\tau$ & MD & Position & Fixed $k$ & Given $k$
         \\ \hline
        $\scoresum$ unary & \tabP & \multirow{5}{*}{\tabhard}  & \tabP & \tabhard  & \multirow{5}{*}{\tabhard}   
        \\ 
         $\scoresum$ binary & \tabhard  & \tabhardbg  & \tabhard  & \tabhard  & \tabhardbg  
         \\  $\scoremax$ $\asc$  / $\scoremin$ $\desc$ & \tabP & \tabhard   & \tabP  & \tabhard & \tabhard \\ $\scoremax$ $\desc$  / $\scoremin$ $\asc$ 
          & \tabP  &\tabhardbg & \tabP  & \tabP & \tabhardbg\\ 
        Lexicographic & \tabP  & \tabhardbg   & \tabP  & \tabhard & \tabhardbg \\\hline
    \end{tabular}
\end{table}
}

\section{Preliminaries}\label{sec:preliminaries}

We begin with preliminary concepts and terminology that we use throughout the paper.

\subparagraph{Matrices and permutations.}
For a natural number $n$, we denote by $[n]$ the set
$\set{1,\dots,n}$. We also denote by $\S_n$ the set of permutations over $[n]$, where a permutation in $\S_n$ is a bijective function $\pi:[n]\rightarrow[n]$. 
By $\piid$ we denote the \e{identity} permutation defined by $\piid(i)=i$ for all $i\in[n]$. By $\S$ we denote the set of all permutations, that is, 
$\S\defeq\cup_n{\S_n}$.
We denote by $\M^{n\times m}$ the set of all $n\times m$ matrices, over the rational numbers, with $n$ rows and $m$ columns, 
and by $\M$ the set of all matrices of all dimensions (i.e., $\M\defeq \cup_{n,m}\M^{n\times m}$).
If $\Mat\in\M^{n\times m}$ and $C\subseteq\set{1,\dots,m}$ represents a set of columns, then we denote by $\restrict{\Mat}{C}$ the $n\times|C|$ matrix obtained from $\Mat$ by removing all columns except those in $C$. \added{In other words, $\restrict{\Mat}{C}$ is the projection of $\Mat$ on $C$ under bag semantics.}
We may refer to a row of a matrix as a \e{tuple}. We denote such a row (and every numeric vector) $\vect v$ in boldface and its $j$th entry with $\vect v[j]$.

Computation-wise, we assume every number is represented as a pair $(a,b)$, standing for the rational number $a/b$, where $a$ and $b$ are integers in a binary representation. When we refer to a number as represented in \e{unary} representation, we mean an integer $a$ encoded as a string of length $a$.

\subparagraph{Ranking functions.}

\hide{
\christoph{Add to this paragraph that we can restrict our attention to
\begin{itemize}
    \item SUM ascending: $\ranksum$
    \item MAX ascending: $\rankbyscore{\scoremax}{\asc}$
    \item MAX descending: $\rankbyscore{\scoremax}{\desc}$
    \item LEX: in each column ascending: $\ranklex$
\end{itemize}}
}

By a \e{ranking function} we refer to a function $r:\M\rightarrow\S$ that maps every matrix $\Mat\in\M^{n\times m}$ to a permutation in $\S_n$; hence, $r$ ranks the rows of its input matrix $\Mat$.
Specifically, we will focus on several
ranking functions $r$:
\begin{itemize}
    \item Ranking by decreasing/increasing score of a row $(a_1,\dots,a_m)$ where, with the score being the sum of the $a_j$, denoted $\scoref{Sum}$, the maximum value among the $a_j$, denoted $\scoref{Max}$, or the minimum among the $a_j$, denoted $\scoref{Min}$. 
    For each scoring function $s$, we denote by $r_s^{\desc}$ and $r_s^{\asc}$ the rankings by decreasing and increasing $s$, respectively. For example, we have $r_{s}^{\desc}(i)<r_{s}^{\desc}(j)$, meaning that the $i$th row $\vect t_i$ precedes the $j$th row $\vect t_j$,  whenever $s(\vect t_i)>s(\vect t_j)$. 
    
    \item Ranking by the lexicographic ordering (left to right) over the rows, denoted $r_{\scoref{Lex}}$. 
\end{itemize}
In our analysis, the direction of the ranking with respect to the score (increasing/decreasing) is important for ranking by $\scoremin$ and $\scoremax$, as shown in \Cref{table:complexity}. This direction is not important for ranking by $\scoresum$ since we can switch between the directions by multiplying the entries by a negative number (or subtracting each value from the maximum value in the matrix); hence, we simply write $r_\scoresum$ as standing for $\rankbyscore{\scoresum}{\asc}$ (since $\rankbyscore{\scoresum}{\desc}$ is equivalent). In the lexicographic order, we assume for the same reason that the order of each column is ascending (i.e., lower numbers precede higher numbers).
Also for the same, we will not consider ranking by $\scoremin$ explicitly since it is the same as ranking by $\scoremax$ of the negated matrix in reversed direction.

For the framework to be well-defined, we need to handle 
tie-breaking. For that, we will use the ordering by the row indices; that is, if the rows $i$ and $i'$ are tied by the ordering and $i<i'$, then we give precedence to the $i$th row.

\subparagraph{Shapley value.} A \e{cooperative game} is a pair $(P,\nu)$ where $P$ is a finite set of players and $\nu:2^P\rightarrow\reals$ is a \e{utility function} that associates with every coalition $C\subseteq P$ a value $\nu(C)$, so that $\nu(\emptyset)=0$. The \e{Shapley value} of a player $p\in P$ is defined by the following formula~\cite{roth,Shapley}. 
\begin{equation}\label{eq:shapley}
    \Shapleyvalue(p, \nu) 
    = \sum_{C \subseteq P \setminus\set{p}} \frac{|C|!(|P|-|C|-1)!}{|P|!} \cdot 
    (\nu(C \cup \set{p}) - \nu(C))
    \end{equation}
Intuitively, we consider the situation where we select players iteratively without replacement, starting with the empty set; the Shapley value of a player $p\in P$ is the mean increase in utility when adding $p$. 

\subparagraph{SHAP score.}
Let 
$f:D^m\rightarrow\reals$ be an $m$-ary function over some domain $D$. Let $\Pi$ be a discrete distribution over $D^m$, and let $\vect w=(w_1,\dots,w_m)$ be a tuple from $\Pi$ with a nonzero probability. For $j=1,\dots,m$, the  \e{SHAP score} of $j$ with respect to (w.r.t.) $\vect w$, denoted $\SHAPscore(j,f,\vect w)$, is the value $\Shapleyvalue(j,\nu)$ for the cooperative game $(P,\nu)$ defined as follows~\cite{lund17}.
\begin{itemize}
    \item $P\defeq\set{1,\dots,m}$;
    \item $\nu(C)\defeq \expectation_{\vect u\sim\Pi}\left[f(\vect u)\mid u_\ell=w_\ell {\mbox{ for all } \ell\in C}\right]$.
\end{itemize}
That is, the utility of a subset $C$ of parameters is the expectation of $f$ over the distribution $\Pi$, conditioned on every parameter in $C$ having its specific value from $\vect w$.

\section{Framework and  Computational Problems}
\label{sec:problems}

Our goal is to compute the contribution of columns and column weights to the ranking of the rows of a given matrix. For that, we need to reason about how different the ranking would be had we eliminated certain columns or changed their weights. Therefore, we need to adopt a measure of difference between permutations. We refer to such measures as \e{effect functions}, as they determine the effect of the column alteration. 

\subparagraph{Effect functions.}
We are given a matrix $\Mat$ 
and a ranking function $r$ with \emph{original order} $\pi_0 \defeq r(\Mat)$.
As a result of applying the ranking function with different weights on the columns (or only on a subset of columns), we get a different permutation $\pi$.
We will focus on three classes of effect functions:
\begin{itemize}
\item Global perspective: \e{How far is $\pi$ from the original permutation?} For that, we can use several notions of distances between permutations. We give here three conventional distances, where the first two were used for ranking explanation in the long version of the RankSHAP work~\cite{chowdhury2025rankshap}.
\begin{itemize}
    \item Kendall's tau: $\effectf{k\tau}(\pi)
\defeq \sum_{1\leq i<j\leq n}\one_{\pi_0(i)<\pi_0(j)\land \pi(i)>\pi(j)}$ determines the number of pairwise disagreements (swaps) between $\pi$ and the original order. This effect function has the range $[\frac{n(n-1)}{2}]$.

\item Maximum Displacement (MD) distance: $\effectf{md}(\pi)
\defeq \max_{1\leq i\leq n}\size{\pi(i)-\pi_0(i)}$ is the maximum difference of an item's position between the two permutations. This difference is a number in $[n-1]$.
\item Hamming distance: $\effectf{Ham}(\pi)
\defeq \sum_{1\leq i\leq n}\one_{\pi(i)\neq\pi_0(i)}$ determines the number of positions where the tuple is different between the original and permuted orders. It can attain values in $[n]$.
\end{itemize}
\item Top-$k$ perspective: \e{What is the impact on the set of top-$k$ answers?} For that, let $T=\set{\pi^{-1}(1),\dots,\pi^{-1}(k)}$ and 
$T_0=\set{\pi^{-1}_0(1),\dots,\pi^{-1}_0(k)}$
we will consider two functions. 
\begin{itemize}
   \item Top-$k$ difference: $\peffectf{k}{\mathrm{\Delta}}(\pi)
\defeq \size{T\cup T_0}-\size{T\cap T_0}$ 
is the size of the symmetric difference between the sets of top-$k$ elements. 
\item Top-$k$ any-change: $\peffectf{k}{any}(\pi)
\defeq \one_{T\neq T_0}$ determines whether there is \e{any} change in the top-$k$ elements. 
\end{itemize}
\item Local perspective: \e{What is the impact on a specific row $i$?} For that, we will consider two functions. 
\begin{itemize}
\item Position: $\effectf{pos}(i,\pi)
\defeq \pi(i)-\pi_0(i)$ is the change of position of the $i$th row and is in the range $\set{-n+1, \ldots, n-1}$.
\item Top-$k$ membership: $\peffectf{k}{top}(i,\pi)
\defeq 
\one_{\pi(i)\leq k} - \one_{\pi_0(i)\leq k}$ determines how the $i$th row changes its membership in the top-$k$ tuples. Its range is either $\set{-1,0}$ or $\set{0 ,1}$.
\end{itemize}
\end{itemize}
In the analysis we conduct in the remainder of this paper, we focus on four effects: Kendall's tau, MD, position, and top-$k$ membership. 
The complexity results for the Hamming distance are the same as those of MD (see \cref{sec:appendix:hardness-hamming}). 
The complexity results for the impacts of the top-$k$ perspective are the same as the top-$k$ membership (see \cref{sec:appendix:top-k}).

\newcommand{\problem}[4]{
\begin{center}
\begin{tabular}{|rl|}\hline
\textbf{Problem:} & #1 \\
\textbf{Fixed:} & #2 \\
\textbf{Input:} & #3 \\
\textbf{Goal:} & #4\\
\hline
\end{tabular}
\end{center}
}

\subparagraph{Computing SHAP scores.}
Next, we assume that the columns of the input matrix $\Mat$ are weighted, and the goal is to compute the SHAP score of a given weight. To make it precise, we assume that our data is a pair $(\Mat,\vect w)$, where $\Mat\in\M^{n\times m}$ and $\vect w=(w_1,\dots,w_m)\in\mathbb{Q}^m$ is a sequence of column weights. The pair $(\Mat,\vect w)$ represents the pair $\Mat\circ \vect w$, which is the matrix $\Mat'$ obtained by multiplying the $j$th column by $w_i$ for $i=1,\dots,m$; that is, we apply element-wise multiplication by $\vect w$ to each row of $\Mat$, or, in other words, $\Mat' = \mathsf{diag}(\vect w) \cdot \Mat$
where $\mathsf{diag}(\vect{w})$ is the diagonal matrix whose diagonal elements are given by $\vect{w}$.

We will also assume that, in addition to $\Mat$ and $\vect w$, we are given a finite 
probability distribution $\Pi$ over $\mathbb Q^m$. 
We restrict our complexity study to column-wise independent distributions (i.e., \e{fully factorized} distributions~\cite{van_den_broeck2022tractability}),
hence, $\Pi$ is represented by finite distributions $\Pi_1,\dots,\Pi_m$ over $\mathbb Q$ for each weight parameter, each distribution given as a list of value-probability-pairs; 
the total number of such pairs is denoted by $|\Pi|$.

Let $r$ be a ranking function and $e$ an effect function. Given the pair $(\Mat,\vect w)$ and a column number $j$, our goal is to compute $\SHAPscore(j,f,\vect w)$, where $f(\vect u)\defeq -e(r(\Mat\circ \vect u))$, that is, the inverse of the effect on the ranking of replacing the weights of $\vect w$ with those of $\vect u$. 
We denote this value by 
$\SHAPscore\angs{r,e}(j,\Mat,\vect w,\Pi)$.
The reason for the inverse (using minus) is that the closer we are to the original ranking order $\pi_0$ (i.e., the less effect), the higher we deem the contribution to the the actual $\pi_0$ in place.

\problem
{$\PSHAP{r}{e}$: SHAP score computation}
{Ranking function $r$ and effect function $e$}
{Matrix $\Mat$, weight vector $\vect w$, distribution $\Pi$, and column number $j$ of $\Mat$}
{Compute $\SHAPscore\angs{r,e}(j,\Mat,\vect w,\Pi)$}

In this and the other problems we consider, when the effect function $e$ is local, then the input also includes the row number $i$ of the effect. In addition, when $e$ is the top-$k$ membership, then the input also includes $k$ (unless we explicitly state that $k$ is assumed to be fixed).

It has been established by Van den Broeck et al.~\cite{van_den_broeck2022tractability} that, for finite and probabilistically independent parameters, calculating the SHAP score is computationally equivalent to calculating the expected value of $f$, which is arguably simpler to handle.

\begin{theorem}\label{thm:shap-expectation}
{\e{\!\!\!\protect{\cite[Theorem~2]{van_den_broeck2022tractability}}}}\,
Let $f$ be a function that can take any number of numerical arguments and can be computed in polynomial time. The following problems are polynomially (Turing) reducible to each other.
\begin{enumerate}
    \item Compute $\SHAPscore(j,f,\vect w)$, given $\vect w$, $j$ and $\Pi$.
    \item Compute $\expectation_{\vect u\sim\Pi}[f(\vect u)]$, given $m$ and $\Pi$.
\end{enumerate}
\end{theorem}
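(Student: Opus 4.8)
The plan is to prove the two stated directions separately; each is a polynomial-time Turing reduction that invokes the corresponding oracle $O(m)$ times.

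\textbf{Direction (2) reduces to (1): expectation via SHAP.} This direction rests on the \emph{efficiency} axiom of the Shapley value. Summing \eqref{eq:shapley} over all players telescopes to $\sum_{j=1}^{m}\Shapleyvalue(j,\nu)=\nu(P)-\nu(\emptyset)$. In our game $\nu(P)=f(\vect w)$, since conditioning on all coordinates fixes $\vect u=\vect w$, and $\nu(\emptyset)=\expectation_{\vect u\sim\Pi}[f(\vect u)]$. Hence, given $m$ and $\Pi$, I would pick any $\vect w$ in the support of $\Pi$ (such a point of positive probability exists coordinate-wise because $\Pi$ is finite and factorized, and is required by the SHAP definition), query the oracle for $\SHAPscore(j,f,\vect w)$ for each $j\in[m]$, compute $f(\vect w)$ directly using that $f$ is polynomial-time computable, and return $f(\vect w)-\sum_{j=1}^{m}\SHAPscore(j,f,\vect w)$. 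This uses $m$ oracle calls and one evaluation of $f$.

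\textbf{Direction (1) reduces to (2): SHAP via expectation.} The obstacle here is that \eqref{eq:shapley} is a sum over exponentially many coalitions, so I would first re-express it as a one-dimensional integral. The Shapley weight of a coalition $C$ of size $k$ (with $|P|=m$) has the Beta-integral form $\frac{|C|!\,(m-|C|-1)!}{m!}=\int_0^1 p^{|C|}(1-p)^{\,m-1-|C|}\,dp$. Substituting this into \eqref{eq:shapley} and exchanging the finite sum with the integral gives
\[
\SHAPscore(j,f,\vect w)=\int_0^1 g(p)\,dp,
\]
where $g(p)\defeq\sum_{C\subseteq P\setminus\set{j}} p^{|C|}(1-p)^{m-1-|C|}\bigl(\nu(C\cup\set{j})-\nu(C)\bigr)$. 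For a fixed rational $p\in(0,1)$, define the fully factorized distribution $\Pi^{p}_{+}$ in which coordinate $j$ is the point mass $\delta_{w_j}$ and each coordinate $\ell\neq j$ is the mixture $p\,\delta_{w_\ell}+(1-p)\Pi_\ell$; let $\Pi^{p}_{-}$ be identical except that coordinate $j$ follows $\Pi_j$. Grouping assignments by the set $C$ of free coordinates that take the ``fixed to $w_\ell$'' branch yields $\expectation_{\vect u\sim\Pi^{p}_{+}}[f(\vect u)]=\sum_{C\subseteq P\setminus\set{j}} p^{|C|}(1-p)^{m-1-|C|}\,\nu(C\cup\set{j})$, and symmetrically for $\Pi^{p}_{-}$ with $\nu(C)$; hence $g(p)=\expectation_{\Pi^{p}_{+}}[f]-\expectation_{\Pi^{p}_{-}}[f]$, which two calls to the expectation oracle evaluate at any rational $p\in(0,1)$.

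It remains to compute the integral exactly. The enabling structural fact is that $g$ is a \emph{polynomial in $p$ of degree at most $m-1$}, because each of the $m-1$ free coordinates contributes a factor $p^{\one}(1-p)^{\one}$ that is linear in $p$. I would therefore evaluate $g$ at $m$ distinct rational points $p_1,\dots,p_m\in(0,1)$ using $2m$ oracle calls, recover its coefficients $c_0,\dots,c_{m-1}$ by Lagrange interpolation, and output $\int_0^1 g(p)\,dp=\sum_{i=0}^{m-1} c_i/(i+1)$. Choosing the $p_t$ with small denominators keeps every mixture distribution a genuine probability distribution of polynomial size, and all interpolation and integration arithmetic stays over rationals of polynomial bit-length.

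\textbf{Main obstacle.} The crux is the (1)-to-(2) direction: converting the exponential Shapley sum into a single integral of an expectation and then evaluating that integral \emph{exactly} with polynomially many oracle calls. Two observations make this work: the Beta-integral reweighting replaces the combinatorial constraint ``$|C|=k$'' by independent per-coordinate inclusion probabilities $p$, which restores full factorization so that the expectation oracle applies; and the degree-$(m-1)$ polynomiality of $g$ makes exact integration by interpolation possible, avoiding any approximation. The branch-grouping expansion of $\expectation_{\Pi^{p}_{\pm}}[f]$ and the degree bound are the routine verifications, and the only bookkeeping concern is keeping the evaluation points rational and all representations of polynomial size.
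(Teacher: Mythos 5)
The paper does not prove this statement itself---it imports it verbatim from Van den Broeck et al.~\cite[Theorem~2]{van_den_broeck2022tractability}---so the comparison is against that source's argument, which your proposal essentially reconstructs: the easy direction via the efficiency (telescoping) identity $\sum_j \SHAPscore(j,f,\vect w)=f(\vect w)-\expectation_{\vect u\sim\Pi}[f(\vect u)]$, and the hard direction by evaluating expectations over product distributions in which each free coordinate is independently pinned to $w_\ell$ with probability $p$, observing the result is a degree-$(m-1)$ polynomial in $p$, and interpolating at $m$ rational points. Your Beta-integral packaging of the Shapley coefficients is only a cosmetic variant of their size-grouped linear system, and the argument is correct as written.
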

Hence, we will also study the following problem.

\problem
{$\EXP{r}{e}$: Expectation computation}
{Ranking function $r$ and effect function $e$}
{Matrix $\Mat$ and distribution $\Pi$}
{Compute $\expectation_{\vect u\sim\Pi}[e(r(\Mat\circ \vect u))]$}

\subparagraph{Computing Shapley values.}
Let $r$ be a ranking function and $e$ an effect function. Given a matrix $\Mat\in\M^{n\times m}$ and a column number $j\in\set{1,\dots,m}$, the Shapley value of the $j$th column is defined as $\Shapleyvalue(i,\nu)$ for the game $(P,\nu)$ defined as follows:
\begin{itemize}
\item $P\defeq\set{1,\dots,m}$.
\item $\nu(C)\defeq -e(r(\restrict{\Mat}{C}))$, that is, the inverse of the effect on the ranking we obtain by applying $r$ to only the columns of $C$ (with the same rationale for inverse as SHAP).\footnote{The ranking function $r$ applied to the empty set of columns always yields the matrix order $\pi_{\mathsf{id}}$ since all tuples of $\restrict{\Mat}{\emptyset}$ are identical. In particular, this definition does not depend on the value of the scoring function on the empty set.}
\end{itemize}
We denote this value by 
$\Shapleyvalue\angs{r,e}(j,\Mat)$.

\problem
{$\PShapley{r}{e}$: Shapley value computation}
{Ranking function $r$ and effect function $e$}
{Matrix $\Mat$ and a column number $j$ of $\Mat$}
{Compute $\Shapleyvalue\angs{r,e}(j,\Mat)$}

\subparagraph{Approximation.}
Before moving to exact algorithms, we want to emphasize that each of the three problems $\EXP{r}{e}$, $\PSHAP{r}{e}$, and $\PShapley{r}{e}$ can be efficiently additively approximated  whenever $r$ and $e$ can be computed in polynomial time and the range of $e$ is polynomial in the number of rows. This is the case for all ranking and effect functions introduced in this section. As we will show in \cref{sec:hardness:top-k}, this is not always the case for multiplicative approximations.
The additive approximation can be done via basic Monte-Carlo-sampling since all three problems can be expressed as the expectation of a random variable over a probability space that allows efficient sampling: $\Pi$ for $\EXP{r}{e}$, $\S_m$ for $\PShapley{r}{e}$, and $\Pi \times \S_m$ for $\PSHAP{r}{e}$. We can use this insight for efficient additive approximations also in the case of not necessarily independent discrete parameter distributions under very mild assumptions (see \cite{DBLP:conf/icdt/GroheK0S24}) or independent continuous parameter distributions that can be sampled efficiently.\footnote{For continuous distributions, the definition of the SHAP score via conditional representations is not well-defined since the set we condition on has measure $0$. In the case of independent distributions, we can derive a compatible definition via disintegration, that is, fixing the parameters in $C$ to their reference values and choosing the others at random.}
This simple idea was further improved to allow for faster additive approximations (e.g. \cite{CASTRO2017180}). Hence, our framework could be applied in practice, even in cases where we show that exact computation is hard. 

\smallskip

\subparagraph{A detailed example.}

Before we start our complexity analysis, let us first look at a small example that demonstrates the framework in detail. Consider the following matrix $\Mat$:

 \begin{center}
    \begin{tabular}{c| c c |} 
     \cline{2-3}
     id & $a_1$ &  $a_2$ \\ 
     \cline{2-3}
     1 & 20 & 26 \\ 
     2 & 30 & 13 \\
     3 & 40 & 0 \\
     4 & 0 & 39 \\
    \cline{2-3}
    \end{tabular}
\end{center}

Assume that we rank the row by $\scoresum$ in descending order with a scoring parameter $\vect u \in \set{1,2}\times \set{1,2}$. We want to measure the impact of choosing the weight vector $\vect w = (1,1)$ among the possible choices for $\vect u$ (where we assume a uniform distribution).

To measure this, we first determine the rankings for each $\vect u$: 
For $\vect u = \vect w = (1,1)$, the scores are $(46, 43, 40, 39)$, so we obtain the ranking $(1,2,3,4)$. In the same way, we obtain $(4,1,2,3)$ 
for $\vect u = (1,2)$, $(3,2,1,4)$ for $\vect u = (1,2)$, and again $(1,2,3,4)$ for $\vect u = (2,2)$.

 \begin{center}
    \begin{tabular}{r| c | r | r | r } 
     $\vect u\,\,\,$    &scores         & permutation $\pi$    &  $\effectf{k\tau} = \distktau(\pi, \pi_0)$ & $\effectf{pos}(4, \pi)$\\ 
     \hline
     $\vect w = (1,1)$  & (46,43,40,39)  & $\pi_0=(1,2,3,4)$ & 0    &4 \\ 
     $(1,2)$            & (72,56,40,78)  & $(4,1,2,3)$       & 3    &1 \\ 
     $(2,1)$            & (66,73,80,39)  & $(3,2,1,4)$       & 3    &4 \\ 
     $(2,2)$            & (92,86,80,78)  & $(1,2,3,4)$       & 0    &4 \\ 
    \end{tabular}
\end{center}

Now, we need to choose an effect function $e$ that measures the distance between two rankings. 
We will look at the Kendall's tau distance $\effectf{k\tau}$ and the position of the fourth row $\effectf{pos}(4, \pi)$. We observe that $\distktau((1,2,3,4), (4,1,2,3)) = \distktau((1,2,3,4), (3,2,1,4)) = 3$ and that the row $4$ is in the last position of all the rankings except for $(4,1,2,3)$ where it is first.

For the SHAP score calculation, the valuation of a set of columns $C$ is the expected value of the negated effect function when we fix the parameter value for the columns in $C$ to the value of $\vect w$ and choose the rest at random. For example, for $C = \set{2}$, we obtain for $e = \effectf{k\tau}$
\begin{align*}
\nu_{\mathsf{k}\tau}(\set{2}) &= -\frac{\effectf{k\tau}\big(\ranksum(\Mat \circ (1,1))\big) + \effectf{k\tau}\big(\ranksum(\Mat \circ (2,1))\big)}{2}
\\
&= -\frac{\distktau\big((1,2,3,4), (1,2,3,4)\big) + \distktau\big((1,2,3,4), (3,2,1,4)\big)}{2} = -\frac{3}{2},
\end{align*}
and for $e = \effectf{pos}(4, \pi)$, we obtain
\[
\nu_{\mathsf{pos}(4)}(\set{2}) = -\frac{(4-4) + (4-4)}{2} = 0.
\]
Calculating the valuation $\nu$ of every $C \subseteq \set{1,2}$ gives the following. 
For $e = \effectf{k\tau}$, we obtain 
\begin{align*}
\SHAPscore(1) = & \frac{1}{2}\big(\nu_{\mathsf{k}\tau}(\set{1,2}) - \nu_{\mathsf{k}\tau}(\set{2})\big) + \frac{1}{2}\big(\nu_{\mathsf{k}\tau}(\set{2}) - \nu_{\mathsf{k}\tau}(\emptyset)\big) 
\\
= & \frac{1}{2}\big(0 - (-\frac{3}{2})\big) + \frac{1}{2}\big(-\frac{3}{2} - (-\frac{3}{2})\big) =  \frac{3}{4}
\end{align*}
and $\SHAPscore(2) = \frac{3}{4}$ as well, so both parameter choices have the same impact w.r.t. global changes in the ranking measured by $\distktau$. This reflects the symmetry in the 4th column of table above. In contrast, for $e = \effectf{pos}(4, \pi)$, we obtain $\SHAPscore(1) = \frac{3}{8}$ and $\SHAPscore(2) = -\frac{9}{8}$. This matches our intuition that fixing the second weight to $1$ is bad for the ranking of row $4$, since the only good weighting vector for row $4$ is $\vect u = (1,2)$.

In the remainder of the paper, we study the complexity of exactly solving the computational problems we defined in this section.

\section{Exact Algorithms for Expectation (and SHAP Scores)}
\label{sec:algorithms}

In this section, we show the tractability results of \Cref{table:complexity} for the problems $\EXP{r}{e}$ of computing the expected effect. Combined with \Cref{thm:shap-expectation}, these give the tractability results of the table for the SHAP score. We begin a more basic problem that we later use for expectation computation.

\subsection{Algorithms for the Pairwise Case}\label{sec:pairwise}
We first give algorithms for the following problem: \e{What is the probability that a given tuple $x$ precedes another tuple $y$ in the ranking?}
Since we only care about their relative order, we can ignore all the other tuples in the table and assume $n=2$.
Although this pairwise case may seem highly restricted, we will see in the next section that it forms the foundation of all our algorithms.

\problem
{$\PREC{r}$: Precedence probability computation}
{Ranking function $r$}
{Matrix $\Mat$ with two rows, distribution $\Pi$}
{Compute $\prob_{\vect u\sim\Pi}[s(1)<s(2)]$ where $s$ stands for $r(\Mat\circ \vect u)$}

A naive way to compute the probability of precedence is to iterate through the exponentially many choices of weights $\vect u$ and check whether tuple precedence holds in the resulting order.
Our more efficient algorithms exploit the fact that many of these choices can be grouped together, allowing us to consider only a polynomial number of possibilities.
This can be achieved for all the ranking functions discussed in this paper,
except for $\scoref{Sum}$, where our algorithm is pseudo-polynomial,
that is, the number of cases it considers is polynomial in the magnitude of the values we are given.

\begin{theorem}
\label{thm:prec_algs}
The problems
$\PREC{\rankbyscore{\scoremax}{\asc}}$, $\PREC{\rankbyscore{\scoremax}{\desc}}$,
and $\PREC{\ranklex}$ 
are solvable in polynomial time.
The problem $\PREC{\ranksum}$ is solvable in polynomial time if the matrix $\Mat$ and the weights in $\Pi$ are integers encoded in unary.
\end{theorem}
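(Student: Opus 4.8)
The plan is to reduce each instance of $\PREC{r}$ to a polynomial-size computation over the per-column distributions $\Pi_1,\dots,\Pi_m$, exploiting independence across columns. Writing the two rows of $\Mat$ as $\vect t_1=(a_1,\dots,a_m)$ and $\vect t_2=(b_1,\dots,b_m)$, under weights $\vect u$ the scaled entries are $u_j a_j$ and $u_j b_j$. Since the tie-break favors the smaller row index, ``row $1$ precedes row $2$'' always reduces to a non-strict comparison of the relevant scores, and I would compute its probability directly, handling each ranking function separately. The recurring idea, matching the remark that ``many choices can be grouped together,'' is that the event only depends on a polynomial number of comparison outcomes per column.

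For $\ranklex$ I would observe that row $1$ precedes row $2$ exactly when the first column $j$ at which $u_j a_j\neq u_j b_j$ satisfies $u_j a_j<u_j b_j$, or else all columns tie. Setting $p_j\defeq\prob_{u\sim\Pi_j}[u a_j=u b_j]$ and $q_j\defeq\prob_{u\sim\Pi_j}[u a_j<u b_j]$, both obtainable by a single scan of $\supp(\Pi_j)$, independence across columns yields the closed form
\begin{equation*}
\sum_{j=1}^{m}\Bigl(\prod_{j'<j}p_{j'}\Bigr)q_j+\prod_{j=1}^{m}p_j,
\end{equation*}
evaluable in $\bigO(m\,|\Pi|)$ time via prefix products.

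For $\rankbyscore{\scoremax}{\asc}$ and $\rankbyscore{\scoremax}{\desc}$ the key point is that, although the scores $M_1\defeq\max_j u_j a_j$ and $M_2\defeq\max_j u_j b_j$ are coupled through the shared weights, the \emph{joint} lower tail factorizes: since $\set{M_1\le s,\,M_2\le t}=\bigcap_j\set{u_j a_j\le s\wedge u_j b_j\le t}$ and the columns are independent,
\begin{equation*}
H(s,t)\defeq\prob[M_1\le s,\;M_2\le t]=\prod_{j=1}^{m}\prob_{u\sim\Pi_j}\bigl[u a_j\le s\;\wedge\;u b_j\le t\bigr].
\end{equation*}
Each factor is a finite sum over $\supp(\Pi_j)$. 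Letting the candidate thresholds $t$ range over the $L=\bigO(m|\Pi|)$ distinct values of the form $u a_j$ or $u b_j$ in increasing order, with $t^-$ the predecessor, the ascending answer is $\prob[M_1\le M_2]=\sum_t\bigl(H(t,t)-H(t,t^-)\bigr)$ (using $\prob[M_2=t,\,M_1\le t]=H(t,t)-H(t,t^-)$), and the descending answer $\prob[M_1\ge M_2]$ is obtained symmetrically. This needs only $\bigO(L)$ evaluations of the product $H$, hence runs in polynomial time.

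Finally, for $\ranksum$ precedence is equivalent to $\sum_j u_j c_j\le 0$ with $c_j\defeq a_j-b_j$. When all entries and weights are unary integers, each $u_j c_j$ is an integer of polynomially bounded magnitude, so I would run the standard knapsack-style dynamic program maintaining the full distribution $D_k[v]=\prob[\sum_{j\le k}u_j c_j=v]$ over the reachable integer values $v$, updating column by column from $\Pi_j$, and finally summing $D_m[v]$ over $v\le 0$. The main obstacle lies precisely here: the table size is proportional to $\sum_j\max_{u\in\supp(\Pi_j)}|u|\cdot|c_j|$, the magnitude of the attainable sums, which is polynomial only under the unary assumption; in binary this is exponential, consistent with the \fpsharpp-hardness reported elsewhere in the paper. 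The remaining care concerns bookkeeping of ties and negative indices, which does not affect the polynomial bound.
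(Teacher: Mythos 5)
Your proof is correct. For $\ranklex$ and $\ranksum$ it is essentially the paper's own argument: the same first-differing-column decomposition (the paper additionally splits by the weight value $v_j$ at the deciding column, which your prefix-product formulation marginalizes out), and the same pseudo-polynomial knapsack-style dynamic program over the attainable score differences — the paper phrases it as a backward recursion on the tail probability $R[j,s]$ rather than maintaining the full forward distribution $D_k[v]$, but the table size and the unary-encoding analysis are identical. For $\scoremax$, however, you take a genuinely different route: the paper decomposes the precedence event into disjoint cases indexed by which column and which weight value realizes the winning maximum (with a first-occurrence convention to keep the cases disjoint under ties), each case being a product of per-column sets of admissible weights; you instead exploit the factorization of the joint lower-orthant probability $H(s,t)=\prob[M_1\le s,\,M_2\le t]=\prod_j\prob[u a_j\le s\wedge u b_j\le t]$ and sum the increments $H(t,t)-H(t,t^-)$ over the candidate thresholds. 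This is sound, since every attainable value of $M_2$ is among the candidates, so each increment equals $\prob[M_2=t,\,M_1\le t]$, and both methods are polynomial. The one thing the paper's case decomposition buys that yours does not directly provide is \Cref{obs:list_of_weights}: the paper later reuses the fact that its $\scoremax$ algorithm returns the satisfying weights as a disjoint list of per-column weight sets (a union of ``boxes''), which is needed for the top-$k$ and bounded-$n$ algorithms. Your events $\set{M_2=t,\,M_1\le t}$ are disjoint but are not boxes, so recovering that representation would require refining each one (e.g., by the first column attaining $t$), which essentially reconstructs the paper's decomposition.
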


In the following, we give the algorithm for each ranking function in \Cref{thm:prec_algs}.
For the sake of readability, we use $\vect x$ and $\vect y$ for the two tuples $\vect{t}_1$ and $\vect{t}_2$ in the matrix.
Our goal is to compute the probability that $\vect x$ precedes $\vect y$, and without loss of generality, we assume that this does not hold in the event of a tie.

\subparagraph{Sum ranking.}
For the $\scoresum$ ranking function, our algorithm relies on dynamic programming and follows a similar approach to the standard pseudo-polynomial algorithm for the counting knapsack problem~\cite{gopalan11knapsack}.
Specifically, we consider a subproblem where the matrix $\Mat$ is restricted to columns $j, \ldots, m$ and compute the probability
that tuple $\vect y$ surpasses tuple $\vect x$ by (strictly) more than $s$ in total score.
This can be expressed recursively by considering
how the score difference changes when removing the $j$-th column
for each assignment to the weight $u_j$:

\begin{align}
&R[j, s] = \sum_{v_j \in \supp(\Pi_j)} \prob[u_j = v_j] \cdot R[j + 1, s - v_j (\vect{y}[j] - \vect{x}[j])]\label{eq:dp_sum}\\
&R[m+1, s] = 1  \text{ if } s > 0, \mbox{ and }
R[m+1, s] = 0  \text{ if } s \leq 0
\end{align}

The final answer is $R[1, 0]$.
The running time is $\bigO(|\Pi| \cdot ||s||)$ where $||s||$ is the number of possible values for $s$ in our dynamic program.
If the input consists of integers given in unary, then $||s||$ is polynomial in the input size.
Indeed, suppose the matrix $\Mat$ contains integers in the range $[-c, c]$ and the weights in $\Pi$ are in the range $[-b, b]$.
Then $||s||$ will be bounded by $\sum_{j=1}^m b \cdot 2c$, which is in
$\bigO(mbc)$.

\subparagraph{Max ranking.}
For the $\scoremax$ ranking function, the key idea is that the ``winner'' in the comparison between the two tuples is determined by a single column---the one with the maximum value.
Thus, we can break the problem into distinct cases, each corresponding to a specific column determining the outcome.

\begin{example}
\label{ex:max_pairwise}
Let $\vect x=(3, 5, 2)$ and $\vect y=(4, 1, 6)$ be two tuples, and suppose that the weights $\vect u$ are drawn uniformly from $\set{0, 1}$, and the ranking function is $\rankbyscore{\scoremax}{\asc}$.
For $\vect x$ to be ranked before $\vect y$, we need $\max(u_1 \cdot 3, u_2 \cdot 5, u_3 \cdot 2) < \max(u_1 \cdot 4, u_2 \cdot 1, u_3 \cdot 6)$.
We consider three distinct cases where this event occurs:
(Case 1) The first $\vect y$-coordinate $u_1 \cdot 3$ is the overall maximum. 
We iterate over possible values of $u_1$, and for each one, we determine the valid weights for $u_2$ and $u_3$ individually.
If $u_1 = 0$, then $u_1 \cdot 3$ cannot be the winner.
If $u_1 = 1$, then the valid sets of weights are:
$u_1 \in \{1\}$, $u_2 \in \{0\}$, $u_3 \in \{0\}$. This occurs with probability $1/8$.
(Case 2) The second $\vect y$-coordinate $u_2 \cdot 1$ is the overall maximum. This cannot happen.
(Case 3) The third $\vect y$-coordinate $u_3 \cdot 6$ is the overall maximum. Here, we obtain the sets $u_1 \in \{0,1\}$, $u_2 \in \{0,1\}$, $u_3 \in \{1\}$. All combinations of these weights are valid, thus the probability is $4/8$.
Summing up the probabilities from the valid cases, we obtain the final result $5/8$. 
\qed
\end{example}

\Cref{alg:pairwise_minmax} builds on the logic of our example,
decomposing the comparison of two maximum predicates into $m$ distinct cases,
and computing their probabilities efficiently.
Some care is required in the event of ties among columns of $\vect y$; in such cases, we elect the first occurring maximum to be the winner, ensuring that the $m$ cases remain distinct.
The decomposition can be expressed as follows:

\begin{align}
    & (\max(\vect y \circ \vect u) > \max(\vect x \circ \vect u)) \equiv \bigvee_{j \in [m]} \bigvee_{v_j \in \supp(\Pi_j)} W_{j,v_j} \\
    & W_{j,v_j} = (v_j \cdot \vect y[j] > v_j \cdot \vect x[j]) \wedge
    (\bigwedge_{k \notin [j]} v_j \cdot \vect y[j] > u_k \cdot \vect x[k]) \wedge \\
    \nonumber
    & \qquad \qquad (\bigwedge_{k \in [j-1]} v_j \cdot \vect y[j] > u_k \cdot \vect y[k]) \wedge
    (\bigwedge_{k \in [j+1,m]} v_j \cdot \vect y[j] \geq u_k \cdot \vect y[k])
\end{align}

The algorithm constructs these distinct events $W_{j,v_j}$ and for each one, it determines sets of valid weights for $\vect u$, except for position $j$ where the weight has been fixed to $v_j$.
Finally, the probability of these events is computed efficiently, since the distributions $\Pi_j$ are independent. 
The running time of the algorithm is $\bigO(|\Pi|^2)$.

\begin{algorithm}[t]
\textbf{Input}: tuples $\vect x$ and $\vect y$, distribution $\Pi$ \algocomment{Assuming $\vect y$ is before $\vect x$ in the tie-breaking scheme}\\
\textbf{Output}: probability that $\vect x$ appears before $\vect y$ under $\rankbyscore{\scoremax}{\asc}$ ranking\\

\def\res{\mathrm{result}}
$\res \defeq 0$\;

\For{column $j$ in $[m]$}
{
    \For{weight $v_j \in \supp(\Pi_j)$}
    {
        \algocomment{Find cases where $v_j \cdot \vect y[j]$ is the winner}\;
        $U \defeq \{\emptyset, \ldots, \emptyset \}$ \algocomment{Data structure with one list of satisfying weights per column}\;
        \For{column $k$ in $[m]$}
        {
            \algocomment{Build the list of weights for column $k$}\;
            \lIf{$k = j$ \textbf{and} $v_j \cdot \vect y[j]  > v_j \cdot \vect x[j]$}{
                $U[j] \defeq \{ v_j\}$
            }

            \If{$k \neq j$}{

                $U_1 \defeq \{v_k \in \supp(\Pi_k) \:|\: v_j \cdot \vect y[j]  > v_k \cdot \vect x[k] \}$ \algocomment{Strictly beat $\vect x$ on column $k$}\;
                
                \If{$k < j$}{
                $U_2 \defeq \{v_k \in \supp(\Pi_k)  \:|\: v_j \cdot \vect y[j]  > v_k \cdot \vect y[k] \}$ \algocomment{Strictly beat $\vect y$ on column $k$}\;
                }
                \Else{
                $U_2 \defeq \{v_k \in \supp(\Pi_k) \:|\: v_j \cdot \vect y[j]  \geq v_k \cdot \vect y[k] \}$ \algocomment{Beat $y$ on column $k$}\;
                }
                $U[j] \defeq U_1 \cap U_2$\;
            }
        }
        $\res \pluseq \prod_{k \in [m]}  \sum_{v_k \in U[k]} \prob[u_k = v_k] $ \label{alg_line:sum_product}\;
    }
}
\Return{$\res$}\;

\caption{Precedence computation for ranking by $\scoremax$ ascending.}
\label{alg:pairwise_minmax}
\end{algorithm}

\subparagraph{Lexicographic ranking.}
We now move on to ranking by lexicographic orders.
Our algorithm again follows a decomposition into non-overlapping events,
each corresponding to a specific column and its weight determining the outcome.
For this to happen, all preceding columns need to be tied between the two tuples.
The decomposition works as follows:

\begin{align}
    & ((\vect y \circ \vect u) >_{\mathsf{Lex}} (\vect x \circ \vect u) \equiv \bigvee_{j \in [m]} \bigvee_{v_j \in \supp(\Pi_j)} W_{j,v_j}'\\
    & W_{j,v_j}' = (v_j \cdot \vect y[j] > v_j \cdot \vect x[j]) \wedge
    (\bigwedge_{k \in [j-1]} v_j \cdot \vect y[j] = u_k \cdot \vect x[k])
\end{align}

Similarly to the case of $\scoremax$, we construct these events,
determine the valid weight assignments $\vect u$ for each one, 
and calculate their probability.
The running time is $\bigO(|\Pi|^2)$.

\subparagraph{Recovering the satisfying weights.}
In some cases, we need not only the probability that tuple $\vect x$ is before tuple $\vect y$, but also a succinct representation of the weights for which this is true.
Fortunately, our decomposition-based algorithms for $\scoremax$ and $\lex$ provide this directly.
We simply need to replace summation with union and multiplication with set product in \Cref{alg_line:sum_product} of \Cref{alg:pairwise_minmax}.

\begin{observation}
\label{obs:list_of_weights}
The algorithms for $\PREC{\rankbyscore{\scoremax}{\asc}}$, $\PREC{\rankbyscore{\scoremax}{\desc}}$,
and $\PREC{\ranklex}$ can additionally return the weights $\vect u$ for which precedence holds as a disjoint list,
where each element consists of a set of valid weights per column, i.e., $\bigtimes_{j \in m} \bigcup (u_j=v_j)$.
\end{observation}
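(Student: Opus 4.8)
The plan is to reuse the decompositions that already underlie the three decomposition-based algorithms and to argue that they produce the succinct representation for free, so that only the final aggregation step of \Cref{alg:pairwise_minmax} needs to change. Recall that for $\rankbyscore{\scoremax}{\asc}$ and $\rankbyscore{\scoremax}{\desc}$ the precedence event is rewritten as the disjunction $\bigvee_{j\in[m]}\bigvee_{v_j\in\supp(\Pi_j)}W_{j,v_j}$, and for $\ranklex$ as $\bigvee_{j\in[m]}\bigvee_{v_j\in\supp(\Pi_j)}W_{j,v_j}'$, where the index $(j,v_j)$ names the column $j$ that ``decides'' the comparison together with its fixed weight $v_j$. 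The first step is to check that these elementary events are pairwise disjoint. For a fixed $j$ and two distinct weights $v_j\neq v_j'$ the events $W_{j,v_j}$ and $W_{j,v_j'}$ require $u_j=v_j$ and $u_j=v_j'$, so they are disjoint; for two distinct deciding columns $j\neq j'$ the tie-breaking convention (the first occurring column attaining the maximum for $\scoremax$, the first column on which the tuples differ for $\lex$) makes the deciding column a \emph{function} of the weight assignment $\vect u$, so no $\vect u$ can satisfy both.

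The second step is to observe that each elementary event is a \emph{box}, i.e., a Cartesian product $U[1]\times\cdots\times U[m]$ of per-column weight sets. This holds because the defining formula of $W_{j,v_j}$ (respectively $W_{j,v_j}'$) is a conjunction in which every conjunct constrains only the single weight $u_k$ of one column $k$ against the fixed quantity $v_j\cdot\vect y[j]$; hence the satisfying assignments factorize across columns. Concretely, $U[j]=\{v_j\}$, and for $k\neq j$ the set $U[k]$ is exactly the set $U_1\cap U_2$ computed by \Cref{alg:pairwise_minmax}---the weights in $\supp(\Pi_k)$ that beat both $\vect x$ and $\vect y$ on column $k$, with the strict or non-strict inequality dictated by whether $k<j$ or $k>j$---and in the lexicographic case $U[k]$ is the set of weights that tie the two tuples on column $k$. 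Combining the two steps, the full set of satisfying weights is the \emph{disjoint} union, over all pairs $(j,v_j)$, of the boxes $U[1]\times\cdots\times U[m]$, which is precisely the claimed disjoint list whose elements are products of per-column weight sets.

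The final step is purely syntactic. Since \Cref{alg:pairwise_minmax} already constructs each $U[k]$, it suffices, in \Cref{alg_line:sum_product}, to emit the tuple $(U[1],\dots,U[m])$ and append it to an output list instead of folding it into a probability; formally, one replaces the product-of-sums $\prod_{k\in[m]}\sum_{v_k\in U[k]}\prob[u_k=v_k]$ by the set product $U[1]\times\cdots\times U[m]$, and replaces the running summation of probabilities by list concatenation. This leaves the running time unchanged up to the size of the output, since the number of boxes is at most $\sum_j\size{\supp(\Pi_j)}\le\size{\Pi}$ and each box is described by the already-computed sets $U[k]$. The only genuinely delicate point is the disjointness of the first step, and in particular verifying that the mixture of strict and non-strict inequalities in the definitions of the $W_{j,v_j}$ is exactly what is needed to make the deciding column unique; everything else is bookkeeping that the existing algorithm already performs.
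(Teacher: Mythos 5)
Your proposal is correct and follows exactly the paper's route: the paper's justification is the single remark that one replaces summation with union and multiplication with set product in \Cref{alg_line:sum_product} of \Cref{alg:pairwise_minmax}, relying on the fact that the events $W_{j,v_j}$ (resp.\ $W'_{j,v_j}$) are pairwise disjoint and each is a per-column Cartesian product. You merely spell out the two facts the paper leaves implicit---disjointness via the strict/non-strict inequality pattern that makes the deciding column unique, and the box structure because every conjunct constrains a single $u_k$---which is a faithful elaboration rather than a different argument.
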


\subsection{From Pairwise Impact to Ranking Impact}

Having established efficient algorithms for the pairwise case, we now extend them to compute the expected effect on the overall ranking.
When the effect function measures the position of a tuple or the Kendall's tau distance of the entire permutation, we can reduce the problem to a polynomial number of calls to $\PREC{r}$.
For both measures, we can write the expected effect as the expectation of a sum of indicator variables for tuple precedence and then use the linearity of expectation.

\begin{restatable}{theorem}{thmmaxlexktauposptime}
\label{thm:max-lex-ktau-pos-ptime}
For ranking function $r \in \set{\rankbyscore{\scoremax}{\asc}, \rankbyscore{\scoremax}{\desc},
\ranklex}$
and effect function $e \in \set{\effectf{k\tau}, \effectf{pos}}$,
the problem $\EXP{r}{e}$ is solvable in polynomial time.
For ranking function $r=\ranksum$
and effect function $e \in \set{\effectf{k\tau}, \effectf{pos}}$,
the problem $\EXP{r}{e}$ is solvable in polynomial time if the matrix $\Mat$ and the weights in $\Pi$ are integers encoded in unary.
\end{restatable}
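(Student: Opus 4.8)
The plan is to reduce $\EXP{r}{e}$ to polynomially many invocations of $\PREC{r}$, whose tractability is already guaranteed by \Cref{thm:prec_algs}, and to glue the calls together by linearity of expectation. For two distinct rows $a,b$ of $\Mat$, let $p_{a,b}\defeq\prob_{\vect u\sim\Pi}[a\text{ precedes }b\text{ in }r(\Mat\circ\vect u)]$. The first thing I would record is that all three ranking functions are \emph{pairwise consistent}: the relative order of $a$ and $b$ in $r(\Mat\circ\vect u)$ depends only on rows $a$ and $b$ (together with the index-based tie-break), since $\scoresum$ and $\scoremax$ order two rows by comparing their scores and $\lex$ by the first differing coordinate. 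Hence $p_{a,b}$ is exactly the value returned by $\PREC{r}$ on the two-row matrix formed by rows $a$ and $b$ of $\Mat$, provided the two rows are presented in the order of their original indices so that the tie-breaking convention is reproduced.

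For Kendall's tau, I would apply linearity of expectation directly to the defining sum. Writing $\pi\defeq r(\Mat\circ\vect u)$ and $\pi_0\defeq r(\Mat)$ for the fixed base ranking,
\[
\expectation_{\vect u\sim\Pi}\!\big[\effectf{k\tau}(\pi)\big]
=\sum_{1\le i<j\le n}\prob\big[\pi_0(i)<\pi_0(j)\land \pi(i)>\pi(j)\big]
=\sum_{1\le i<j\le n}\one_{\pi_0(i)<\pi_0(j)}\,p_{j,i},
\]
where the second equality uses that $\pi_0$ is deterministic, so the event $\pi_0(i)<\pi_0(j)$ is a constant factor and $\prob[\pi(i)>\pi(j)]$ is the probability that row $j$ overtakes row $i$. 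This is a sum of at most $\binom{n}{2}$ precedence probabilities.

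For the position effect I would use that the position of row $i$ is one plus the number of rows preceding it, $\pi(i)=1+\sum_{j\ne i}\one_{j\text{ precedes }i}$, so that
\[
\expectation_{\vect u\sim\Pi}\!\big[\effectf{pos}(i,\pi)\big]
=\expectation[\pi(i)]-\pi_0(i)
=\Big(1+\sum_{j\ne i}p_{j,i}\Big)-\pi_0(i),
\]
a sum of $n-1$ precedence probabilities plus the constant $\pi_0(i)$ read off from the base ranking. In both cases $\pi_0$ is computable in polynomial time by a single application of $r$ to $\Mat$.

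Combining these reductions with \Cref{thm:prec_algs} gives the statement: for $r\in\{\rankbyscore{\scoremax}{\asc},\rankbyscore{\scoremax}{\desc},\ranklex\}$ each $\PREC{r}$ call runs in polynomial time, so the overall $O(n^2)$ (resp.\ $O(n)$) calls run in polynomial time; for $r=\ranksum$ each call is pseudo-polynomial and hence polynomial under the unary-integer assumption, while the number of calls stays polynomial. I expect the only delicate point to be the tie-breaking bookkeeping in the pairwise-consistency step: one must check that restricting $\Mat$ to two rows preserves exactly the index-based tie-break of the global ranking, and that $\PREC{r}$'s convention of not counting ties as precedence is applied in the correct direction for each ordered pair $(j,i)$ (strict versus non-strict comparison depending on whether $j<i$ or $j>i$). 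Everything else is routine accounting of the invocations and an appeal to linearity of expectation.
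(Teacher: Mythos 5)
Your proof is correct and follows essentially the same route as the paper's: both arguments reduce $\EXP{r}{e}$ to $\bigO(n^2)$ (for $\effectf{k\tau}$) or $\bigO(n)$ (for $\effectf{pos}$) invocations of $\PREC{r}$ via linearity of expectation and then invoke \Cref{thm:prec_algs}. If anything, you are slightly more careful than the paper about the constant offset $1-\pi_0(i)$ in the position effect, the $\one_{\pi_0(i)<\pi_0(j)}$ factor in Kendall's tau, and the tie-breaking direction in the pairwise calls, all of which the paper glosses over harmlessly.
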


The final case that admits a polynomial-time algorithm is the expected top-$k$ membership for fixed $k$ and ranking by $\rankbyscore{\scoremax}{\desc}$.
Interestingly, our approach only applies to a descending order,
and in fact, we prove in the next section that the problem is hard
when the order is ascending.
The key idea of the algorithm is that, for this specific ranking function, we can compute the probability that the tuple of interest beats a small subset of other tuples.

\begin{figure}
\begin{center}
\begin{tabular}[b]{c c c} %
    \begin{tabular}{c| c c c|} 
        \cline{2-4}
        $\vect{t}_1$ & 3 & 1 & 0 \\ 
        $\vect{t}_2$ & 1 & 0 & 4 \\
        $\vect{t}_3$ & 0 & 1 & 1 \\
        $\vect{t}_4$ & \cellcolorblue 2 & \cellcolorblue 2 & \cellcolorblue 2 \\
        \cline{2-4}
    \end{tabular}
    &
    \begin{tikzpicture}[baseline=(current bounding box.south)]
        \fill[gray] (0,-0.15) rectangle (1.5,0.15);
        \fill[gray] (1.5,-0.3) -- (2,0) -- (1.5,0.3) -- cycle;
        \node[above,yshift=5pt] at (0.92,0) {\small Is $\vect{t}_4$ the top-$1$?};
    \end{tikzpicture}
    &
    \begin{tabular}{c| c c c|} 
        \cline{2-4}
        $\vect{t'}$ & 3 & 1 & 4 \\ 
        $\vect{t}_4$ & \cellcolorblue 2 & \cellcolorblue 2 & \cellcolorblue 2 \\
        \cline{2-4}
    \end{tabular}
\end{tabular}
\end{center}
\caption{\Cref{ex:max_topk}: The transformation to the pairwise case for $\rankbyscore{\scoremax}{\desc}$ and top-$k$ membership.}
\label{fig:max_topk}
\end{figure}

\begin{example}
\label{ex:max_topk}
Suppose the matrix $\Mat$ contains tuples $\vect{t}_1, \vect{t}_2, \vect{t}_3, \vect{t}_4$ as shown in \Cref{fig:max_topk}, the ranking function is $\rankbyscore{\scoremax}{\desc}$,
and our goal is to compute the probability that $\vect{t}_4$ is ranked top-1.
We can merge the competitor tuples $\vect{t}_1, \vect{t}_2, \vect{t}_3$ into a single competitor $\vect{t'}$ that retains the largest value per column.
Now, $\vect{t}_4$ is top-1 precisely when it precedes $\vect{t'}$ in the ranking.
This merging has effectively reduced the problem to the pairwise problem $\PREC{\rankbyscore{\scoremax}{\desc}}$, which we have already established can be solved efficiently.\qed
\end{example}

The example illustrates how the algorithm works when $k=1$. For $k > 1$,
we apply the principle of inclusion-exclusion to express the event that our tuple is in the top-$k$ as the intersection
of events where it ranks above specific subsets of other tuples.

\begin{restatable}{theorem}{thmmaxtopkptime}
\label{thm:max-topk-ptime}
The problem $\EXP{\rankbyscore{\scoremax}{\desc}}{\peffectf{k}{top}}$ with $k$ as a fixed parameter is solvable in polynomial time.
\end{restatable}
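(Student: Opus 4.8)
The plan is to reduce the computation to a polynomial number of pairwise precedence queries $\PREC{\rankbyscore{\scoremax}{\desc}}$, which are tractable by \Cref{thm:prec_algs}. First I would dispose of the $\pi_0$ term: since $\peffectf{k}{top}(i,\pi)=\one_{\pi(i)\le k}-\one_{\pi_0(i)\le k}$ and $\pi_0=r(\Mat)$ is a fixed permutation computable in polynomial time, linearity of expectation gives
\[
\expectation_{\vect u\sim\Pi}\bigl[\peffectf{k}{top}(i,\rankbyscore{\scoremax}{\desc}(\Mat\circ\vect u))\bigr]=\prob_{\vect u\sim\Pi}[\pi(i)\le k]-\one_{\pi_0(i)\le k},
\]
so the whole task boils down to computing $p_i\defeq\prob_{\vect u}[\text{row }i\text{ is in the top-}k]$ for the random permutation $\pi=\rankbyscore{\scoremax}{\desc}(\Mat\circ\vect u)$.

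Next I would recast ``row $i$ is top-$k$'' as a threshold event over pairwise comparisons. For each competitor $j\neq i$ let $B_j$ be the event that $i$ precedes $j$ (beats $j$) under $\rankbyscore{\scoremax}{\desc}$. Since tie-breaking makes the ranking a total order, $\pi(i)=1+|\{j:j\text{ precedes }i\}|$, so row $i$ is top-$k$ exactly when at least $n-k$ of the $n-1$ events $B_j$ occur. The crucial structural fact---and the reason descending order is special---is that conjunctions of these events merge: for any set $U$ of competitors, $\bigcap_{j\in U}B_j$ is the event that $i$ beats \emph{all} of $U$, i.e.\ $\scoremax(\vect t_i\circ\vect u)>\max_{j\in U}\scoremax(\vect t_j\circ\vect u)$, and the right-hand max equals $\scoremax(\vect t'_U\circ\vect u)$ for the single merged competitor $\vect t'_U$ obtained by taking the per-column maximum over $U$ (exactly the merge of \Cref{ex:max_topk}). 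Hence $\prob[\bigcap_{j\in U}B_j]$ is a single call to $\PREC{\rankbyscore{\scoremax}{\desc}}$. For ascending order the analogous conjunction would instead involve $\min_{j\in U}\scoremax(\vect t_j\circ\vect u)$, which does not collapse to one tuple---matching the hardness we claim in that case.

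Then I would compute $p_i$ by the inclusion--exclusion formula for ``at least $m$ of $N$ events,'' namely
\[
p_i=\prob[\text{at least }n-k\text{ of the }B_j]=\sum_{r=n-k}^{n-1}(-1)^{r-(n-k)}\binom{r-1}{n-k-1}\,S_r,\qquad S_r\defeq\sum_{\substack{U\subseteq[n]\setminus\{i\}\\|U|=r}}\prob\Bigl[\textstyle\bigcap_{j\in U}B_j\Bigr].
\]
The decisive observation is that this sum ranges only over $r\ge n-k$, so every set $U$ appearing has complement of size at most $k-1$; for fixed $k$ there are only $\sum_{c=0}^{k-1}\binom{n-1}{c}=\bigO(n^{k-1})$ such sets. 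For each I form the merged tuple $\vect t'_U$ and invoke the pairwise algorithm once, obtaining $\prob[\bigcap_{j\in U}B_j]$ in polynomial time; combining these with the appropriate binomial coefficients yields $p_i$, and hence the expectation, in polynomial time.

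I expect the main obstacle to be the second step: seeing past the naive description of the top-$k$ event as an exponentially large union of $(n-k)$-subset conditions and identifying the form of inclusion--exclusion that references only the symmetric sums $S_r$ over \emph{large} subsets, so that the merging of ``beats'' events keeps every surviving term a single tractable precedence query. A secondary point to verify carefully is the merge itself---that the per-column maximum faithfully represents $\max_{j\in U}\scoremax(\vect t_j\circ\vect u)$ (immediate for nonnegative weights; otherwise the precedence routine of \Cref{alg:pairwise_minmax} can be run directly in a one-versus-$U$ form, intersecting the per-column constraints over $U$)---together with the standard binomial identity underlying the ``at least $m$'' formula.
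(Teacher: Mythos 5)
Your proposal is correct and follows essentially the same route as the paper's proof: reduce to the probability that row $i$ is top-$k$, apply the ``at least $n-k$ of $n-1$ events'' inclusion--exclusion so that only the polynomially many large subsets $U$ appear, and collapse each conjunction $\bigcap_{j\in U}B_j$ into a single call to $\PREC{\rankbyscore{\scoremax}{\desc}}$ via the per-column-maximum merge. The tie-breaking caveat you flag is handled in the paper by splitting $U$ into two merged competitors (those before and after $i$ in the tie-breaking order) and intersecting their weight lists via \Cref{obs:list_of_weights}, which is exactly the concrete form of your ``one-versus-$U$'' fallback.
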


Later, we will show that the assumption of a fixed $k$ is necessary since the problem becomes intractable otherwise (\Cref{thm:top-k-hard}).

\subsection{Bounding the Matrix Dimensions}
\label{sec:bounded_dim}

We conclude this section by investigating the consequences of bounding one of the two dimensions of our matrix by a fixed constant.
This allows us to obtain a clearer picture of the parameters that make the problem hard.
We find that such a restriction makes the problem significantly easier and, in most cases (with the exception of $\scoresum$ with arbitrary values), both dimensions need to be non-fixed for the problem to be hard.

\begin{restatable}{theorem}{thmboundeddim}
\label{thm:bounded_dim}
The following hold.
\begin{enumerate}
\item If the number of columns $m$ of the matrix $\Mat$ is bounded by a constant, then the problem $\EXP{r}{e}$ is solvable in polynomial time for any ranking function $r$ and effect function $e$ computable in polynomial time.
\item If the number of rows $n$ of the matrix $\Mat$ is bounded by a constant, then the problem $\EXP{r}{e}$ is solvable in polynomial time for any effect function $e$ computable in polynomial time and
\begin{enumerate}
\item the ranking function $r$ is in $\set{\rankbyscore{\scoremax}{\asc}, \rankbyscore{\scoremax}{\desc},\ranklex}$, or
\item $r$ is the ranking function $\ranksum$ and, additionally, the matrix $\Mat$ and weights in $\Pi$ are integers encoded in unary.
\end{enumerate}
\end{enumerate}
\end{restatable}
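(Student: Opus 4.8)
The plan is to treat the two parts with different techniques, but in both I reduce the expectation to a weighted sum over a manageable family of outcomes, using that the effect depends only on the resulting permutation:
\[
\expectation_{\vect u \sim \Pi}[e(r(\Mat \circ \vect u))] = \sum_{\pi \in \S_n} e(\pi)\cdot \prob_{\vect u\sim\Pi}[\,r(\Mat\circ\vect u)=\pi\,],
\]
where for a local effect $e(\pi)$ is read as $e(i,\pi)$ for the fixed input row $i$. For Part~1 ($m$ bounded), the factorized distribution has only $\prod_{j=1}^m \size{\supp(\Pi_j)} \le |\Pi|^m$ weight vectors of nonzero probability, which is polynomial when $m$ is constant. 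I would simply enumerate them and, for each $\vect u$, compute $e(r(\Mat\circ\vect u))$ and multiply by $\prob[\vect u]=\prod_j\prob[u_j=v_j]$, all in polynomial time by assumption; summing gives the answer. This is the easy direction.

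For Part~2 ($n$ bounded), enumeration fails since there may be exponentially many weight vectors, but now there are only $n! = \bigO(1)$ permutations, so it suffices to compute each $\prob[r(\Mat\circ\vect u)=\pi]$ in polynomial time. Because the induced ranking is a total order, the event $r(\Mat\circ\vect u)=\pi$ is equivalent to the conjunction of the $n-1$ consecutive precedence events ``$\pi^{-1}(\ell)$ precedes $\pi^{-1}(\ell+1)$'' (transitivity closes the remaining pairs, and the row-index tie-breaking is exactly the one used by $\PREC{r}$).

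For case~(a), where $r\in\set{\rankbyscore{\scoremax}{\asc},\rankbyscore{\scoremax}{\desc},\ranklex}$, I would invoke \Cref{obs:list_of_weights}: each precedence event is returned as a disjoint union of $\bigO(|\Pi|)$ boxes $U_1\times\cdots\times U_m$ with $U_j\subseteq\supp(\Pi_j)$. The event $r(\Mat\circ\vect u)=\pi$ is the intersection of $n-1$ such unions; intersecting two boxes yields a box (intersect the set for each column), and because each of the $n-1$ families is internally disjoint the resulting intersection boxes are pairwise disjoint, so the event is a disjoint union of at most $\bigO(|\Pi|^{n-1})$ boxes---polynomially many since $n$ is fixed. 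The probability of a box is $\prod_j\prob[u_j\in U_j]$, and summing over all boxes and all $\pi$ gives the expectation. For case~(b), $r=\ranksum$ with unary integers, no box decomposition is available, so I would run a joint dynamic program over scores: the score of row $i$ is $\sum_j u_j\,\vect t_i[j]$, which for entries in $[-c,c]$ and weights in $[-b,b]$ lies in a range of size $\bigO(mbc)$, polynomial under unary encoding. Processing columns one at a time and letting $\vect d_j$ be the $j$th column of $\Mat$, I maintain the joint distribution $D_j$ of the score vector $(\sigma_1,\dots,\sigma_n)$ via $D_j[\vect s]=\sum_{v_j\in\supp(\Pi_j)}\prob[u_j=v_j]\cdot D_{j-1}[\vect s - v_j\vect d_j]$; the table has $\bigO((mbc)^n)$ entries, polynomial for fixed $n$. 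From the final vector I read off $\pi$ with index tie-breaking and accumulate $D_m[\vect s]\cdot e(\pi)$.

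The main obstacle is controlling the size of the intermediate representation in Part~2, which is exactly where the constant bound on $n$ is used: in~(a) the intersection of $n-1$ box-lists stays polynomial only because $n-1$ is constant, and in~(b) the $n$-dimensional DP table stays polynomial only because $n$ is constant and the values are unary. The latter also explains why the unary assumption cannot be dropped, mirroring the exclusion of $\scoresum$ with binary encoding in the statement.
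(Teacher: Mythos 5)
Your proof is correct and follows essentially the same route as the paper: enumerating the $\bigO(|\Pi|^m)$ weight vectors for fixed $m$, and for fixed $n$ enumerating the $n!$ permutations, decomposing each into $n-1$ consecutive precedence events handled via the disjoint box representation of \Cref{obs:list_of_weights} for $\scoremax$ and $\lex$, and via a pseudo-polynomial dynamic program for $\scoresum$. The only (immaterial) difference is in the $\scoresum$ case: the paper runs one DP per permutation over the $n-1$ consecutive score differences with tie-breaking encoded in the base case, whereas you run a single DP over the full $n$-dimensional score vector and read off the permutation at the end; both tables have polynomially many entries for fixed $n$ under unary encoding.
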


We note that the proof is straightforward for the cases where $m$ (the number of columns) is fixed, since we can then materialize the entire probability space in an explicit representation. The proof of tractability for a fixed number of $n$ of rows is more involved.

\section{Intractable Cases for Expectation (and SHAP Scores)}\label{sec:hardness}

In this section, we show that the tractability results in \cref{sec:algorithms} are complete in the sense that expectation computation (and SHAP score computation) are $\FP^{\sharpP}$-hard for all remaining combinations of $r$ and $e$.

\subsection{Ranking based on Summation}\label{sec:hardness:sum}
We start with the observation that precedence probability is hard for ranking by $\scoref{Sum}$.

\begin{restatable}{theorem}{thmsumprechard}\label{thm:sum-prec-hard}
     The problem $\PREC{\ranksum}$ is $\FP^{\sharpP}$-hard, if the input matrix $\Mat$ is encoded in binary.
\end{restatable}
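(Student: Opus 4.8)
The plan is to reduce from the \#P-complete problem \#SubsetSum: given positive integers $b_1,\dots,b_n$ encoded in binary and a target $t$, count the subsets $S\subseteq[n]$ with $\sum_{j\in S}b_j=t$. The guiding observation is that for $\ranksum$ the precedence event $\prob_{\vect u\sim\Pi}[s(1)<s(2)]$ is exactly a threshold probability on a weighted sum $\sum_j u_j a_j$ with $a_j=\vect y[j]-\vect x[j]$, and that by choosing the two rows and the per-column distributions appropriately, this sum can be made to range uniformly over all subset sums of the $b_j$. The binary encoding is what makes this nontrivial, since it lets the $b_j$ be exponentially large.

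Concretely, I would build a two-row matrix with rows $\vect x=\vect t_1$ and $\vect y=\vect t_2$ over $n+1$ columns. For $j\in[n]$ set $\vect x[j]=0$ and $\vect y[j]=b_j$, and let $\Pi_j$ be uniform on $\set{0,1}$. For the extra column set $\vect x[n+1]=c$ and $\vect y[n+1]=0$ for a threshold constant $c$ to be chosen, with $\Pi_{n+1}$ the point mass on $1$. Then $\scoresum(\vect y\circ\vect u)-\scoresum(\vect x\circ\vect u)=\sum_{j\in[n]}u_j b_j-c$, and since each $u_j\in\set{0,1}$ is an independent fair coin, $\sum_{j\in[n]}u_j b_j$ is a uniformly random subset sum. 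Because $\ranksum$ is ascending and ties are broken in favor of the lower-indexed row $\vect x$, the returned value is
\[
\prob_{\vect u\sim\Pi}[s(1)<s(2)]
=\prob\Big[\textstyle\sum_{j\in[n]}u_j b_j\ge c\Big]
=2^{-n}\cdot\big|\set{S\subseteq[n]:\textstyle\sum_{j\in S}b_j\ge c}\big|.
\]
If the precedence convention instead excludes ties, one shifts $c$ by $1$, which is harmless since all quantities are integers.

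Thus a single oracle call to $\PREC{\ranksum}$ with capacity $c$ recovers, in polynomial time, the cumulative count $N_{\ge c}\defeq|\set{S:\sum_{j\in S}b_j\ge c}|=2^n\cdot\prob_{\vect u\sim\Pi}[s(1)<s(2)]$. Since the $b_j$ are integers, the exact count $|\set{S:\sum_{j\in S}b_j=t}|=N_{\ge t}-N_{\ge t+1}$ is obtained from two such calls. All matrix entries are integers of bit-length polynomial in the input, the distributions $\Pi_j$ are finite and valid, and only a constant number of polynomial-size queries are made, so this is a polynomial-time Turing reduction; as \#SubsetSum is \#P-complete, $\PREC{\ranksum}$ is $\FP^{\sharpP}$-hard.

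The step I expect to require the most care is not the counting argument but making the cutoff exact: implementing the ``$\ge c$'' threshold through the deterministic extra column and reconciling it with the tie-breaking/strictness convention of $\PREC{\cdot}$, so that the two cumulative queries correctly bracket the exact subset-sum count. It is also worth emphasizing where the binary encoding is essential: the $b_j$ may be exponentially large, which is exactly what defeats the pseudo-polynomial dynamic program of \Cref{thm:prec_algs}; under a unary encoding the same construction would be polynomially bounded and no hardness would follow.
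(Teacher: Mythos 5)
Your proposal is correct and uses essentially the same construction as the paper: a two-row matrix with the weights $b_1,\dots,b_n$ in one row and a single threshold entry in the other, with uniform $\set{0,1}$ column weights, so that the precedence probability counts the subsets whose sum clears the threshold. The only cosmetic differences are that the paper reduces from counting knapsack ($\sum_{i\in S}b_i\le d$, one oracle call, with the threshold column also randomized and absorbed into the normalization) whereas you reduce from \#SubsetSum via two cumulative queries $N_{\ge t}-N_{\ge t+1}$; both are valid.
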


The proof of this theorem is given in \cref{sec:appendix:miss-hardness}. We use the following lemma to prove this claim by reducing from the counting knapsack problem which, given natural numbers $b_1, \ldots, b_\ell$ and a number $d$ all encoded in binary, asks for the number of subsets $S \subseteq [\ell]$ with $\sum_{i \in S} b_i \leq d$. This problem is known to be $\sharpP$-hard (see, e.g.,~\cite{DBLP:journals/cpc/DyerFKKPV93,gopalan11knapsack}).

\begin{lemma}\label{lem:sum-hard-matrix}
    Let $b_1, \ldots, b_\ell$ and $d$ be an input to the counting knapsack problem. Then, there is a matrix $\Mat \in \M^{2\times (\ell+1)}$ with the following property: For each set $C$ of columns, we have
    $
    \ranksum(\restrict{\Mat}{C})(2) \leq  \ranksum(\restrict{\Mat}{C})(1) $ if and only if $C$ contains $\ell + 1$ and $\sum_{i \in C \setminus \set{\ell + 1}} b_i \leq d$.
\end{lemma}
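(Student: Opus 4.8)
The plan is to reduce the ranking condition to a direct comparison of the two row sums and then encode the knapsack threshold into a single column. First I would observe that, since $\restrict{\Mat}{C}$ has only two rows, the permutation $\ranksum(\restrict{\Mat}{C})$ lives in $\S_2$, so the condition $\ranksum(\restrict{\Mat}{C})(2) \le \ranksum(\restrict{\Mat}{C})(1)$ holds exactly when row $2$ occupies position $1$, i.e.\ row $2$ precedes row $1$. Under $\ranksum = \rankbyscore{\scoresum}{\asc}$ the smaller sum goes first, and the index-based tie-break gives precedence to row $1$ on a tie; hence row $2$ precedes row $1$ if and only if $\sum_{j\in C}\vect t_2[j] < \sum_{j\in C}\vect t_1[j]$ \emph{strictly}, where $\vect t_1,\vect t_2$ denote the two rows. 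So it suffices to build a matrix for which this strict inequality is equivalent to ``$\ell+1\in C$ and $\sum_{i\in C\setminus\set{\ell+1}} b_i \le d$''.

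The construction I would use treats columns $1,\dots,\ell$ as the knapsack items and column $\ell+1$ as a ``gate'': set
\[
\vect t_1 = (0,\dots,0,\,d+1), \qquad \vect t_2 = (b_1,\dots,b_\ell,\,0).
\]
Then $\sum_{j\in C}\vect t_1[j] = (d+1)\cdot\one_{\ell+1\in C}$ and $\sum_{j\in C}\vect t_2[j] = \sum_{i\in C\cap[\ell]} b_i$, so the strict inequality to verify becomes $\sum_{i\in C\cap[\ell]} b_i < (d+1)\cdot\one_{\ell+1\in C}$.

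It then remains to check the two cases. If $\ell+1\notin C$, the right-hand side is $0$ while the left-hand side is a sum of the non-negative integers $b_i$, so the inequality fails --- matching the requirement that $C$ must contain $\ell+1$ (and covering the edge case $C=\emptyset$). If $\ell+1\in C$, the inequality reads $\sum_{i\in C\setminus\set{\ell+1}} b_i < d+1$, which over the integers is exactly $\sum_{i\in C\setminus\set{\ell+1}} b_i \le d$, as desired. Finally I would note that all entries are non-negative integers whose binary encodings have size polynomial in the input, so the matrix is produced in polynomial time (which is what the hardness reduction in \Cref{thm:sum-prec-hard} needs). The only real subtlety --- and the step I would be most careful about --- is keeping the strict-versus-non-strict inequality and the tie-break direction consistent: the ``$+1$'' offset in the gate column is precisely what converts the strict comparison forced by tie-breaking into the non-strict threshold $\le d$ of the knapsack instance.
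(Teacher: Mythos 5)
Your construction is exactly the matrix the paper uses (row $1 = (0,\dots,0,d+1)$, row $2 = (b_1,\dots,b_\ell,0)$), and your verification — reducing the rank condition to a strict score comparison via the index tie-break and then using the $d+1$ gate to turn strictness into the threshold $\le d$ — is precisely the "easy to verify" step the paper leaves implicit. Correct, and the same approach as the paper.
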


\begin{proof}

    Consider the following matrix $\Mat \in \M^{2\times (\ell+1)}$:
      
    \begin{center}
    \begin{tabular}{c| c c c c|} 
     \cline{2-5}
     id & $a_1$ & $\ldots$ & $a_n$ & $a_{n+1}$ \\ 
     \cline{2-5}
     1 & 0 & $\ldots$ & 0 & $d + 1$ \\ 
     2 & $b_1$ & $\ldots$ & $b_n$ & 0 \\
    \cline{2-5}
    \end{tabular}
    \end{center}

    It is easy to verify this has the claimed property.
\end{proof}

We can use this theorem to show the hardness of expectation computation (and hence SHAP score computation) whenever we rank by $\scoresum$.

\begin{corollary}[from \cref{thm:sum-prec-hard}]\label{cor:hard-binary}
    For each of the effect functions $e \in \set{\effectf{k\tau}, \effectf{Ham}, \effectf{pos}, \peffectf{k}{top}}$, the problem  $\EXP{\ranksum}{e}$ is $\FP^{\sharpP}$-hard if the input matrix $\Mat$ is encoded in binary.
\end{corollary}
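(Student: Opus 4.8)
The plan is to reduce the (already $\FP^{\sharpP}$-hard) pairwise problem $\PREC{\ranksum}$ to $\EXP{\ranksum}{e}$ for each of the four effect functions, exploiting the fact that with only two rows a ranking is one of exactly two permutations. Given a $\PREC{\ranksum}$ instance consisting of a two-row matrix $\Mat$ and a distribution $\Pi$, write $\sigma_i(\vect u)\defeq\sum_j u_j\Mat[i,j]$ and let $p$ be the precedence probability, i.e.\ the probability that row $1$ precedes row $2$ under $\ranksum(\Mat\circ\vect u)$. I would feed a lightly modified version of these two rows to $\EXP{\ranksum}{e}$. For $n=2$ the random order $\pi$ is either $\piid$ (row $1$ before row $2$) or the swap $\pi_{\mathsf{sw}}$, so the returned expectation is $E=\expectation_{\vect u\sim\Pi}[e(\ranksum(\Mat\circ\vect u))]=p\cdot e(\piid)+(1-p)\cdot e(\pi_{\mathsf{sw}})$, a value of the form $p\,c_1+(1-p)\,c_0$ whose constants $c_0,c_1$ depend only on the base order $\pi_0$ and on $e$. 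Whenever $c_1\neq c_0$, a single oracle call yields $E$ and $p$ is recovered in polynomial time as $p=(E-c_0)/(c_1-c_0)$; since $\PREC{\ranksum}$ is $\FP^{\sharpP}$-hard by \Cref{thm:sum-prec-hard}, so is $\EXP{\ranksum}{e}$, and hence (via \Cref{thm:shap-expectation}) the corresponding SHAP-score problem.

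The only thing to guarantee is that this affine relation is nondegenerate ($c_1\neq c_0$) and that $\prob[\pi=\piid]$ equals $p$ exactly, which requires controlling the base order $\pi_0=\ranksum(\Mat)$ and ruling out score ties. Both are handled by a small gadget: I scale every entry of $\Mat$ by $2$ and append two deterministic columns. The first adds $1$ to row $1$ and $0$ to row $2$ (with weight fixed to $1$), so row $1$ always has an odd and row $2$ an even score; this removes ties and makes $\prob[\pi=\piid]=\prob[\sigma_1(\vect u)<\sigma_2(\vect u)]=p$. The second has value $0$ in row $1$, a large even value $2M$ in row $2$, and weight fixed to $0$ under $\Pi$; being weighted by $0$ it is invisible to the random comparison, but under the all-ones weighting that defines $\pi_0=\ranksum(\Mat)$ it pushes row $2$ below row $1$ and thus forces $\pi_0=\piid$ for $M$ polynomially large in the binary input size. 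With $\pi_0=\piid$, choosing the local row $i=1$ and $k=1$ for top-$k$ membership, the constants are nondegenerate for every effect: one computes $E=1-p$ for $\effectf{k\tau}$, $E=2(1-p)$ for $\effham$, $E=1-p$ for $\effectf{pos}$, and $E=-(1-p)$ for $\peffectf{1}{top}$, each trivially invertible for $p$.

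The main obstacle is exactly this degeneracy issue rather than the reduction itself. Under the literal one-sided definition of $\effectf{k\tau}$, if the base order were $\pi_{\mathsf{sw}}$ the effect would be identically $0$ and carry no information, so $c_1=c_0$; likewise, a mismatch between the strict precedence of $\PREC{\ranksum}$ and the index-based tie-breaking of $\ranksum$ would otherwise contaminate $\prob[\pi=\piid]$ by the term $\prob[\sigma_1(\vect u)=\sigma_2(\vect u)]$. Fixing $\pi_0=\piid$ and eliminating ties with the gadget above removes both problems, after which the four affine inversions are immediate. This establishes $\FP^{\sharpP}$-hardness of $\EXP{\ranksum}{e}$ for each $e\in\set{\effectf{k\tau},\effham,\effectf{pos},\peffectf{k}{top}}$ when $\Mat$ is encoded in binary.
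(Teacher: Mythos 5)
Your reduction is the same one the paper uses: for a two-row matrix the expected effect is an affine function $p\,c_1+(1-p)\,c_0$ of the precedence probability, and one inverts it whenever $c_1\neq c_0$. For $\effectf{k\tau}$, $\effham$ and $\effectf{pos}$ your argument is correct, and in fact more careful than the paper's one-line justification: the paper simply asserts the two constants differ, whereas the one-sided definition of $\effectf{k\tau}$ really is degenerate when $\pi_0$ is the swap permutation, and your zero-weight column forcing $\pi_0=\piid$ is a clean way to rule that out. One small convention slip: putting the $+1$ on row~1 makes row~1 precede row~2 only under \emph{strict} score inequality, which matches the ``precedence fails on ties'' convention of the algorithms section but not the index-based tie-breaking of the preliminaries; under the latter $\prob[\pi=\piid]=\prob[\sigma_1\leq\sigma_2]$, so the $+1$ should go on row~2 instead (or you should argue that both tie conventions yield \#P-hard precedence probabilities). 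Either way this is cosmetic.

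The genuine gap is in the top-$k$ case. You prove hardness only for $k=1$, but the claim (see the summary table) is that $\EXP{\ranksum}{\peffectf{k}{top}}$ is hard for \emph{every} fixed $k$, and hardness at $k=1$ does not transfer to a different fixed value of $k$. The paper closes this with a padding step you are missing: from the two-row instance build a $(k+1)$-row matrix whose first row is $\vect{t}_1$ and whose remaining $k$ rows are identical copies of $\vect{t}_2$; then $\vect{t}_1$ lies in the top-$k$ of the padded matrix if and only if it precedes $\vect{t}_2$ in the original pair, so the $k=1$ computation carries over verbatim to arbitrary fixed $k$. With that one additional construction your proof covers the full statement.
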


\begin{proof}
    We observe that for $n=2$, expectation computation of the effect functions $e \in \set{\effectf{k\tau}, \effectf{Ham}, \effectf{pos}}$ is equivalent to precedence probability computation, as they are all of the form $a \cdot \prob(\pi(1) > \pi(2)) + b \cdot \prob(\pi(2) > \pi(1))$ with $a \neq b$. 
    
    For $\Mat \in \M^{2\times m}$, $k \geq 1$ and $e = \peffectf{k}{top}$, 
    we consider a matrix $\Mat' \in \M^{(k+1) \times m}$ where $\vect{t'_1} = \vect{t}_1$ and $\vect{t'_2} = \ldots = \vect{t'_{k + 1}} = \vect{t}_2$, so all the tuples $\vect{t'_2}$ up to $\vect{t'_{k+1}}$ are identical. Now, $\vect{t'_1}$ is in the top-$k$ of $\Mat'$ is equivalent to $\vect{t}_1$ being ranked better than $\vect{t}_2$ in $\Mat$. This yields the claim.
\end{proof}

\subsection{Top-k Membership}\label{sec:hardness:top-k}

In \cref{sec:algorithms},  we saw a polynomial-time algorithm for  $\EXP{\rankbyscore{\scoremax}{\desc}}{\peffectf{k}{top}}$ when $k$ is a parameter. In this subsection, we show that this problem is hard for all other ranking functions we consider.

\begin{restatable}{theorem}{thmtopkhard}\label{thm:top-k-hard}
    For each ranking function $r \in \set{\rankbyscore{\scoremax}{\asc}, \ranksum, \ranklex}$ the problem $\EXP{r}{\peffectf{k}{top}}$ is $\FP^{\sharpP}$-hard, when $k$ is a parameter and also when $k$ is part of the input.
    Furthermore, $\EXP{\rankbyscore{\scoremax}{\desc}}{\peffectf{k}{top}}$
        is $\FP^{\sharpP}$-hard if $k$ is  part of the input. 
    These claims remain true if the entries of $\Mat$ are restricted to $\set{0, 1}$ and the distribution $\Pi$ is the uniform distribution on $\set{0,1}^m$.
\end{restatable}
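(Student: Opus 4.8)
The plan is to reduce from the \sharpP-complete problem of counting the satisfying assignments (models) of a monotone CNF formula $\phi = c_1 \wedge \dots \wedge c_q$ over variables $x_1,\dots,x_p$, where clause $c_j$ is the disjunction of the positive literals in a set $V_j \subseteq [p]$; for monotone $2$-CNF this is exactly counting vertex covers (equivalently independent sets), which is \sharpP-complete. Since $\Pi$ is uniform over $\set{0,1}^m$, a draw $\vect u$ is just a uniformly random subset $S \defeq \set{j : u_j = 1}$ of the columns, and a tuple with $0/1$ support $A$ scores $|S \cap A|$ under $\ranksum$, scores $\one_{S \cap A \neq \emptyset}$ under $\scoremax$, and has $S\cap A$ as the support of its row in $\Mat\circ\vect u$ under $\ranklex$. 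Identifying columns with variables, $S$ is a model of $\phi$ exactly when $S \cap V_j \neq \emptyset$ for every $j$. Finally, $\expectation[\peffectf{k}{top}(i,\cdot)] = \prob[\pi(i) \le k] - \one_{\pi_0(i)\le k}$, and the subtracted term is a constant computable by evaluating $r(\Mat)$; hence it suffices to make $\prob[\pi(i)\le k]$ encode $N \defeq$ the number of models of $\phi$.

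For $r \in \set{\ranksum, \rankbyscore{\scoremax}{\asc}, \ranklex}$ I would use a single gadget with $k=1$. Add a fresh dummy column $d$ (so $m = p+1$), let the distinguished row $1$ have support $\set{d}$, and add one ``clause row'' $j=2,\dots,q+1$ with support $V_j$. The key claim is that, for all three rankings, a clause row precedes row $1$ iff its clause is unsatisfied and $d\in S$: when $d\in S$ the distinguished row ``scores a $1$'' (its only nonzero weighted entry is a $1$ in column $d$), so an unsatisfied clause row is all-zero and strictly smaller while a satisfied clause row is strictly larger; for $\ranklex$ this uses that $d$ is the lowest-priority (last) column. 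Thus when $d\in S$ the rank of row $1$ equals $U(S)+1$, where $U(S)$ counts the unsatisfied clauses, so row $1$ is top-$1$ iff $U(S)=0$; and when $d\notin S$ the distinguished row is all-zero with the smallest index, hence top-$1$ outright. Using independence of $d$, this gives $\prob[\pi(1)\le 1] = \tfrac12 + \tfrac12\cdot\tfrac{N}{2^p} = \tfrac12 + \tfrac{N}{2^m}$, so $N$ is recovered in polynomial time. The same gadget serves both the fixed-$k$ and the input-$k$ variants, since $k=1$ is a constant.

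For $\rankbyscore{\scoremax}{\desc}$ the fixed-$k$ problem is already tractable (\Cref{thm:max-topk-ptime}), so I would exploit that $k$ is part of the input and take $k=q$. Here set $m=p$, let the distinguished row $1$ be the all-zero row, and again add a clause row with support $V_j$ for each clause. Under $\rankbyscore{\scoremax}{\desc}$ every satisfied clause scores $1$ and sits above the distinguished row, while the distinguished row (score $0$, smallest index) heads the score-$0$ block, so its rank is exactly $(q-U(S))+1$. Hence row $1$ is in the top-$q$ iff $q-U(S)\le q-1$, i.e.\ iff $U(S)\ge 1$, i.e.\ iff $S$ is \emph{not} a model, giving $\prob[\pi(1)\le q] = 1 - \tfrac{N}{2^p}$ and again recovering $N$. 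All constructions use only $\set{0,1}$ entries and the uniform distribution, as required.

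The step I expect to require the most care is establishing the single ``precedes iff unsatisfied'' claim uniformly across the three rather different rankings of the first gadget---in particular the $\ranklex$ case, where one must verify that placing $d$ as the lowest-priority column makes the distinguished row behave exactly like a score-$1$ tuple, and that satisfied clauses consistently follow while unsatisfied ones precede. Because we need an \emph{exact} count rather than an approximation, the boundary events ($d\notin S$, and the all-zero weighted rows that tie with the distinguished row) must be tracked precisely via the tie-breaking rule; the dummy column $d$ is exactly the device that keeps this accounting clean and lets one construction cover $\ranksum$, $\rankbyscore{\scoremax}{\asc}$, and $\ranklex$ simultaneously.
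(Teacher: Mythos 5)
Your proof is correct, and it follows the same high-level strategy as the paper: reduce from counting satisfying assignments of a positive/monotone CNF, encode clauses as $0/1$ incidence rows, and let the top-$k$ membership of a distinguished row record whether the random column set satisfies all clauses. The gadget mechanics differ in two places. For the three ascending-type rankings, the paper places the distinguished all-zero row at the \emph{last} index together with $k-1$ all-zero padding rows, so that an unsatisfied clause row (which becomes all-zero under the restriction) overtakes it purely via index tie-breaking; you instead give the distinguished row index $1$ plus a fresh dummy column, condition on the dummy weight being $1$ so that the row scores exactly $1$, and let unsatisfied clause rows precede it by a strict score comparison. Both work; the paper's version avoids the extra column and the conditioning step (and its \cref{lem:top-k-hard-matrix} is stated for arbitrary $k$, giving hardness for every fixed $k$ rather than just $k=1$, though $k=1$ already suffices for the theorem as stated and your gadget pads to general $k$ trivially). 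For $\rankbyscore{\scoremax}{\desc}$ the routes genuinely diverge: the paper derives hardness from the ascending case via the identity $\prob(\rankbyscore{\scoremax}{\desc}(\vect t)\le k)=1-\prob(\rankbyscore{\scoremax}{\asc}(\vect t)\le n-k-1)$, which requires a little care about how tie-breaking interacts with order reversal, whereas you build a separate direct gadget with $k=q$ in which the all-zero distinguished row heads the score-$0$ block and is in the top-$q$ exactly when some clause is unsatisfied. Your direct construction sidesteps the tie-breaking subtlety of the complementation argument at the cost of a second gadget; your probability bookkeeping ($\tfrac12+\tfrac{N}{2^m}$ and $1-\tfrac{N}{2^p}$) and the reduction to the expectation via the constant term $\one_{\pi_0(i)\le k}$ are all sound.
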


We prove this theorem in \cref{sec:appendix:miss-hardness} using a reduction from counting satisfying assignments to positive CNF formulas. This problem is known to be $\sharpP$-hard~\cite{valiant79complexity}.
Before we continue, we fix some notation for the remainder of this section.
Let $X_1, \ldots, X_m$ be Boolean variables. A vector $\vect u \in \set{0, 1}^m$ defines an assignment $\assign{\vect u}\colon [m] \to \set{\mathsf{true}, \mathsf{false}}$ by interpreting $1$ as $\mathsf{true}$ and $0$ as $\mathsf{false}$. A set $C \subseteq [m]$ defines the assignment $\assign{C} = \assign{\mathbbm 1_C}$, where $\mathbbm 1_C \in \set{0, 1}^m$ is the incidence vector of $C$.  

\begin{example}
We illustrate how the satisfiability of a CNF formula can be modeled using the top-1 membership problem.
Let our formula be $(X_1 \vee X_2 \vee X_4) \wedge (X_1 \vee X_3) \wedge (X_2 \vee X_3 \vee X_4)$.
We construct the matrix shown in \Cref{fig:topk_hardness}, with columns corresponding to formula variables and tuples $\vect{t}_1, \vect{t}_2, \vect{t}_3$ corresponding to clauses.
A value of one indicates that the corresponding variable appears in the clause.
Now, let the column weights be $\{0,1\}$, representing whether a variable is $\mathsf{true}$ or $\mathsf{false}$, and let the ranking function be $\rankbyscore{\scoremax}{\asc}$.
By our construction, $\vect{t}_4$ is top-1 if and only if all clauses are satisfied.\qed
\end{example}

We are now ready to state our main tool for the proof of \cref{thm:top-k-hard}.

\begin{figure}
\begin{center}
\begin{tabular}[b]{c c} 
    \begin{tabular}{c| c c c c|} 
        \cline{2-5}
        $\color{DarkGreen} \vect u$ & $\color{DarkGreen} 1$ & $\color{DarkGreen} 0$ & $\color{DarkGreen} 0$ & $\color{DarkGreen} 1$ \\ 
        \Cline{2-5}{1.0pt}
        $\vect{t}_4$ & \cellcolorblue 0 & \cellcolorblue 0 & \cellcolorblue 0 & \cellcolorblue 0 \\
        $\vect{t}_1$ & 1 & 1 & 0 & 1 \\ 
        $\vect{t}_2$ & 1 & 0 & 1 & 0 \\
        $\vect{t}_3$ & 0 & 1 & 1 & 1 \\
        \cline{2-5}
    \end{tabular}
    &
    \begin{tabular}{c| c c c c|} 
        \cline{2-5}
        $\color{DarkGreen} \vect u$ & $\color{DarkGreen} 0$ & $\color{DarkGreen} 0$ & $\color{DarkGreen} 0$ & $\color{DarkGreen} 1$ \\ 
        \Cline{2-5}{1.0pt}
        $\vect{t}_2$ & 1 & 0 & 1 & 0 \\
        $\vect{t}_4$ & \cellcolorblue 0 & \cellcolorblue 0 & \cellcolorblue 0 & \cellcolorblue 0 \\
        $\vect{t}_1$ & 1 & 1 & 0 & 1 \\ 
        $\vect{t}_3$ & 0 & 1 & 1 & 1 \\
        \cline{2-5}
    \end{tabular}
\end{tabular}
\end{center}
\caption{\Cref{ex:max_topk}: Tuple $\vect{t}_4$ is top-1 under $\rankbyscore{\scoremax}{\asc}$ iff $\vect u$ gives a satisfying assignment for the CNF formula $(X_1 \vee X_2 \vee X_4) \allowbreak \wedge \allowbreak (X_1 \vee X_3) \allowbreak \wedge \allowbreak (X_2 \vee X_3 \vee X_4)$.
The left is a satisfying assignment, while the right one is non-satisfying.}
\label{fig:topk_hardness}
\end{figure}

\begin{restatable}{lemma}{lemtopkhardmatrix}\label{lem:top-k-hard-matrix}
    Let $\phi = \bigwedge_{i = 1}^\ell D_i$ be a positive CNF formula with variable set $X_1, \ldots, X_m$ and clauses $D_i = \bigvee_{j = 1}^{r_i} X_{j_i}$. Furthermore, let $k \in \mathbb N$. Then, there is a matrix $\Mat \in \M^{(\ell + k) \times m}$ with entries in $\set{0,1}$ and the following property: For each set $C$ of columns and each ranking function $r \in \set{\rankbyscore{\scoremax}{\asc}, \ranksum, \ranklex}$, we have
    $
    r(\restrict{\Mat}{C})(k + \ell) \leq  k$ if and only if the assignment $\assign{C}$ models $\phi$.
\end{restatable}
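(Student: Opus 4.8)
The plan is to give a single matrix construction that works uniformly for all three ranking functions, exploiting the fact that with $0/1$ entries and column \emph{restriction} (rather than weighting), each clause row is either \emph{active} (it strictly outscores, or lexicographically exceeds, an all-zero baseline) or \emph{dormant} (it ties that baseline). Concretely, I would build $\Mat \in \M^{(\ell+k)\times m}$ whose first $\ell$ rows are clause-incidence rows, setting $\Mat[i,j]=1$ iff $X_j$ occurs in $D_i$ and $0$ otherwise, and whose last $k$ rows $\ell+1,\dots,\ell+k$ are all-zero. The designated target row is the last one, row $k+\ell$; the $k-1$ all-zero rows above it serve as padding to calibrate the top-$k$ threshold.

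Next, for a fixed column set $C$ I would analyze $\restrict{\Mat}{C}$. The key observation is that clause $D_i$ is satisfied by $\assign{C}$ precisely when $C$ contains the index of some variable occurring in $D_i$, i.e., precisely when row $i$ of $\restrict{\Mat}{C}$ contains a $1$. I would verify that under this condition the clause row strictly beats the all-zero target row under each ranking: for $\ranksum$ its column sum is $\geq 1 > 0$; for $\rankbyscore{\scoremax}{\asc}$ its maximum is $1 > 0$; and for $\ranklex$ it is lexicographically larger than the zero vector, since the first differing coordinate is a $1$ against a $0$. Conversely, if $D_i$ is unsatisfied then row $i$ of $\restrict{\Mat}{C}$ is identically zero, so it ties the target row (and every padding row) under all three rankings. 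Thus, for every ranking function considered, the rows preceding the target are exactly the padding rows together with the unsatisfied-clause rows, resolved by the index-based tie-break.

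From this I would compute the target rank directly. Since no restricted row can score below $0$ (equivalently, lie strictly before the all-zero vector lexicographically), the only rows ahead of the target are the tied, lower-indexed ones: the $k-1$ padding rows (always) and the $u$ unsatisfied-clause rows. Hence $r(\restrict{\Mat}{C})(k+\ell) = (k-1) + u + 1 = k + u$, where $u$ is the number of clauses not satisfied by $\assign{C}$. Therefore $r(\restrict{\Mat}{C})(k+\ell) \leq k$ holds iff $u = 0$, i.e., iff every clause is satisfied, i.e., iff $\assign{C} \models \phi$, which is exactly the claim. The boundary case $C = \emptyset$ is consistent: the matrix then has no columns, the ranking defaults to $\piid$ by tie-breaking, the target rank is $k+\ell > k$, and the all-false assignment falsifies every positive nonempty clause.

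The one point requiring care, and the crux of why a uniform construction is possible, is the tie-breaking bookkeeping together with the choice of exactly $k-1$ padding rows: this padding makes the target land at rank exactly $k$ when $\phi$ is satisfied, so that a single unsatisfied clause pushes it strictly past position $k$. With fewer padding rows the target could remain inside the top-$k$ despite $u \geq 1$ when $k > 1$, breaking the equivalence. Beyond this calibration the argument is routine, and I do not anticipate any numerically heavy computation.
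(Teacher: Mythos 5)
Your construction and argument are correct and essentially identical to the paper's: the paper also uses the $\ell$ clause-incidence rows plus $k$ all-zero rows with the target placed at the maximal index $\ell+k$ (it merely puts the $k-1$ padding zero rows at the top rather than just above the target, which changes nothing since tie-breaking is by index and the target's index is largest either way). The key observations — that no row can precede the all-zero target strictly, that tied zero rows are exactly the padding rows plus the unsatisfied-clause rows, and hence that the target's rank is $k+u$ — match the paper's proof.
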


\begin{remark}
    Using \cite[Corollary~10]{JMLR:ArenasBarcelo23SHAP}, the previous lemma directly shows that we cannot approximate SHAP scores for $r \in \set{\rankbyscore{\scoremax}{\asc}, \ranksum, \ranklex}$ and $e = \peffectf{k}{top}$ unless $\RP = \NP$. 
\end{remark}

\subsection{Maximum Displacement Distance}\label{sec:hardness:md}

In this section, we prove that expectation computation is a hard problem if we measure the effect using the maximum displacement distance $\distmaxdispl(\pi_1, \pi_2) = \max_{1 \leq i \leq n} \size{\pi_1(i) - \pi_2(i)}$.

\begin{restatable}{theorem}{thmmdhard}\label{thm:md-hard}
For $e = \effmd$ and $r \in \set{\rankbyscore{\scoremax}{\asc}, \rankbyscore{\scoremax}{\desc}, \ranksum, \ranklex}$, the problem $\EXP{r}{e}$ is $\FP^{\sharpP}$-hard. This remains true if the entries of $\Mat$ are restricted to $\set{0, 1}$ and the distribution $\Pi$ is the uniform distribution on $\set{0,1}^m$.
\end{restatable}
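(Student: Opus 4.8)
The plan is to give a polynomial-time Turing reduction from counting the satisfying assignments of a positive CNF formula $\phi = \bigwedge_{i=1}^{\ell} D_i$ over variables $X_1,\dots,X_m$, a problem that is $\sharpP$-hard. As in \Cref{lem:top-k-hard-matrix}, I identify a weight vector $\vect u \in \set{0,1}^m$ with the assignment $\assign{\vect u}$ and exploit that, for a positive clause, ``the clause is violated'' means exactly that all of its variables are set to $0$; this is the event that the corresponding clause tuple (its incidence vector) receives the extreme score ($0$ for $\ranksum$ and $\rankbyscore{\scoremax}{\asc}$). The formula is falsified by $\vect u$ precisely when at least one clause tuple attains this extreme score. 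Since $\Pi$ is uniform on $\set{0,1}^m$, an exact value of $\expectation_{\vect u}[\effmd(r(\Mat\circ\vect u))]$ will let me read off the number of falsifying assignments, hence $\#\mathrm{SAT} = 2^m - \#\set{\vect u : \assign{\vect u}\not\models\phi}$.

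The core gadget, which I would isolate as a construction lemma, augments the $\ell$ clause tuples with a block of $N$ identical ``anchor'' tuples, each equal to the all-ones row $\vect 1$, placed at the \emph{lowest} indices so that they precede the clause tuples in the tie-breaking order. I first analyze $\rankbyscore{\scoremax}{\asc}$, where the picture is cleanest: with $\set{0,1}$ entries every score lies in $\set{0,1}$, so the ranking is two-level, the score-$0$ tuples (in index order) followed by the score-$1$ tuples (in index order). Under the all-ones weight every clause of a positive formula is satisfied and every tuple scores $1$, so the original order is $\pi_0 = \piid$. For a general $\vect u \neq \vect 0$, the top (score-$0$) group consists exactly of the tuples of the violated clauses, while an anchor never leaves the bottom group (it scores $0$ only under $\vect u = \vect 0$). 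Consequently, when $\vect u$ satisfies $\phi$ we get $\pi = \piid = \pi_0$ and $\effmd = 0$, whereas when $\vect u$ falsifies $\phi$ some clause tuple, whose original position lies below the entire anchor block, jumps into the top group and is displaced by roughly $N$.

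The key point for exact computation is that this dominant displacement scales linearly and controllably with the anchor count $N$. I would prove that for every fixed $\vect u \neq \vect 0$ with $\assign{\vect u}\not\models\phi$ one has $\effmd(r(\Mat\circ\vect u)) = N + c(\vect u)$ for a correction term $c(\vect u)$ with $\size{c(\vect u)}\le \ell$ that is \emph{independent of $N$} (it records only the internal reshuffling among the $O(\ell)$ clause tuples and the shift of the anchors by the number of violated clauses), and that this clause-tuple displacement dominates all others once $N > 2\ell$. Since $\effmd = 0$ on the satisfying assignments and on $\vect u = \vect 0$ (which always yields $\pi = \piid = \pi_0$), the expectation becomes the affine function $\expectation_{\vect u}[\effmd] = \tfrac{1}{2^m}(A\,N + B)$, where $A = \#\set{\vect u\neq\vect 0 : \assign{\vect u}\not\models\phi}$ and $B=\sum_{\vect u} c(\vect u)$ are independent of $N$. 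Querying the oracle for two values $N_1,N_2 > 2\ell$ yields two linear equations from which I solve for $A$, and then $\#\mathrm{SAT} = 2^m - (A+1)$, the $+1$ accounting for the always-falsifying assignment $\vect 0$. This is a polynomial-time Turing reduction and thus establishes $\FP^{\sharpP}$-hardness for $\rankbyscore{\scoremax}{\asc}$.

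The main obstacle is extending this to the other three ranking functions, since the convenient identity $\pi_0 = \piid$ and the two-level structure are special to $\rankbyscore{\scoremax}{\asc}$. For $\ranksum$ the scores are the true-variable counts $\vect u\cdot\vect t$, so even $\pi_0$ is sorted by row sums rather than by index; I would adapt the anchor block (padding clause tuples to a common length if needed) so that violated clauses still attain the globally minimal score $0$ and are pushed across the anchor block, recomputing $\pi_0$ and the correction term while preserving affinity in $N$. For $\ranklex$ I would arrange that a violated clause decides the leftmost differing coordinate, so that its tuple again moves to an extreme position, and for $\rankbyscore{\scoremax}{\desc}$ I would use the symmetric gadget in which the satisfied clauses (score $1$) are the movers, reversing the roles of the two groups. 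In each case the argument reduces to two claims: (i) a designated set of tuples is displaced by $\Theta(N)$ exactly on the falsifying assignments, and (ii) this displacement dominates, so the expectation is affine in $N$ with slope proportional to the number of falsifying assignments. I expect the per-function verification of (i) and (ii), in particular controlling $\pi_0$ and the $N$-independence of the correction term for $\ranksum$ and $\ranklex$, to be the most delicate part, mirroring the multi-function bookkeeping already needed for \Cref{lem:top-k-hard-matrix}.
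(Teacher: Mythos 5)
Your argument for $\rankbyscore{\scoremax}{\asc}$ is correct and is a close cousin of the paper's: the paper also pads the clause-incidence tuples with a block of identical ``anchor'' rows and uses \emph{two} oracle calls (to matrices with $\ell+1$ and $\ell+2$ anchors) whose difference isolates the number of falsifying assignments, which is essentially your two-point linear system in $N$. The genuine gap is in the other three ranking functions, which you defer to sketches that do not survive scrutiny as stated. The root cause is your choice of \emph{all-ones} anchors placed at the \emph{lowest} indices. This works for $\rankbyscore{\scoremax}{\asc}$ only because, over $\set{0,1}$ entries, ``all ones'' and ``has at least one surviving one'' give the same $\scoremax$ score, so every non-violated row ties with the anchors and the violated rows drop strictly below them. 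Under $\ranksum$ (ascending) the all-ones anchors attain the \emph{maximum} score $\sum_j u_j$ and sit at the \emph{bottom} of both $\pi_0$ and the perturbed ranking; the violated clause tuples, which occupy positions within the first $\ell$ slots of $\pi_0$, then move to the top without ever crossing the anchor block, so their displacement is $O(\ell)$ rather than $\Theta(N)$. Worse, a \emph{satisfied} clause tuple whose surviving variables coincide with all of $\supp(\vect u)$ ties with the anchors and, losing the index tie-break, is pushed past all $N$ of them -- a $\Theta(N)$ displacement on a satisfying event, which corrupts the slope $A$ of your affine function. A similar failure occurs for $\ranklex$: masked clause tuples are lexicographically no larger than the masked all-ones row, so again the anchors land at the bottom and the violated tuples never cross them. ``Padding clause tuples to a common length'' does not repair this.

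The fix -- and the paper's actual construction -- is to make the anchors \emph{all-zero} rows and to place them at indices \emph{above} the clause tuples. Then for every ascending ranking function in the list the anchors have the globally minimal score and occupy the top of $\pi_0$; a violated clause tuple becomes all-zero, ties with the anchors, and \emph{wins} the tie-break by virtue of its lower index, so it jumps from a position below the entire anchor block to the very top, giving the desired $\Theta(N)$ displacement exactly on falsifying assignments, while anchors and satisfied tuples move by at most $\ell$. (One must also fix the initial ordering of the clause tuples to agree with $\pi_0$ so that the correction term is controlled; the paper does this explicitly.) Your treatment of $\rankbyscore{\scoremax}{\desc}$ via a ``symmetric gadget'' is likewise unverified; the paper instead observes that $\effmd$ is invariant under reversing both permutations, reducing the descending case to the ascending one. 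As submitted, the proof establishes hardness for one of the four ranking functions and leaves the remaining three with a gadget that needs to be replaced, not merely adapted.
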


We prove this theorem in \cref{sec:appendix:miss-hardness}  via a reduction from counting satisfying assignments to positive CNF formulas. The proof is similar to the one of \cref{thm:top-k-hard}, but with a small twist. Since $\distmaxdispl$ is a global measure, we cannot define a matrix $\Mat$ where we separate satisfying assignments from non-satisfying ones. Instead, we will define two matrices $\Mat_1$ and $\Mat_2$ in a way that the difference of the effect functions allows this distinction.

\begin{restatable}{lemma}{lemmdhardmatrices}\label{lem:md-hard-matrices}
     Let $\phi = \bigwedge_{i = 1}^\ell D_i$ be a positive CNF formula with variable set $X_1, \ldots, X_m$ and clauses $D_i = \bigvee_{j = 1}^{r_i} X_{j_i}$. Then, there exist two matrices $\Mat_1 \in \M^{(2\ell + 1) \times m}$ and $\Mat_2 \in \M^{(2\ell + 2) \times m}$ with entries in $\set{0,1}$ and the following property: For each set $C$ of columns and each ranking function $r \in \set{\rankbyscore{\scoremax}{\asc}, \ranksum, \ranklex}$, we have
    \begin{equation}\label{eqn:diff-md}
        \distmaxdispl\big(r(\Mat_2), r(\restrict{\Mat_2}{C})\big) - \distmaxdispl\big(r(\Mat_1), r(\restrict{\Mat_1}{C})\big) = \begin{cases}
        0,\text{ if } \assign{C}\models \phi \\
        1,\text{ otherwise.}
        \end{cases}
    \end{equation}
\end{restatable}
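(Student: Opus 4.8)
The plan is to reuse the monotone clause encoding that underlies \Cref{lem:top-k-hard-matrix} and combine it with the two-matrix difference already announced in the surrounding text. In that encoding each clause $D_i$ becomes a $0/1$ row carrying a $1$ exactly in the columns of its variables, and an all-zero ``padding'' row is strictly beaten (resp.\ tied) by a restricted clause row precisely when the clause is satisfied (resp.\ unsatisfied) by $\assign{C}$. The point I would re-establish is that this behavior is \emph{uniform} across the three ranking functions: restricted to $C$, a clause row has $\scoremax$-score $1$, $\scoresum$-score $\ge 1$, and is $\ranklex$-above an all-zero row iff the clause contains a column of $C$, i.e.\ iff $\assign{C}\models D_i$. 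Hence, in each of $\rankbyscore{\scoremax}{\asc}$, $\ranksum$, $\ranklex$, the all-zero rows together with the unsatisfied clause rows form the same ``low'' block, and exactly $u_C$ clause rows migrate into this block when passing from $\Mat$ to $\restrict{\Mat}{C}$, where $u_C$ is the number of clauses falsified by $\assign{C}$.

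Building on this, I would let $\Mat_1$ and $\Mat_2$ share the same clause rows over the same $m$ columns together with the same block of padding rows, and let $\Mat_2$ carry one extra designated ``sentinel'' row (explaining the counts $2\ell+1$ versus $2\ell+2$). Since the shared clause/padding part is identical in the two matrices, every shared row is supposed to receive the same displacement in $\distmaxdispl\big(r(\Mat_1),r(\restrict{\Mat_1}{C})\big)$ as in $\distmaxdispl\big(r(\Mat_2),r(\restrict{\Mat_2}{C})\big)$; write $M(C)$ for the maximum of these shared displacements. The sentinel is to be positioned at the boundary between the low block and the strictly-positive block so that, measured against the full ranking, its displacement equals $M(C)$ when $\assign{C}\models\phi$ and $M(C)+1$ when $\assign{C}\not\models\phi$: one additional falsified clause should push the sentinel exactly one step farther across the block boundary than any shared row travels. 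Taking the global maximum then yields $\distmaxdispl\big(r(\Mat_1),r(\restrict{\Mat_1}{C})\big)=M(C)$ and $\distmaxdispl\big(r(\Mat_2),r(\restrict{\Mat_2}{C})\big)=M(C)+\mathbbm 1_{\assign{C}\not\models\phi}$, which is the difference claimed in the lemma.

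Three verifications remain. First, I would check the sentinel placement separately under each of the three ranking functions, invoking the uniformity above so that a single construction works for all of them simultaneously. Second, I would treat the boundary cases, in particular $C=\emptyset$: there $\restrict{\Mat}{\emptyset}$ has all rows equal so $r$ returns $\piid$, and since a positive clause is falsified by the all-false assignment the difference should indeed be $1$; the same care applies to the ``all clauses unsatisfied'' configuration, where the restricted ranking can collapse to an index-ordering and must still be separated by the sentinel. Third, I would confirm the row counts against the numbers of clause rows, padding rows, and the single sentinel.

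The main obstacle is the second step: $\effmd$ is a \emph{global} maximum, so—unlike the top-$k$ setting where the designated tuple is read off in isolation—I cannot simply read the satisfaction bit from one row. The delicate requirement is that the sentinel's displacement track the shared maximum $M(C)$ \emph{up to exactly} the single extra step, never overshooting by two or more, even though $u_C$ ranges over $\set{0,\dots,\ell}$ and the clause-row displacements are themselves graded in $u_C$ (a naive all-zero designated row has displacement $u_C$, which is a clean satisfied/unsatisfied indicator but is neither binary nor dominant). Pinning the sentinel's two candidate positions so that they straddle the block boundary by precisely one step, while using the $\Theta(\ell)$ padding rows to regiment the remaining displacements so that no migrating clause row ever moves strictly farther than the sentinel, is where I expect essentially all of the work to go; it is also the reason the construction needs the extra padding rows rather than a bare clause matrix.
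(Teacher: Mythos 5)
Your first paragraph --- the clause-incidence rows behaving uniformly under $\rankbyscore{\scoremax}{\asc}$, $\ranksum$, and $\ranklex$, and the need for $\Theta(\ell)$ all-zero padding rows --- is correct and matches the paper's setup. The gap is in the mechanism you propose for extracting the $+1$, which you yourself flag as the place where ``essentially all of the work'' remains; that work is not optional, and the specific mechanism you sketch points in the wrong direction. You posit that every shared row receives the \emph{same} displacement in $\distmaxdispl\big(r(\Mat_1),r(\restrict{\Mat_1}{C})\big)$ as in $\distmaxdispl\big(r(\Mat_2),r(\restrict{\Mat_2}{C})\big)$, and that a single extra ``sentinel'' row must then have displacement exactly $M(C)+1$ whenever $\assign{C}\nvDash\phi$ and at most $M(C)$ otherwise. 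Since $M(C)$ varies with $C$ (it depends on which clauses are falsified and on their indices), a single fixed row whose displacement tracks $M(C)+1$ exactly, never overshooting by two, is very hard to arrange, and you give no construction achieving it.

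The paper's construction avoids this entirely: the extra row of $\Mat_2$ is just one more all-zero padding row, and the \emph{falsified clause rows themselves} carry the difference. With $\ell+k$ zero rows in $\Mat_k$ ($k\in\set{1,2}$) and the clause rows assumed to be indexed in their ranking order, a falsified clause row $i_j$ sits at position $\ell+k+i_j$ in $r(\Mat_k)$ (behind all the zero rows) but jumps to position $j\le s$ in $r(\restrict{\Mat_k}{C})$, a position that is independent of $k$; its displacement $\ell+k+i_j-j\ge\ell+k$ therefore grows by exactly one from $\Mat_1$ to $\Mat_2$, while the padding rows and satisfied clause rows are shown to have displacement at most $\ell$ in both matrices. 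So the maximum is always attained by a falsified clause row, and the dominant displacement itself increases by one --- which is precisely the behaviour your ``shared displacements are equal, sentinel carries the extra step'' framing rules out. As written, the proposal is a plan whose central step is unexecuted, and executing it along the sentinel route would be substantially harder than the route the paper takes.
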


\begin{proof}[Proof sketch of \cref{lem:md-hard-matrices}]
The construction of the matrices is similar to \Cref{ex:max_topk,lem:top-k-hard-matrix} with more copies of the all-zero tuple, which occupy the top positions whenever the formula is satisfied, yielding zero displacement.
Matrix $\Mat_2$ contains one additional copy of the all-zero tuple compared to $\Mat_1$,
so the two matrices behave similarly, but the maximum displacement in $\Mat_2$ is higher than that of $\Mat_1$ by exactly 1 whenever the formula is not satisfied.  
\end{proof}

\section{Exact Computation of the Shapley Value}\label{sec:shapley}

In this section, we turn our attention to the importance of columns via the Shapley value instead of the importance of scoring parameters via the SHAP score. We will first show that all our algorithms from \cref{sec:algorithms} for SHAP score computation can be used to determine Shapley values and then complete the picture by extending our results from \cref{sec:hardness}.

\begin{restatable}{theorem}{thmcomplexityshapley}\label{thm:complexity-shapley}
    For the ranking functions $r$ and effect functions $e$ that we introduced in \cref{sec:problems}, 
    computational complexity of $\PShapley{r}{e}$ is the same as the computational complexity of $\PSHAP{r}{e}$ given in \cref{table:complexity}. 
\end{restatable}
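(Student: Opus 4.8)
The plan is to prove the two directions separately and tie each to the results already established for $\EXP{r}{e}$ and $\PSHAP{r}{e}$. The bridge between the two games is the identity, valid for every ranking function we consider, that \emph{masking a column equals giving it weight zero}: writing $\one_C\in\set{0,1}^m$ for the incidence vector of $C$, I claim $r(\restrict{\Mat}{C})=r(\Mat\circ\one_C)$. For $\ranksum$ this is immediate (a zero-weighted column adds $0$ to every row), and for $\ranklex$ a zero-weighted column is a universal tie and hence transparent, exactly like an absent column, so even the tie-breaking by row index agrees. For $\scoremax$/$\scoremin$ the introduced zeros could interfere with the max/min, so I would first note that adding a common constant to all entries shifts every restricted score uniformly and therefore preserves all the rankings involved, including $\pi_0=r(\Mat)$; thus I may assume non-negative entries, after which the zeros are dominated and the identity holds as well. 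Consequently $\nu(C)=-e(r(\restrict{\Mat}{C}))=-e(r(\Mat\circ\one_C))$ in the Shapley game.

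For the upper bounds I would reduce $\PShapley{r}{e}$ to $\EXP{r}{e}$ and invoke the algorithms of \cref{sec:algorithms} together with \cref{thm:shap-expectation}. Using $\frac{s!(m-s-1)!}{m!}=\int_0^1 q^{s}(1-q)^{m-1-s}\,dq$, the Shapley value rewrites as $\Shapleyvalue\angs{r,e}(j,\Mat)=\int_0^1\Delta_j(q)\,dq$, where $\Delta_j(q)=\expectation_C[\nu(C\cup\set{j})-\nu(C)]$ is the expected marginal contribution of column $j$ when every other column is included independently with probability $q$. By the masking identity, $\Delta_j(q)$ is the difference of two quantities of the form $\expectation_{\vect u\sim\Pi}[-e(r(\Mat\circ\vect u))]$: one with $\Pi_j$ the point mass at $1$ and one with $\Pi_j$ the point mass at $0$, and all other $\Pi_\ell$ the Bernoulli distribution with $\prob[u_\ell=1]=q$. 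Each is a single call to $\EXP{r}{e}$. Since $\Delta_j$ is a polynomial in $q$ of degree at most $m-1$, I would evaluate it at $m$ distinct rationals in $(0,1)$, recover it by interpolation, and integrate over $[0,1]$. This is a polynomial-time Turing reduction, so every tractable entry of \cref{table:complexity} carries over to $\PShapley{r}{e}$.

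For the lower bounds I would reuse the gadget matrices of \cref{lem:sum-hard-matrix,lem:top-k-hard-matrix,lem:md-hard-matrices}, which are already phrased in terms of $\restrict{\Mat}{C}$ and characterise $r(\restrict{\Mat}{C})$ by whether $C$ encodes a ``solution'' (a knapsack subset or a satisfying assignment). On such an instance $\nu(C)$ is affine in $\one_{C\text{ is a solution}}$, its constant part contributes nothing to any Shapley value, and choosing the target column $j$ appropriately (e.g.\ the special column $\ell+1$ in \cref{lem:sum-hard-matrix}) makes the marginal $\nu(C\cup\set j)-\nu(C)$ exactly the indicator that adding $j$ turns $C$ into a solution. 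Hence $\Shapleyvalue\angs{r,e}(j,\Mat)$ equals, up to a known nonzero factor, the size-stratified count $\sum_{s}\frac{s!(m-s-1)!}{m!}N_s$, where $N_s$ is the number of such pivotal coalitions of size $s$. The remaining task is to extract the \emph{total} count $\sum_s N_s$ — the \#P-hard quantity being reduced from — out of this single Beta-weighted combination.

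This extraction is the step I expect to be the main obstacle. A single Shapley query reveals only one weighted combination of the $N_s$, and a small two-element instance already shows that this does not determine $\sum_s N_s$; so I need a polynomial family of instances whose Shapley values form an invertible linear system in the $N_s$. The subtlety is that the obvious paddings fail: adding all-zero ``dummy'' columns (or knapsack items of weight $0$) produces null players, and a direct computation using $\sum_d\binom{p}{d}x^{d}(1-x)^{p-d}=1$ shows the target's Shapley value is then left unchanged. The family must therefore perturb the solution structure in a controlled, non-null way — for instance by sweeping the knapsack threshold, or by reducing from a cardinality-restricted counting problem so that all solutions share one size $s^\star$ and the weighted sum collapses to the single term $\frac{s^\star!(m-s^\star-1)!}{m!}N_{s^\star}$. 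Proving non-singularity of the resulting coefficient matrix, by a Vandermonde/Cauchy-type argument, while keeping the entries in $\set{0,1}$ and the distribution as required, is where I expect the work to concentrate; the tractable directions, by contrast, follow routinely from the integral-plus-interpolation reduction above.
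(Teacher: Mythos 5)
Your upper-bound half is essentially the paper's argument: the masking identity $r(\restrict{\Mat}{C})=r(\Mat\circ\one_C)$ (with the additive shift to make $\scoremax$ entries non-negative) is exactly the hypothesis of the interpolation result the paper imports from Gilad et al.\ (\cref{thm:Shapley-from-SHAP}), and your Beta-integral/Bernoulli formulation is just a self-contained rendering of the same $m$-point polynomial interpolation. That direction is fine.

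The lower-bound half, however, has a genuine gap, and it is precisely the one you flag yourself: you never actually construct the family of instances whose Shapley values form an invertible system in the stratified counts $N_s$, and the two candidate fixes you sketch do not obviously work. Sweeping the knapsack threshold $d$ changes the counts $N_s$ themselves rather than the Beta coefficients $\frac{s!(m-s-1)!}{m!}$, so the resulting queries are weighted sums of \emph{different} unknowns and do not assemble into a linear system in the quantities you want; and reducing from a cardinality-restricted counting problem would require you to first establish \#P-hardness of that restricted problem, an extra reduction you do not supply and which the paper's gadget lemmas (\cref{lem:sum-hard-matrix,lem:top-k-hard-matrix}) are not set up for. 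The paper's resolution is different and is the actual content of \cref{lem:game-knapsack,lem:game-cnf}: pad the instance with $j$ \emph{non-null} ``blocking'' players --- knapsack items of weight $d+1$, or a fresh variable added to every clause --- whose presence can only destroy validity. This keeps every stratified count $s_k$ (resp.\ $\bar s_k$) unchanged while varying only the factorial coefficients with $j$, producing the system with matrix $\bigl((\ell+i-k)!\bigr)_{i,k}$, whose invertibility is cited from Bacher's determinant computation rather than a Vandermonde argument. Your observation that all-zero padding fails because it creates null players is correct and is exactly why the padding must be ``poisonous'' rather than inert; but diagnosing the obstacle is not the same as overcoming it, so as written the hardness direction of the theorem is not proved.
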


\subsection{Algorithms}
Gilad et al.~\cite{gilad2024importanceparametersdatabasequeries} give a simple algorithm based on interpolation that computes Shapley values in polynomial time when having access to an oracle to a special instance of SHAP score computation called $\mathsf{binarySHAP}$. Translated into our setting, their result is the following:

\begin{theorem}[\protect{\cite[Theorem~6.15]{gilad2024importanceparametersdatabasequeries}}]\label{thm:Shapley-from-SHAP}
    Let $r$ be a ranking function and $e$ be an effect function with the property that for each set $C$ of columns, we have $e(r(\restrict{\Mat}{C})) = e(r(\Mat \circ \mathbbm 1_C))$. Then, we can solve $\PShapley{r}{e}$ in polynomial time with $m$ oracle calls to $\PSHAP{r}{e}$.
\end{theorem}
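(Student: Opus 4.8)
The plan is to realize the target column-Shapley value as the value at $p=0$ of a polynomial family of SHAP scores and to recover it by polynomial interpolation from \emph{admissible} interior evaluation points. Fix the reference weight vector $\vect w = \mathbbm{1} = (1,\dots,1)$, and for $p \in (0,1]$ let $\Pi^{(p)}$ be the fully factorized distribution in which each column weight independently takes value $1$ with probability $p$ and value $0$ with probability $1-p$. Since $\vect w = \mathbbm{1}$ has probability $p^m > 0$ under $\Pi^{(p)}$, the triple $(\Mat,\mathbbm{1},\Pi^{(p)})$ is a legitimate input to the SHAP oracle, and I write $h(p) \defeq \SHAPscore\angs{r,e}(j,\Mat,\mathbbm{1},\Pi^{(p)})$.

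First I would expand the SHAP game underlying $h(p)$. With $f(\vect u) \defeq -e(r(\Mat\circ\vect u))$, conditioning on $u_\ell = 1$ for $\ell \in C$ forces $\supp(\vect u) = S$ for a random superset $S \supseteq C$, so that the coalition value is
\begin{equation*}
\nu_p(C) \;=\; \expectation_{\vect u\sim\Pi^{(p)}}\!\big[f(\vect u)\mid u_\ell = 1\ \forall \ell\in C\big]
\;=\; \sum_{S:\,C\subseteq S\subseteq [m]} p^{\,|S|-|C|}(1-p)^{\,m-|S|}\, f(\mathbbm{1}_S).
\end{equation*}
Here $f(\mathbbm{1}_S) = -e(r(\Mat\circ\mathbbm{1}_S))$ is a constant independent of $p$, so each $\nu_p(C)$ is a polynomial in $p$ of degree at most $m-|C|$. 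Since the Shapley value \eqref{eq:shapley} is a fixed linear functional of the coalition values, $h(p) = \Shapleyvalue(j,\nu_p)$ is itself a polynomial in $p$ of degree at most $m$.

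The two boundary values are the crux. At $p = 0$ the conditioned distribution degenerates to the point mass on $\mathbbm{1}_C$, so $\nu_0(C) = f(\mathbbm{1}_C) = -e(r(\Mat\circ\mathbbm{1}_C))$; invoking the hypothesis $e(r(\Mat\circ\mathbbm{1}_C)) = e(r(\restrict{\Mat}{C}))$ gives $\nu_0(C) = -e(r(\restrict{\Mat}{C})) = \nu(C)$, the column-Shapley coalition value. Hence $h(0) = \Shapleyvalue(j,\nu) = \Shapleyvalue\angs{r,e}(j,\Mat)$ is exactly the quantity we seek. At $p = 1$ all weights equal $1$ almost surely, so $\nu_1$ is the constant game $C \mapsto f(\mathbbm{1})$, whose Shapley value vanishes; thus $h(1) = 0$ is known without any oracle call. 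This is precisely where the compatibility hypothesis is indispensable: it is what turns the $p=0$ evaluation of the SHAP family into the masking semantics of the column game.

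It remains to interpolate. I would query the oracle at the $m$ distinct admissible points $p_i = i/(m{+}1)$, $i = 1,\dots,m$, to obtain $h(p_i)$; together with the cost-free value $h(1)=0$ these are $m+1$ evaluations of a degree-$\le m$ polynomial, which determine it uniquely. Lagrange interpolation over these rationals followed by evaluation at $p=0$ yields $h(0) = \Shapleyvalue\angs{r,e}(j,\Mat)$, and all arithmetic is over rationals of polynomially bounded bit-length, so the whole procedure runs in polynomial time and uses exactly $m$ oracle calls. The main obstacle is the bookkeeping showing that $h$ is a polynomial of the claimed degree and that $h(0)$ is the target value: polynomiality and the degree bound follow from the displayed expansion together with the linearity of the Shapley value, while the identification $h(0)=\Shapleyvalue\angs{r,e}(j,\Mat)$ is exactly the step that consumes the stated compatibility condition on $r$ and $e$.
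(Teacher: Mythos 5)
Your proposal is correct and takes essentially the same route as the proof the paper relies on for this statement (it is cited from Gilad et al., whose algorithm is exactly an interpolation scheme over SHAP instances with binary weight distributions, i.e., $\mathsf{binarySHAP}$): you realize the column Shapley value as the value at $p=0$ of the degree-$\le m$ polynomial $p \mapsto \SHAPscore\angs{r,e}(j,\Mat,\mathbbm{1},\Pi^{(p)})$, invoke the compatibility hypothesis precisely to identify $\nu_0(C)=-e(r(\restrict{\Mat}{C}))$, and recover $h(0)$ by Lagrange interpolation, with the observation $h(1)=0$ keeping the count at exactly $m$ oracle calls.
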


While the condition of the theorem is always true for ranking by $\scoresum$ and $\lex$, it does not necessarily hold for $\scoremax$ ranking if $\Mat$ contains negative numbers. Fortunately, we can make $\Mat$ non-negative by adding the minimal value in $\Mat$ to each entry without changing the rankings of each set of columns. Hence, we can use the previous theorem to show that all our algorithms carry other to Shapley value computation.

\subsection{Intractable Cases}\label{sec:shapley-hard}

To show hardness results from \cref{sec:hardness}, we reduced the counting knapsack problem (\cref{lem:sum-hard-matrix}) and the problem of counting satisfying assignments to positive CNF formulas (\cref{lem:md-hard-matrices,lem:top-k-hard-matrix}) to SHAP score computation via the equivalence to expectation computation (\cref{thm:shap-expectation}). Since this equivalence does not hold for Shapley value computation in general, we need another way approach to extend our results. Instead, we will define two auxiliary games and show hardness of Shapley value computation for each of them. The proofs are given in \cref{sec:appendix:shapley}. They follow the technique that is used for most hardness proofs of Shapley value computation~\cite{DBLP:journals/lmcs/LivshitsBKS21, DBLP:conf/icdt/GroheK0S24}: For a given instance to a game, define a set of related instances and use their Shapley values to count valid knapsack sets or satisfying assignments, respectively, via interpolation.

The first game is the \emph{knapsack game}: For natural numbers $b_1, \ldots, b_\ell$ and $b_{\ell + 1} = d$ given in binary, consider the game with player set $[\ell + 1]$ and valuation function 
\[
\nuknap(C) \defeq \begin{cases}
    1 & \text{ if } \ell + 1 \in C \text{ and } \sum_{i \in C\setminus\set{\ell}} b_i \leq d \,\mbox{;} \\
    0 & \text{ otherwise.}    
\end{cases} 
\]

\begin{restatable}{lemma}{lemknapsackgame}\label{lem:game-knapsack}
   Shapley value computation for the players of the knapsack game is $\FP^{\sharpP}$-hard.
\end{restatable}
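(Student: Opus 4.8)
The plan is to prove \fpsharpp-hardness of Shapley value computation for the knapsack game by a Turing reduction from the counting knapsack problem, which is \sharpP-hard. The core idea follows the standard interpolation technique: given an instance $b_1, \ldots, b_\ell, d$ of counting knapsack, I want to recover the number $N_{\leq d} \defeq \size{\set{S \subseteq [\ell] : \sum_{i \in S} b_i \leq d}}$ from a polynomial number of Shapley value computations on related knapsack games. First I would observe the structure of $\nuknap$: since $\nuknap(C) = 1$ exactly when $\ell + 1 \in C$ and the players in $C \cap [\ell]$ form a feasible knapsack set, the marginal contribution $\nuknap(C \cup \set{\ell+1}) - \nuknap(C)$ for $C \subseteq [\ell]$ equals $\one_{\sum_{i \in C} b_i \leq d}$. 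Plugging this into the Shapley formula \eqref{eq:shapley} for the player $\ell+1$ gives
\[
\Shapleyvalue(\ell+1, \nuknap) = \sum_{C \subseteq [\ell]} \frac{|C|!(\ell - |C|)!}{(\ell+1)!} \cdot \one_{\sum_{i \in C} b_i \leq d},
\]
which groups the feasible sets by their cardinality.

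The key obstacle is that a single Shapley value only yields one weighted sum of the counts $N_s^{(d)} \defeq \size{\set{C \subseteq [\ell] : |C| = s, \sum_{i\in C} b_i \leq d}}$ across all sizes $s = 0, \ldots, \ell$, whereas I ultimately want $N_{\leq d} = \sum_s N_s^{(d)}$. The plan to disentangle the sizes is to introduce a parameterized family of games: for a parameter $t \in \set{0,1,\dots,\ell}$ (or using dummy players), I would construct modified instances whose Shapley values produce $\ell+1$ linearly independent equations in the unknowns $N_0^{(d)}, \dots, N_\ell^{(d)}$. A clean way to do this is to add $t$ fresh ``padding'' players that are always feasible (e.g.\ with weight $0$), so that each feasible set of the original instance can be extended by any subset of the padding players; this shifts the cardinality-dependent coefficients in a controlled, invertible (Vandermonde-like) manner as $t$ ranges over $0, \dots, \ell$. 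Solving the resulting linear system recovers every $N_s^{(d)}$, and summing gives $N_{\leq d}$.

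The main technical step I expect to be delicate is verifying that the coefficient matrix arising from varying $t$ is nonsingular, so that the system is uniquely solvable; this is the usual crux of interpolation-based Shapley hardness proofs and typically reduces to a Vandermonde or binomial-coefficient determinant argument. I would also need to check that the whole reduction is polynomial: each modified game has at most $\ell + 1 + \ell$ players and binary-encoded weights, the number of oracle calls is $O(\ell)$, and the linear algebra to invert the system and read off $N_{\leq d}$ runs in polynomial time. Since the technique for most Shapley hardness proofs is exactly this (as noted in the excerpt, following \cite{DBLP:journals/lmcs/LivshitsBKS21, DBLP:conf/icdt/GroheK0S24}), I would lean on that template, with the only instance-specific work being the choice of padding players that keeps $\nuknap$ well-defined (feasibility with respect to the same bound $d$) while cleanly separating the cardinality classes.
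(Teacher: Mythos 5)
Your setup is exactly the paper's: reduce from counting knapsack, note that the marginal contribution of player $\ell+1$ to a coalition $C\subseteq[\ell]$ is $\one_{\sum_{i\in C}b_i\leq d}$, write $\Shapleyvalue(\ell+1,\nuknap)=\sum_{k=0}^{\ell}\frac{k!(\ell-k)!}{(\ell+1)!}\,s_k$ where $s_k$ counts feasible sets of size $k$, and recover the $s_k$ by interpolation over a family of padded instances. The gap is in your concrete choice of padding. Players of weight $0$ that are ``always feasible'' are \emph{null players} of $\nuknap$: adding such a player to any coalition changes neither the feasibility of $C\cap[\ell]$ nor whether $\ell+1\in C$, so its marginal contribution is always $0$. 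Consequently the Shapley value of player $\ell+1$ is \emph{identical} for every number $t$ of padding players (directly: in a uniformly random permutation of all $\ell+1+t$ players, the relative order of the original players is still uniform, and the marginal contribution of $\ell+1$ depends only on which original items precede it). One can also see this coefficient-wise: the weight of $s_k$ in the padded instance is $\frac{1}{\ell+t+1}\sum_{i=0}^{t}\binom{t}{i}\big/\binom{\ell+t}{k+i}=\frac{k!(\ell-k)!}{(\ell+1)!}$, independent of $t$. So your linear system has all rows equal and rank $1$; the nonsingularity you correctly flagged as the delicate step in fact fails for this construction.

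The paper's fix is to pad with $j$ \emph{infeasible} items of weight $d+1$. These can never belong to a feasible set, so the counts $s_k$ are unchanged, yet they are not null players (they can destroy the feasibility of a coalition containing $\ell+1$), so the combinatorial weights do shift: the $j$-th instance gives $\sum_{k=0}^{\ell}\frac{k!(\ell+j-k)!}{(\ell+j+1)!}\,s_k$. Varying $j=0,\dots,\ell$ yields the factorial matrix $\big((\ell+j-k)!\big)_{j,k}$ acting on the unknowns $k!\,s_k$, whose invertibility the paper cites from Bacher; solving the system and returning $s_0+\dots+s_\ell$ completes the reduction. In short: flip your padding from ``always feasible'' to ``never feasible'' and the rest of your argument goes through as intended.
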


The second game is the \emph{positive CNF game}: Given a positive CNF formula $\phi$ with variable set $X_1,\ldots,X_m$, consider the game with player set $[m]$ and valuation function
\[
\nucnf(C) \defeq \begin{cases}
    1 & \text{ if } \assign{C} \models \phi\,; \\
    0 & \text{ otherwise.}    
\end{cases} 
\]

\begin{lemma}\label{lem:game-cnf}
   Shapley value computation for the players of the positive CNF game is $\FP^{\sharpP}$-hard.
\end{lemma}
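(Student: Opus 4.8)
The plan is to give a polynomial-time Turing reduction from the problem of counting the satisfying assignments of a positive CNF formula—known to be $\sharpP$-hard—to Shapley value computation in the positive CNF game. Given $\phi=\bigwedge_{i=1}^\ell D_i$ over variables $X_1,\dots,X_m$, I would build, for each $r\in\set{1,\dots,m+1}$, an auxiliary positive CNF formula $\phi_r$ by introducing $r$ fresh variables $z_1,\dots,z_r$ and inserting all of them into every clause, i.e.\ $\phi_r=\bigwedge_{i=1}^\ell(D_i\vee z_1\vee\cdots\vee z_r)$ over player set $[m]\cup\set{z_1,\dots,z_r}$. Each $\phi_r$ is again a positive CNF formula, so $\nucnf$ for $\phi_r$ is a legitimate instance of the game, and I would query the oracle for the Shapley value $V_r\defeq\Shapleyvalue(z_1,\nucnf)$ of the fresh variable $z_1$.

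The key step is to compute $V_r$ in closed form. I would use the probabilistic (multilinear-extension) representation $\Shapleyvalue(p,\nu)=\int_0^1\expectation_{C}[\nu(C\cup\set p)-\nu(C)]\,dq$, where $C$ contains each other player independently with probability $q$; this follows from \eqref{eq:shapley} because $\int_0^1 q^{|C|}(1-q)^{|P|-1-|C|}\,dq$ equals exactly the Shapley coefficient $\frac{|C|!(|P|-|C|-1)!}{|P|!}$. Since $z_1$ occurs in every clause, $\nu(C\cup\set{z_1})=1$ always, while for $C\subseteq P\setminus\set{z_1}$ we have $\nu(C)=1$ iff $C$ contains some $z_i$ with $i\ge 2$ or $\assign{C\cap[m]}\models\phi$; hence the marginal contribution of $z_1$ is the indicator that $C$ avoids $z_2,\dots,z_r$ and $\assign{C\cap[m]}\not\models\phi$. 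By independence of the coordinates under $q$-sampling this factorizes, giving
\[
V_r=\int_0^1 (1-q)^{r-1}\,g(q)\,dq,\qquad g(q)\defeq\prob_{C}[\,\assign{C\cap[m]}\not\models\phi\,]=\sum_{s=0}^m \bar N_s\,q^s(1-q)^{m-s},
\]
where $\bar N_s$ is the number of non-satisfying subsets of $[m]$ of size $s$. Crucially, $g$ is a fixed polynomial of degree at most $m$ that does not depend on $r$.

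The final step is interpolation. Substituting $u=1-q$ and writing $g(1-u)=\sum_{s=0}^m h_s u^s$ turns the identity into $V_r=\int_0^1 u^{r-1}g(1-u)\,du=\sum_{s=0}^m \frac{h_s}{r+s}$. Ranging over $r=1,\dots,m+1$ yields a linear system whose coefficient matrix is the Cauchy matrix $\big[\tfrac{1}{r+s}\big]$; since the row parameters $r\in\set{1,\dots,m+1}$ and column parameters $s\in\set{0,\dots,m}$ are distinct and $r+s>0$, this matrix is invertible. Solving it in polynomial time over the rationals recovers all coefficients $h_s$, hence the polynomial $g$, and then $\#\phi=2^m\big(1-g(\tfrac12)\big)$ because $g(\tfrac12)=2^{-m}(2^m-\#\phi)$.

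I expect the main obstacle to be establishing that the instances $\phi_r$ produce genuinely independent linear constraints. A natural first attempt, padding with ``null'' dummy variables that occur in no clause, fails: the multilinear representation shows such players leave the Shapley value of $z_1$ unchanged, since they contribute a factor $(q+(1-q))^t=1$. The resolution is precisely to force the fresh variables into \emph{every} clause, which injects the nontrivial weight $(1-q)^{r-1}$ and makes the system Cauchy-invertible. Beyond this, I would need to verify the probabilistic Shapley identity and the factorization carefully, and confirm that the whole reduction—$m+1$ oracle calls on polynomially sized positive CNF formulas plus solving one Cauchy system—runs in polynomial time, which then yields $\FP^{\sharpP}$-hardness.
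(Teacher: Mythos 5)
Your proof is correct and follows essentially the same route as the paper's: both pad $\phi$ with fresh variables inserted into \emph{every} clause, query the Shapley value of a fresh variable in each padded instance, and recover the counts of non-satisfying coalitions by solving an invertible linear system (a polynomial-time Turing reduction from counting satisfying assignments of positive CNF). The only difference is presentational: you derive the system via the multilinear-extension integral and invert a Cauchy matrix $\bigl[\tfrac{1}{r+s}\bigr]$, whereas the paper works with the combinatorial Shapley coefficients directly and cites the invertibility of the resulting matrix of factorials --- the two are related by the Beta-integral identity for the Shapley weights, so the interpolation step is the same.
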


With these lemmas and the constructions from 
\cref{sec:hardness}, we can easily prove \cref{thm:complexity-shapley}. The details are presented in \cref{sec:appendix:shapley}.

\section{Conclusions}\label{sec:conclusions}
We studied the computational complexity of measuring the importance of parameter choices in ranking functions. 
Specifically, we investigated the SHAP score of parameters under a collection of specific (yet basic) ranking functions and effect functions. 
For that, we studied the complexity of calculating the expected effect over random parameter values, focusing on finite and 
probabilistically independent distributions. We also studied the related problem of computing the Shapley value of columns for measuring their contribution to the ranking. 

We view this work as a first step in the rigorous analysis of the computational complexity of function-based explanations for rankings. 
As such, many directions are left for future investigation. Most importantly, we would like to generalize our results into broad \e{classes} of ranking functions and effect functions for which we can devise efficient algorithms, rather than considering specific cases separately.  In particular, we would like to be able to analyze the implication of the different choices made in popular ranking schemes; examples include sports ranking such as WBSC\footnote{\url{https://www.wbsc.org/en/rankings}} and FIFA/Coca-Cola World Ranking,\footnote{\url{https://inside.fifa.com/fifa-world-ranking}} which have the shape of workflow diagrams with linear combinations of features within, and the Computer Science Rankings\footnote{\url{https://csrankings.org/}} that is based on a pre-determined choice of publication venues. We would also like to extend our analysis to other classes of probability distributions, including uniform distributions over intervals, and to characterize the cases in which we can efficiently approximate SHAP scores multiplicatively. Moreover, we would like to understand the complexity of general types of effect functions that include alternatives to those studied here, such as Spearman's footrule and the Cayley distance. 

Finally, an important  direction for future research is to study the practicality of the framework in real-life scenarios, and particularly understand how well the attribution functions of SHAP and Shapley capture people's intuition on the contribution of the components of a ranking function.

\bibliographystyle{plainurl}
\bibliography{refs}

\newpage
\appendix

\label{appendix:beginning}

\hide{
\section{Nomenclature}
\label{sec:appendix:nomenclature}

\begin{table}[h]
\centering
\small
\begin{tabularx}{\linewidth}{@{\hspace{0pt}} >{$}l<{$} @{\hspace{2mm}}X@{}} %
\hline
\textrm{Symbol}		& Definition 	\\
\hline
    \hline
	r				& ranking function \\
\hline
\end{tabularx}
\end{table}
}

\section{Proofs Omitted from \cref{sec:algorithms}}\label{sec:appendix:miss-algorithms}

In this section, we give the proofs that we omitted in \cref{sec:algorithms}.

\thmmaxlexktauposptime*
\begin{proof}[Proof of \cref{thm:max-lex-ktau-pos-ptime}]
For both measures, we write the expected effect as the expectation of a sum of indicator variables for tuple precedence and then use linearity of expectation.

With $\pi$ being $r(\Mat\circ \vect u)$, for the expected rank $\effectf{pos}$ of a tuple with row index $i$ we have:
\[
\expectation_{\vect u\sim\Pi}[\effectf{pos}(i, \pi)] = \expectation_{\vect u\sim\Pi}\Big[\sum_{j \in [n]} \one_{\pi(i) > \pi(j)}\Big]
= \sum_{j \in [n]} \expectation_{\vect u\sim\Pi} [\one_{\pi(i) > \pi(j)}]
= \sum_{j \in [n]} \prob_{\vect u\sim\Pi} [\pi(i) > \pi(j)].
\]
Therefore, for all ranking functions $r$ for which the $\PREC{r}$ problem is solvable in polynomial time according to \Cref{thm:prec_algs},
the problem $\EXP{r}{\effectf{pos}}$ is also in polynomial time by solving the former $\bigO(n)$ times.

Similarly, for the expected Kendall's tau distance $\effectf{k\tau}$ we have:
\begin{align*}
    \expectation_{\vect u\sim\Pi}[\effectf{k\tau}(\pi)] &= 
\expectation_{\vect u\sim\Pi}\big[ \!\!\! \sum_{1\leq i<j\leq n} \!\!\! \one_{\pi(i)>\pi(j)}\big]
\\ &= \!\!\! \sum_{1\leq i<j\leq n} \!\!\! \expectation_{\vect u\sim\Pi} [\one_{\pi(i)>\pi(j)}]
= \!\!\! \sum_{1\leq i<j\leq n} \!\!\! \prob_{\vect u\sim\Pi} [\pi(i) > \pi(j)].
\end{align*}

Thus, for $\EXP{r}{\effectf{k\tau}}$, we solve the $\PREC{r}$ problem $\bigO(n^2)$ times, which remains polynomial.
\end{proof}

\thmmaxtopkptime*
\begin{proof}[Proof of \cref{thm:max-topk-ptime}]
Let the tuple of interest be $\vect{t}_i$ with row index $i$.
The task is to compute the probability of the event 
\[
T_{n-k}\defeq \text{ tuple } \vect{t}_i \text{ beats (i.e., appears before) at least $n-k$ tuples in the ranking.}
\]
Using inclusion-exclusion~\cite{feller1968book}, this can be expressed as follows:
\[
\prob[T_{n-k}] \!=\! \prob[W_{n-k}] - {n \! - \! k \choose 1} \prob[W_{n-k+1}] + {n \! - \! k \choose 2} \prob[W_{n-k+2}] - \ldots \pm {n \! - \! 2 \choose n \! - \! k \! - \! 1} \prob[W_{n-1}]
\]
with $\prob[W_{d}] = \sum_{S \subseteq n, |S|=d}  \prob[\bigwedge_{\ell \in S} (\pi(i) < \pi(\ell))]$.

The number of subsets $S$ is ${n \choose n-k}$, which is polynomial because $k$ is constant.
Thus, it suffices to compute in polynomial time the probability of each event $B_{S}\defeq$ $\bigwedge_{\ell \in S} (\pi(i) < \pi(\ell))$,
which corresponds to tuple $\vect{t}_i$ beating all tuples with row indexes in $S$.

First, assume that tuple $\vect{t}_i$ is after the tuples of $S$ in the tie-breaking mechanism.
From the tuples in $S$, we construct a single competitor $\vect{t'}$ that contains the largest value per column.
We argue that $\vect{t}_i$ is before $\vect{t'}$ in the ranking if and only if it is before $\vect{t'}$.
Indeed:
\[
\max_{j \in m} (\vect{t}_i[j]) > \max_{\ell \in S} (\max_{j \in m} (\vect{t}_\ell[j]))
\Leftrightarrow 
\max_{j \in m} (\vect{t}_i[j]) > (\max_{j \in m} \max_{\ell \in S} (\vect{t}_\ell[j]))
\Leftrightarrow 
\max_{j \in m} (\vect{t}_i[j]) > \max_{j \in m} (\vect{t'}[j]). 
\]
To handle the case where some tuples in $S$ are above $\vect{t}_i$ in the tie-breaking while others are below,
we split them into these two groups and create two competitors $\vect{t'}$ and $\vect{t''}$.
We invoke the algorithm for $\PREC{\rankbyscore{\scoremax}{\desc}}$ separately for $(\vect{t}_i, \vect{t'})$ and $(\vect{t}_i, \vect{t''})$, with a different assumption for the resolution of ties.
By \Cref{obs:list_of_weights}, we can obtain the lists of valid weights for each invocation and then intersect them to obtain the final list and the corresponding probability is straightforward to compute.

Overall, our algorithm needs to solve the $\PREC{\rankbyscore{\scoremax}{\desc}}$ problem $\bigO(k n^k)$ times.
\end{proof}

\subsection{Proofs Omitted from \cref{sec:bounded_dim}}

\thmboundeddim*

\begin{proof}
First, suppose that $m$ is $\bigO(1)$.
The probability space for possible $\vect u$ weights is $\bigO(|\Pi|^m)$, which is now polynomial, so we can enumerate all possible weight assignments.
For each assignment, we sort the table by $r$ and compute the effect $e$.
The expectation of the effect is then calculated as a weighted sum over these cases.

Second, suppose that $n$ is $\bigO(1)$.
While the probability space is now exponentially large, the number of possible permutations of the rows of the $\Mat$ matrix is $n!$, which is $\bigO(1)$.
We construct all possible permutations and, for each one, we will determine its probability in polynomial time, allowing us to then compute the expected effect.

Let $\pi$ be one of the permutations. 
For ranking functions $\rankbyscore{\scoremax}{\asc}, \rankbyscore{\scoremax}{\desc},\ranklex$, we consider all comparisons of consecutive tuples $\pi(i) < \pi(i+1)$ for $i\in[n-1]$.
These comparisons are sufficient to describe the permutation $\pi$.
For each comparison, we compute the satisfying weights by \Cref{thm:prec_algs,obs:list_of_weights} and then intersect them to obtain the probability of $\pi$.

Now, let our ranking function be $\ranksum$ with the additional assumption that the matrix values and weights in $\Pi$ are integers and given in unary.
We modify the dynamic program of \Cref{thm:prec_algs} so that it computes the probability of the entire permutation.
Suppose the tuples $\vect{t}_1, \ldots, \vect{t}_n$ are indexed by the order of the permutation $\pi$.
Also, let $S_1$ be the set of tuples $\vect{t}_i, i\in[n]$ that appear before $\vect{t}_{i+1}$ in the tie-breaking scheme, and $S_2$ be the rest of them.
\begin{align}
&R[j, s_1, \ldots, s_{n-1}] = \\
&\notag \qquad \sum_{v_j \in \supp(\Pi_j)} \prob[u_j = v_j] \cdot R[j + 1, 
s_1 - v_j (\vect{t}_2[j] - \vect{t}_1[j],
\ldots,
s_{n-1} - v_j (\vect{t}_n[j] - \vect{t}_{n-1}[j]
)]\\
& R[m+1, s_1, \ldots, s_{n-1}] = 
\begin{cases} 
            1,  \text{ if } \forall i: \vect{t}_i \in S_1 \Rightarrow s_i \geq 0  \text{ and } \forall i: \vect{t}_i \in S_2 \Rightarrow s_i > 0\\
	  	0, \text{ otherwise.} 
\end{cases}        
\end{align}

The final answer is $R[1,0,\ldots,0]$.
The running time compared to the previous dynamic program is greater by a factor of $\bigO(n)$, which is a constant.

\end{proof}

\section{Proofs Omitted from \cref{sec:hardness}}\label{sec:appendix:miss-hardness}
In this section, we give the proofs that we omitted in \cref{sec:hardness}.

\subsection{Proofs Omitted from \cref{sec:hardness:sum}}
\thmsumprechard*

\begin{proof}[Proof of \cref{thm:sum-prec-hard}]
    Let $\Mat$ be the matrix from \cref{lem:sum-hard-matrix} and $\Pi$ be the uniform distribution on $\set{0, 1}^{\ell+1}$. By \cref{lem:sum-hard-matrix}, a weight vector $\vect u=(u_1, \ldots, u_{\ell + 1}) \in \set{0,1}^{\ell + 1}$ 
    fulfills $\scoresum(\vect{t}_2 \circ \vect u) \leq \scoresum(\vect{t}_1 \circ \vect u)$ if and only if $u_{\ell + 1} = 1$ and the set $S = \set{1 \leq i \leq \ell \, \mid \, u_i = 1}$ satisfies $\sum_{i \in S} b_i \leq d$. Hence, the answer to the counting knapsack problem is exactly the precedence probability of $\vect{t}_2$ and $\vect{t}_1$ times $2^{\ell+1}$. 
\end{proof}
\subsection{Proofs Omitted from \cref{sec:hardness:top-k}}

\thmtopkhard*

\begin{proof}[Proof of \cref{thm:top-k-hard}]
    We first observe that the claim for $r \in \set{\rankbyscore{\scoremax}{\asc}, \ranksum, \ranklex}$ follows immediately from \cref{lem:top-k-hard-matrix}. Let $\Mat$ be the matrix from \cref{lem:top-k-hard-matrix} and $\Pi$ be the uniform distribution on $\set{0, 1}^{m}$. Since for each $\vect u \in \set{0,1}^m$, the rank of $\vect{t}_{\ell + k}$ in $r(\Mat \circ \vect u)$ is $\leq k$ if and only if $\assign{\vect u} \models \phi$, the number of satisfying assignments of $\phi$ is $2^m$ times the probability of $\vect{t}_{k + \ell}$ having rank $\leq k$.

    For $r = \rankbyscore{\scoremax}{\desc}$, we observe
    \begin{align*}
\prob(\rankbyscore{\scoremax}{\desc}(\vect{t}) \leq k) 
=\prob(\rankbyscore{\scoremax}{\asc}(\vect{t}) \geq n - k) = 1- \prob(\rankbyscore{\scoremax}{\asc}(\vect{t}) \leq n - k -1),
\end{align*}
    so it does not matter if we use ascending or descending order when $k$ is part of the input.
\end{proof}

\lemtopkhardmatrix*

\begin{proof}
    We define $\Mat \in \M^{(\ell + k) \times m}$ as follows. The first $k-1$ tuples have all entries equal to $0$. The next $\ell$ tuples
    $\vect{t}_k, \ldots, \vect{t}_{k + \ell - 1} $ are the incidence tuples of $D_1, \ldots, D_\ell$; that is, 
	\[
	  \vect{t}_{k - 1 + i}[j] \defeq \begin{cases}
	  	1, \text{ if } X_j \text{ appears in }	D_i \\
	  	0, \text{ otherwise.} 
	  	  \end{cases}
	\] 
	The last tuple $\vect{t}_{\ell + k}$ again has all entries equal to $0$.

    Let $C$ be a set of columns and consider $\restrict{\Mat}{C}$. In $\restrict{\Mat}{C}$, there are two kinds of tuples: zero tuples, where all entries are $0$, and non-zero tuples, which contain at least one $1$. Zero-tuples are pairwise incomparable in all three ranking functions, so they will stay in their respective order. Each non-zero tuple gets a higher rank than all the zero-tuples as its scores are higher and also $\mathsf{Lex}$ ranks it higher in the column of its first non-zero entry. 

    Since $\vect{t}_{\ell + k}$ has the highest index and all the $k$ tuples $\vect{t}_{1}, \ldots, \vect{t}_{k-1}$ and $\vect{t}_{k + \ell}$ are zero tuples with respect to every column set, $\vect{t}_{\ell + k}$ gets rank $\leq k$ if and only if none of the tuples $\vect{t}_k, \ldots, \vect{t}_{k + \ell - 1}$ is a zero tuple w.r.t. $C$. Finally, we observe that for $1\leq i \leq \ell$, the tuple $\vect{t}_{k -1 + i}$ is a non-zero tuple if and only if $\assign{C} \models D_i$. Together, this shows that $\vect{t}_{\ell + k}$ has rank $\leq k$ if and only if $\assign{C} \models \phi$.
\end{proof}

\subsection{Proofs Omitted from \cref{sec:hardness:md}}
\lemmdhardmatrices*
\begin{proof}
    For $k \in \set{1,2}$, let $\Mat_k \in \M^{(2\ell + k) \times m}$ be the matrix where the first $\ell$ tuples $\vect{t}_1, \ldots, \vect{t}_\ell$ are the incidence tuples of the clauses $D_1, \ldots, D_\ell$. For simplicity, we assume that these tuples are ordered as in $r(\Mat_k)$. This is always the case for $\scoremax$, but for $\scoresum$ and $\mathsf{Lex}$, it means that the clauses $D_i$ of $\phi$ are ordered by their number of variables or by lexicographic order of their incidence vectors, respectively. The remaining $\ell + k$ tuples have all entries equal to zero. We claim that $\Mat_2$ and $\Mat_1$ meet \cref{eqn:diff-md}.

    Let $C\subseteq [m]$ be a column set. Similar to the proof of \cref{lem:top-k-hard-matrix}, we observe the following:
    If $\assign{C} \models \phi$, then the all-zero tuples $\vect{t}_{\ell + 1}, \ldots, \vect{t}_{2\ell + k}$ are ranked first in $r(\restrict{\Mat_k}{C})$ and the clause tuples $\vect{t}_{1}, \ldots \vect{t}_{\ell}$ appear later in the ranking. 
    This holds in particular for $\Mat_k = \restrict{\Mat_k}{[m]}$. If $\assign{C} \nvDash \phi$, then the tuples corresponding to non-satisfied clauses are ranked first, then the all-zero tuples and finally the tuples corresponding to satisfied clauses.

    We now refine this analysis to prove \cref{eqn:diff-md}. Let us first assume that $\assign{C} \models \phi$. Then, the all-zero tuples are ranked first in both $r(\Mat_k)$ and $r(\restrict{\Mat_k}{C})$, so their displacement is $0$. The displacement of the clause tuples does not depend on the number of all-zero-tuples, so \cref{eqn:diff-md} holds in that case. Now, assume that $\assign{C} \nvDash \phi$ and let $\set{i_1, \ldots, i_s}$ be the indices of the non-satisfied clauses. Since we assumed that these indices also represent the relative order of these tuples in $r(\Mat_k)$, we find $r(\Mat_k)(i_j) = \ell + k + i_j$ and $r(\restrict{\Mat_k}{C})(i_j) = j$, so the displacement of each $i_j$ is at least $\ell + k$ and the displacements in $\Mat_1$ and $\Mat_2$ differ by $1$. The rank of each all-zero tuple increases by the number of non-satisfied clauses from $r(\Mat)$ to $r(\restrict{\Mat_k}{C})(i_j) = j$, so their displacement is at most $\ell$ and it is the same in $\Mat_1$ and $\Mat_2$. The tuples corresponding to satisfied clauses stay within the last $\ell$ ranks of the matrices, so their displacement is also $\leq \ell$ and it again does not depend on the number of all-zero tuples. Hence, the maximum displacement is attained for one of the tuples corresponding to a non-satisfied clause and its value in $\Mat_1$ and $\Mat_2$ differs by $1$. This shows the claim.
\end{proof}

\thmmdhard*

\begin{proof}
    We first observe that claim for $r \in \set{\rankbyscore{\scoremax}{\asc}, \ranksum, \ranklex}$ follows immediately from \cref{lem:top-k-hard-matrix}. Let $\Mat_1$ and $\Mat_2$ be the matrices from \cref{lem:md-hard-matrices} and $\Pi$ be the uniform distribution on $\set{0, 1}^{m}$. Since for each $\vect u \in \set{0,1}^m$, the difference of the effect function on the two matrices is $0$ if $\vect u \models \phi$ and $1$ otherwise,
    the number of satisfying assignments of $\phi$ is $2^m$ minus $2^m$ times the difference of the expected values of the .

    For $r = \rankbyscore{\scoremax}{\desc}$, we observe
    that $\distmaxdispl$ is invariant under reversing both orders,
    so it does not matter if we use ascending order or descending order with reversed tie-breaking.

\end{proof}

\def\secshapleyref{\Cref{sec:shapley}}

\section{Proofs Omitted from \secshapleyref}\label{sec:appendix:shapley}

\thmcomplexityshapley*
\begin{proof}
    By \cref{thm:Shapley-from-SHAP}, all polynomial time algorithms for $\PSHAP{r}{e}$ can be used to compute $\PShapley{r}{e}$ in polynomial time.
    The hardness results we need to show are, that the problem $\PShapley{r}{e}$ is $\FP^{\sharpP}$-hard for the following combinations of $r$ and $e$:
    \begin{enumerate}[label=(\alph*)]
        \item\label{itm:shapley-sum} $r = \ranksum$ and $e \in \set{\effectf{k\tau}, \effectf{Ham}, \effectf{pos}, \peffectf{k}{top}}$, if the input matrix $\Mat$ is encoded in binary,
        \item\label{itm:shapley-top-k-p} $r \in \set{\rankbyscore{\scoremax}{\asc}, \ranksum, \ranklex}$ and $e = \peffectf{k}{top}$, where $k$ is a parameter to the problem,
        \item\label{itm:shapley-top-k-i} $r = \rankbyscore{\scoremax}{\desc}$ and $e = \peffectf{k}{top}$, when $k$ is part of the input, and
        \item\label{itm:shapley-md} $r \in \set{\rankbyscore{\scoremax}{\asc}, \rankbyscore{\scoremax}{\desc}, \ranksum, \ranklex}$ and $e = \effmd$.
    \end{enumerate}
    Claim \ref{itm:shapley-sum} follows directly from \cref{lem:game-knapsack} and \cref{lem:sum-hard-matrix} using the reduction from $\PREC{r}$ in \cref{cor:hard-binary}. Claims \ref{itm:shapley-top-k-p} and \ref{itm:shapley-top-k-i} follow from \cref{lem:game-cnf} and \cref{lem:top-k-hard-matrix} using the equivalence of $\scoremax\asc$ and $\scoremax\desc$ when $k$ is part of the input that we established in the proof of \cref{thm:top-k-hard}.
    Finally, claim \ref{itm:shapley-md} follows from \cref{lem:game-cnf} and \cref{lem:md-hard-matrices} using the facts that $\distmaxdispl$ is invariant under reversing both permutations and that ranking in descending order with reversed tie-breaking order yields the reversed order of ranking in ascending order. 
\end{proof}

\lemknapsackgame*

\begin{proof}
    We show that Shapley value computation in the knapsack game is $\FP^{\sharpP}$-hard by reducing from the counting knapsack problem. 
    
    Let $I = (b_1, \ldots, b_\ell; d)$ be an input the knapsack game. We are interested in the Shapley value of player $\ell + 1$, that belongs to the knapsack capacity $d$.
    For a set $C \subseteq [\ell]$, we observe that $\nuknap(C) = 0$ and
    \begin{equation}\label{eqn:nuknap}
        \nuknap(C \cup \set{\ell + 1}) - \nuknap(C) = \begin{cases}
        1,\text{ if } \sum_{i \in C} b_i \leq d \\
        0,\text{ otherwise.} 
        \end{cases}   
    \end{equation}
    Let $s_k$ be the number of valid item sets of size $k$; that is 
    \[s_k = \size[\Big]{\set[\big]{C \subseteq [\ell] \, \mid \, \size{C} = k \, \wedge \, \sum_{i \in C} b_i \leq d}}.\]
    With this notation and \cref{eqn:nuknap}, we can express the Shapley value of $\ell + 1$ by
    \[
    \Shapleyvalue(\ell + 1, I, \nuknap) = \sum_{k = 0}^{\ell} \frac{k!(\ell - k)!}{(\ell + 1)!} s_k.
    \]
    
    For $0 \leq j \leq \ell$ we define new instances $I_j$ where we add $j$ items with weights $d+1$, so $I_j = (b_1, \ldots, b_\ell, d + 1, \ldots, d + 1; d)$. Since each of the added weights is larger than the knapsack capacity $d$, none of them can be part of a valid item set, so 
    \[
    \Shapleyvalue(\ell + j + 1, I_j, \nuknap) = \sum_{k = 0}^{\ell} \frac{k!(\ell + j - k)!}{(\ell + j + 1)!} s_k.
    \]
    This yields the following system of linear equations:
    \[\begin{pmatrix}
    \ell! & (\ell-1)! & \cdots & 0! \\
    (\ell+1)! & \ell! & \cdots & 1! \\
    \vdots & \vdots & \ddots & \vdots \\
    (2\ell)! & (2\ell-1)! & \cdots & \ell!
    \end{pmatrix} \cdot \begin{pmatrix}
    0! s_0\\
    1! s_1\\
    \vdots \\
    \ell! s_\ell
\end{pmatrix} = \begin{pmatrix}
    (\ell + 1)!\Shapleyvalue(\ell + 1, I_0, \nuknap) \\
    (\ell + 2)!\Shapleyvalue(\ell + 2, I_1, \nuknap)  \\
    \vdots \\
    (2\ell + 1)!\Shapleyvalue(2\ell + 1, I_\ell, \nuknap) 
\end{pmatrix}
\]
This system of linear equations has full rank as the matrix on the left is invertible as shown by \cite[proof of Theorem 1.1.]{bacher2002determinants}. Hence, we can determine the number of valid item sets by solving this system and returning $s_0 + \ldots + s_\ell$.
\end{proof}

\begin{proof}[Proof of \cref{lem:game-cnf}]
    We show that Shapley value computation in the positive CNF game is $\FP^{\sharpP}$-hard by reducing from the problem of counting satisfying solutions to a CNF formula.

    For positive CNF formula $\phi = \bigwedge_{i = 1}^\ell D_i$ with variable set $X_1, \ldots, X_m$ and clauses $D_i = \bigvee_{j = 1}^{r_i} X_{j_i}$, we define a new CNF formula $\hat\phi$ from $\phi$ by introducing a new variable $X_{m + 1}$ and adding this variable to each of the clauses. Then, every coalition that contains $m+1$ satisfies the formula and we obtain for a coalition $C \subseteq [m]$:
    \begin{equation}\label{eqn:nucnf}
          \nucnf(C \cup \set{X_{m + 1}}) - \nucnf(C) = \begin{cases}
        0,\text{ if } \assign{C} \models \phi \\
        1,\text{ otherwise.} 
        \end{cases} 
    \end{equation}

    Let $\bar s_k$ be the number of variable assignments assignments $\alpha \colon [m] \to \set{\mathsf{false}, \mathsf{true}}$ that do \emph{not} model $\phi$.
    With this notation and \cref{eqn:nucnf}, we can express the Shapley value of $m + 1$ by
    \[
    \Shapleyvalue(m + 1, \hat\phi, \nucnf) = \sum_{k = 0}^{m} \frac{k!(m - k)!}{(m + 1)!} \bar s_k.
    \]
    For $0 \leq j \leq m$, we define new CNF formulas $\phi_j = \bigwedge_{i = 1}^{\ell} D^{(j)}_i$ by introducing $j$ new variables $X_{m+1}, \ldots, X_{m + j}$ and adding each of them to all the clauses $D_i$. Now, a variable assignment $\alpha' \colon [m + j] \to \set{\mathsf{false}, \mathsf{true}}$ does not model $\phi_j$ if $\restrict{\alpha'}{[m]} \nvDash \phi$ and $\alpha'(i) = \mathsf{false}$ for all $m + 1 \leq i \leq m + j$, so 
    \[
    \Shapleyvalue(m + j + 1, \hat\phi_j, \nucnf) = \sum_{k = 0}^{m} \frac{k!(m + j - k)!}{(m + j + 1)!} \bar s_k
    \]
     This yields the following system of linear equations:
    \[\begin{pmatrix}
    m! & (m-1)! & \cdots & 0! \\
    (m+1)! & m! & \cdots & 1! \\
    \vdots & \vdots & \ddots & \vdots \\
    (2m)! & (2m-1)! & \cdots & m!
    \end{pmatrix} \cdot \begin{pmatrix}
    0! \bar s_0\\
    1! \bar s_1\\
    \vdots \\
    m! \bar s_\ell
\end{pmatrix} = \begin{pmatrix}
    (m + 1)!\Shapleyvalue(m + 1, \hat\phi_0, \nucnf) \\
    (m + 2)!\Shapleyvalue(m + 2, \hat\phi_1, \nucnf)  \\
    \vdots \\
    (2m + 1)!\Shapleyvalue(2m+ 1, \hat\phi_m, \nucnf) 
\end{pmatrix}
\]
This system of linear equations has full rank as the matrix on the left is invertible as shown by \cite[proof of Theorem 1.1.]{bacher2002determinants}. Hence, we can determine the number of satisfying assignments of $\phi$ by solving this system and returning $2^m - \bar s_0 - \ldots - \bar s_m$.
\end{proof}

\section{The Top-k Perspective}\label{sec:appendix:top-k}

We show that the results that we have proved for the top-$k$ membership $\peffectf{k}{top}$ effect function (see \Cref{table:complexity})
also apply to the 
top-$k$ (symmetric) difference $\peffectf{k}{\mathrm{\Delta}}$
and any-change $\peffectf{k}{any}$ effect functions.

\begin{theorem}
Let $r \in \set{\ranksum, \rankbyscore{\scoremax}{\asc}, \rankbyscore{\scoremax}{\desc},\ranklex}$
and 
$e \in \set{\peffectf{k}{\mathrm{\Delta}}, \peffectf{k}{any}}$.
The problem $\EXP{r}{e}$ is solvable in polynomial time if $r=\rankbyscore{\scoremax}{\desc}$ and $k$ is a fixed parameter.
Otherwise, it is $\FP^{\sharpP}$-hard, and it remains $\FP^{\sharpP}$-hard for 
$r \in \set{\ranksum, \rankbyscore{\scoremax}{\asc},\ranklex}$
even if $k$ is a fixed parameter.
\end{theorem}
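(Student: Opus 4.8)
The plan is to relate the two effect functions $\peffectf{k}{\mathrm{\Delta}}$ and $\peffectf{k}{any}$ to the top-$k$ membership results already proved (\Cref{thm:max-topk-ptime,thm:top-k-hard,lem:top-k-hard-matrix}), treating the single tractable case and the hard cases separately. Throughout, $T$ and $T_0$ denote the top-$k$ sets of $\pi = r(\Mat\circ\vect u)$ and of $\pi_0$.

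For the tractable case $r=\rankbyscore{\scoremax}{\desc}$ with fixed $k$, the symmetric difference is handled by linearity of expectation. Since an index $i$ lies in exactly one of $T,T_0$ precisely when the predicates $\pi(i)\leq k$ and $\pi_0(i)\leq k$ disagree, we have $\peffectf{k}{\mathrm{\Delta}}(\pi)=\sum_{i\in[n]}\one_{(\pi(i)\leq k)\,\oplus\,(\pi_0(i)\leq k)}$, so that
\[
\expectation_{\vect u\sim\Pi}[\peffectf{k}{\mathrm{\Delta}}(\pi)]=\sum_{i:\,\pi_0(i)\leq k}\bigl(1-\prob(\pi(i)\leq k)\bigr)+\sum_{i:\,\pi_0(i)>k}\prob(\pi(i)\leq k).
\]
Each probability $\prob(\pi(i)\leq k)$ is exactly what the algorithm behind \Cref{thm:max-topk-ptime} computes, so summing over the $n$ rows stays polynomial. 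The any-change effect requires $\expectation[\peffectf{k}{any}(\pi)]=1-\prob(T=T_0)$, which does \emph{not} decompose by linearity; this is the main obstacle. Here I would use that, under $\rankbyscore{\scoremax}{\desc}$, the event $T=T_0$ is exactly the event that every one of the $k$ tuples of $T_0$ beats every tuple outside $T_0$. As in \Cref{ex:max_topk,thm:max-topk-ptime}, for each fixed $i\in T_0$ the condition ``$\vect t_i$ beats all tuples of $[n]\setminus T_0$'' reduces to ``$\vect t_i$ beats the merged competitor(s)'' obtained by taking the column-wise maximum (splitting the competitors by their tie-breaking order relative to $i$). By \Cref{obs:list_of_weights}, each such event is returned as a disjoint list of boxes (products of per-column weight sets). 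The intersection of $k$ disjoint box-families is again a disjoint box-family, of size at most the product of the individual sizes, hence polynomial because $k$ is fixed; so $\prob(T=T_0)$ is obtained by summing, over the resulting boxes, the product of the independent per-column probabilities.

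For the hard cases with $r\in\set{\rankbyscore{\scoremax}{\asc},\ranksum,\ranklex}$ I would reuse the matrix $\Mat\in\M^{(\ell+k)\times m}$ of \Cref{lem:top-k-hard-matrix} with the uniform distribution on $\set{0,1}^m$, for any fixed $k$. First I would verify that the base top-$k$ set is $T_0=\set{1,\dots,k-1,k+\ell}$ (the $k$ all-zero tuples) and that under a weight vector $\vect u$ the tuples of score $0$ (for $\scoremax,\scoresum$) or of zero vector (for $\lex$) are precisely these together with the clause tuples of the clauses \emph{not} satisfied by $\assign{\vect u}$ -- this holds identically for all three functions, since an unsatisfied clause tuple becomes the all-zero vector after weighting. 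Consequently $T=T_0$ when $\assign{\vect u}\models\phi$, whereas otherwise $\vect t_{k+\ell}$ is pushed out of the top-$k$ in favor of the lowest-indexed unsatisfied clause tuple, so $T$ and $T_0$ differ in exactly one element. This gives $\peffectf{k}{any}(\pi)=\one_{\assign{\vect u}\not\models\phi}$ and $\peffectf{k}{\mathrm{\Delta}}(\pi)=2\cdot\one_{\assign{\vect u}\not\models\phi}$, so in both cases the expectation equals a fixed multiple of the number of non-satisfying assignments of $\phi$, and the number of satisfying assignments of a positive CNF is recovered; this is $\FP^{\sharpP}$-hard even for fixed $k$. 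Finally, for $r=\rankbyscore{\scoremax}{\desc}$ with $k$ part of the input I would transfer hardness from $\rankbyscore{\scoremax}{\asc}$ using that the descending top-$k$ set is the complement of the ascending top-$(n-k)$ set: since the symmetric difference (and hence the any-change indicator) is invariant under complementing both sets, $\EXP{\rankbyscore{\scoremax}{\desc}}{e}$ with parameter $k$ equals $\EXP{\rankbyscore{\scoremax}{\asc}}{e}$ with parameter $n-k$, which is hard.

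As indicated, the one genuinely delicate point is the any-change tractability: unlike the other three effects we study, it is a property of the whole top-$k$ set rather than of individual rows, so it cannot be obtained by linearity and instead relies essentially on the disjoint-box representation of \Cref{obs:list_of_weights} together with $k$ being fixed to keep the number of boxes polynomial.
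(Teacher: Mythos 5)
Your proposal is correct and follows essentially the same route as the paper: reducing the symmetric difference to per-row top-$k$ membership probabilities via linearity (the paper phrases this through the intersection size $\size{T\cap T_0}$, which is equivalent), handling any-change by characterizing $T=T_0$ as every $T_0$-tuple beating every non-$T_0$ tuple and intersecting the $k$ disjoint weight lists from \Cref{obs:list_of_weights}, and reusing the matrix of \Cref{lem:top-k-hard-matrix} for hardness with the complementation trick for $\rankbyscore{\scoremax}{\desc}$ when $k$ is part of the input. Your added remark on why the intersection of $k$ box-families stays polynomial for fixed $k$ is a welcome bit of extra explicitness but does not change the argument.
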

\begin{proof}
Let $T_0$ be the top-$k$ set in the ranking of the whole matrix $\Mat$ and $T$ be the top-$k$ set in another permutation induced by the ranking function applied to a set $C$ of columns, so our two effect functions are defined as
$\peffectf{k}{\mathrm{\Delta}}(\pi) \defeq \size{T\cup T_0}-\size{T\cap T_0}$
and
$\peffectf{k}{any}(\pi) \defeq \one_{T\neq T_0}$.
Instead of the expected symmetric difference, we will show the theorem for the
expected intersection size $\peffectf{k}{\cap}(\pi) \defeq \size{T\cap T_0}$, since 
$\size{T\cup T_0}-\size{T\cap T_0} = 2k - 2\size{T\cap T_0}$,
which means that we can derive one from the other.

Starting from the negative side, we observe that the matrix $\Mat$ defined in \cref{lem:top-k-hard-matrix} from a positive CNF formula $\phi$ has the following property: If $\assign{C} \models \phi$, then $T_0 = T$, and otherwise $\size{T_0 \cap T} = k - 1$. From the arguments from the proof of \Cref{thm:top-k-hard}, we can infer $\FP^{\sharpP}$-hardness of $\EXP{r}{\peffectf{k}{\cap}}$ and $\EXP{r}{\peffectf{k}{any}}$ for the combinations of $r$ and $e$ stated in the theorem. 

It remains to show the positive side for $r=\rankbyscore{\scoremax}{\desc}$ and fixed $k$.
First, consider the expected intersection size $\peffectf{k}{\cap}(\pi) \defeq \size{T\cap T_0}$.
Using linearity of expectation, we can solve the problem with $k$ calls to the top-$k$ membership algorithm:
\[
\expectation_{\vect u\sim\Pi}[\peffectf{k}{\cap}(\pi)] = 
\expectation_{\vect u\sim\Pi}\big[ \sum_{i \in T_0}  \one_{i \in T} \big]
= \sum_{i \in T_0} \expectation_{\vect u\sim\Pi} \big[\one_{i \in T} \big].
\]
By \Cref{thm:max-topk-ptime}, $\expectation_{\vect u\sim\Pi} \big[\one_{i \in T}\big]$ can be computed in polynomial time for fixed $k$,
therefore $\peffectf{k}{\cap}$ can also be computed in polynomial time.

For the expected change $\peffectf{k}{any}$, we consider each tuple $\vect{t} \in T_0$ individually, and use the same approach as in as in the proof of \Cref{thm:max-topk-ptime} to solve the top-$1$ membership problem $\EXP{\rankbyscore{\scoremax}{\desc}}{\peffectf{k}{top}}$ for $\vect{t}$ and a matrix that does not contain the $k-1$ tuples $T_0 \setminus \{\vect{t}\}$.
In particular, we merge the $n-k$ tuples that are not in $T_0$ into two competitor tuples $\vect{t'}$ and $\vect{t''}$ depending on whether they win over $\vect{t}$ in the tie-breaking scheme.
In both cases, tuples $\vect{t}$ and $\vect{t'}$ contain the largest value per column among their respective set.
We then invoke the algorithm $\PREC{\rankbyscore{\scoremax}{\desc}}$ to return the list of weights that result in top-$1$ membership for $\vect{t}$ (recall \Cref{obs:list_of_weights}).
Finally, we intersect the lists of weights for different $\vect{t} \in T_0$ and compute their probability to obtain the desired expectation.
\end{proof}

\section{Hardness of Hamming Distance}\label{sec:appendix:hardness-hamming}

In this section, we show that \cref{lem:md-hard-matrices} also holds for the Hamming distance $\distham(\pi_1, \pi_2) = \sum_{1 \leq i \leq n} \mathbbm 1_{\pi_1(i) \neq \pi_2(i)}$.

\begin{lemma}
    Let $\phi = \bigwedge_{i = 1}^\ell D_i$ be a positive CNF formula with variable set $X_1, \ldots, X_m$ and clauses $D_i = \bigvee_{j = 1}^{r_i} X_{j_i}$. Then, there exist two a matrices $\Mat_1 \in \M^{(2\ell + 1) \times m}$ and $\Mat_2 \in \M^{(2\ell + 2) \times m}$ with entries in $\set{0,1}$ and the following property: For each set $C$ of columns and each ranking function $r \in \set{\rankbyscore{\scoremax}{\asc}, \ranksum, \ranklex}$, we have
    \begin{equation}\label{eqn:diff-ham}
        \distham\big(r(\Mat_2), r(\restrict{\Mat_2}{C})\big) - \distham\big(r(\Mat_1), r(\restrict{\Mat_1}{C})\big) = \begin{cases}
        0,\text{ if } \assign{C}\models \phi \\
        1,\text{ otherwise.}
        \end{cases}
    \end{equation}
\end{lemma}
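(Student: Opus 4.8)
The plan is to reuse verbatim the two matrices $\Mat_1$ and $\Mat_2$ constructed in the proof of \cref{lem:md-hard-matrices}, together with the same convention that the $\ell$ clause (incidence) tuples are pre-ordered so that they appear in index order in $r(\Mat_k)$. The ranking structure established there carries over unchanged: in the full matrix every all-zero tuple precedes every clause tuple, and under a restriction $\restrict{\Mat_k}{C}$ the clause tuples of the clauses \emph{not} satisfied by $\assign{C}$ turn into zero tuples and migrate to the front, while the satisfied clause tuples remain behind all zero tuples. The only thing I would redo is to replace the ``maximum of the displacements'' bookkeeping by a ``count of nonzero displacements'' bookkeeping, since $\distham$ asks only \emph{whether} a tuple changes its rank, not by how much.

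Concretely, writing $s$ for the number of clauses violated by $\assign{C}$, I would record, for each matrix $\Mat_k$ and each tuple, its rank in $r(\Mat_k)$ and in $r(\restrict{\Mat_k}{C})$, exactly with the index conventions of the MD proof. A clause tuple of index $i$ has full rank $\ell+k+i$; a violated clause $i_j$ (the $j$-th smallest violated index) has restricted rank $j$; an originally-zero tuple $\ell+p$ has full rank $p$ and restricted rank $s+p$; and a satisfied clause $c_t$ has restricted rank $\ell+k+s+\rho(t)$, where $\rho$ is the relative order of the satisfied clauses under the restriction. The crucial point is that the \emph{displacements} of the tuples common to $\Mat_1$ and $\Mat_2$ behave as the Hamming count requires: a common zero tuple has displacement $s$, a satisfied clause $c_t$ has displacement $s+\rho(t)-c_t$ (the two copies of $k$ cancel), and a violated clause $i_j$ has displacement $\ell+k+i_j-j$, which \emph{does} depend on $k$ but stays strictly positive in both matrices. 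Hence each common tuple contributes the same value, $0$ or $1$, to $\distham$ in $\Mat_1$ and in $\Mat_2$, so all of these contributions cancel in the difference of \cref{eqn:diff-ham}.

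The difference is therefore governed entirely by the single extra zero tuple that $\Mat_2$ has over $\Mat_1$: it has full rank $\ell+2$ and restricted rank $s+\ell+2$, so its displacement is exactly $s$. When $\assign{C}\models\phi$ we have $s=0$, the extra tuple does not move, and $\distham\big(r(\Mat_2),r(\restrict{\Mat_2}{C})\big)=\distham\big(r(\Mat_1),r(\restrict{\Mat_1}{C})\big)$; when $\assign{C}\not\models\phi$ we have $s\ge 1$, the extra tuple changes its rank, and the Hamming distance of $\Mat_2$ exceeds that of $\Mat_1$ by exactly one. This is precisely \cref{eqn:diff-ham}. It is worth noting that the ``$+1$'' arises from a different tuple than in the MD argument: there the gap came from the \emph{growing magnitude} of the displacement of a violated clause tuple, whereas here the magnitudes are irrelevant and the gap comes solely from the one additional displaced zero tuple.

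The step deserving the most care — and the main potential obstacle — is verifying the $k$-invariance of the \emph{indicator} contributions of the common tuples, in particular the satisfied clause tuples. One must check that the restricted relative order $\rho$ of the satisfied clauses and the full-matrix index order of the clauses genuinely do not depend on the number of appended zero tuples (they depend only on the clause tuples, the column set $C$, and $r$), so that the displacement $s+\rho(t)-c_t$ is literally identical in $\Mat_1$ and $\Mat_2$; one must likewise confirm that the tie-breaking places the violated clauses ahead of the original zero tuples in both restrictions, since their indices are at most $\ell$. Once this invariance is established for all three ranking functions $\rankbyscore{\scoremax}{\asc}$, $\ranksum$, and $\ranklex$ exactly as in \cref{lem:md-hard-matrices}, the isolation of the whole difference to the lone extra zero tuple is immediate, and the lemma follows.
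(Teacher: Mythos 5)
Your proposal is correct and follows essentially the same route as the paper: both reuse the matrices $\Mat_1,\Mat_2$ from \cref{lem:md-hard-matrices} unchanged and observe that every tuple common to the two matrices contributes the same $0/1$ indicator to the Hamming count, so the difference is carried entirely by the one extra all-zero tuple of $\Mat_2$, which is displaced exactly when some clause is violated. Your explicit rank bookkeeping (displacements $s$, $s+\rho(t)-c_t$, and $\ell+k+i_j-j$) is a more detailed rendering of the same cancellation the paper states in words.
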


\begin{proof}
    We show that the matrices $\Mat_1$ and $\Mat_2$ defined in \cref{lem:md-hard-matrices} also fulfill \cref{eqn:diff-ham}. Let us adopt the notation from the proof of \cref{lem:md-hard-matrices}. If $\assign{C} \models \phi$, then we already showed that the all-zero tuples have the same rank in $r(\Mat_i)$ and $r(\restrict{\Mat_i}{C})$ and the other tuples have to same displacement for $\Mat_1$ and $\Mat_2$, so in particular the number of displacements is equal and \cref{eqn:diff-ham} holds for that case.

    If $\assign{C} \nvDash \phi$, then all the tuples corresponding to non-satisfied clauses and the all-zero tuples change their positions while tuples corresponding to satisfied clauses change their position in $\Mat_1$ iff they change their position in $\Mat_2$. Since $\Mat_2$ has one more all-zeros tuple than $\Mat_1$, this shows the claim.
\end{proof}

\end{document}